\newcommand{\nwc}{\newcommand}
\newtheorem{theorem}{Theorem}[section]
\newtheorem{proposition}[theorem]{Proposition}
\newtheorem{lemma}[theorem]{Lemma}
\newtheorem{corollary}[theorem]{Corollary}
\newtheorem{claim}{Claim}[section]
\theoremstyle{definition}
\nwc{\todo}[1]{\smallskip \begin{center}{TODO: #1}\end{center} \smallskip}
\nwc{\toprove}{\qquad\mbox{YET TO PROVE}}
\nwc{\CHECK}[1]{[[CHECK #1]]}
\nwc{\hide}[1]{}
\nwc{\bit}{\begin{itemize}}
\nwc{\eit}{\end{itemize}}
\nwc{\qref}[1]{(\ref{#1})}
\nwc{\ip}[1]{\langle{#1}\rangle}
\nwc{\ipbig}[1]{\left\langle{#1}\right\rangle}
\nwc{\pmat}[1]{ \begin{pmatrix} #1 \end{pmatrix} }
\nwc{\pfrac}[2]{\left(\frac{#1}{#2}\right)}
\nwc{\bm}{\boldmath}
\nwc{\ti}{\tilde}
\nwc{\half}{{\textstyle \frac12}}
\nwc{\sfrac}[2]{{\textstyle \frac{#1}{#2}}}
\nwc{\intR}{\int_\R}
\nwc{\inv}{^{-1}}
\nwc{\R}{{\mathbb{R}}}
\nwc{\C}{{\mathbb{C}}}
\nwc{\D}{\partial}
\nwc{\dt}{\partial_t}
\nwc{\dx}{\partial_x}
\nwc{\dc}{\partial_c}
\nwc{\di}{\partial\inv}
\nwc{\dhx}{\partial_{\hat x}}
\nwc{\one}{{\mathbbm{1}}}
\nwc{\Bo}{\mathop{\rm Bo}\nolimits}
\nwc{\re}{\mathop{\rm Re}\nolimits}
\nwc{\im}{\mathop{\rm Im}\nolimits}
\newcommand{\cal}{\mathcal}
\nwc{\Ac}{{\cal A}}
\nwc{\SA}{{\cal A}}
\nwc{\BB}{{\cal B}}
\nwc{\CC}{{\cal C}}
\nwc{\EE}{{\cal E}}
\nwc{\FF}{{\cal F}}
\nwc{\GG}{{\mathcal G}}
\nwc{\HH}{{\cal H}}
\nwc{\II}{{\cal I}}
\nwc{\JJ}{{\cal J}}
\nwc{\KK}{{\cal K}}
\nwc{\LL}{{\cal L}}
\nwc{\MM}{{\cal M}}
\nwc{\NN}{{\cal N}}
\nwc{\PP}{{\cal P}}
\nwc{\QQ}{{\cal Q}}
\nwc{\RR}{{\cal R}}
\nwc{\SC}{{\cal S}}
\nwc{\TT}{{\cal T}}
\nwc{\UU}{{\cal U}}
\nwc{\VV}{{\cal V}}
\nwc{\WW}{{\cal W}}
\nwc{\XX}{{\cal X}}
\nwc{\YY}{{\cal Y}}
\nwc{\ZZ}{{\cal Z}}
\nwc{\ep}{{\epsilon}}
\nwc{\aw}{\alpha}
\nwc{\lam}{\lambda}
\nwc{\bphi}{\varphi}
\nwc{\bondi}{\hat\tau} 
\nwc{\sw}{\Theta}
\newcommand\hQ{\hat{\mathcal{Q}}}
\newcommand\hS{\hat{\mathcal{S}}}
\nwc{\hap}{{\hat\alpha}}
\nwc{\hxi}{\hat\xi}
\nwc{\hk}{\hat k}
\nwc{\hbet}{{\hat\alpha_\ep}}
\nwc{\hb}{{\hat\beta}}
\nwc{\hep}{\hat\ep}
\nwc{\mo}{m_0}
\nwc{\mep}{m_\ep}
\nwc{\thep}{\theta_\ep}
\nwc{\thez}{\theta_0}
\nwc{\isc}{{\mathcal I}_\ep}
\nwc{\Omst}{\Omega_*}
\nwc{\Bi}{B^{-1}}
\nwc{\thc}{\Theta_c}
\newcommand\ap{\alpha}
\newcommand\apc{\alpha_c}
\newcommand\V{{\mathcal{V}}}
\nwc{\calM}{{\mathcal{M}}}
\nwc{\calN}{{\mathcal{N}}}
\nwc{\calR}{{\mathcal{R}}}
\nwc{\calS}{{\mathcal{S}}}
\newcommand\CH{{\mathcal{H}}}
\newcommand\CN{{\mathcal{N}}}
\newcommand\CJ{{\mathcal{J}}}
\newcommand\CT{{\mathcal{T}}}
\newcommand\Q{{\mathcal{Q}}}
\newcommand\N{\mathbb{N}}
\newcommand\z{\zeta}
\newcommand\F{{\mathcal{F}}}
\newcommand\E{{\mathcal{E}}}
\nwc{\BBo}{L+c\D} 
\nwc{\slap}{_{L^2_\ap}}
\nwc{\lalplalp}{L^2_\alpha\times L^2_\alpha}
\nwc{\Lc}{{\mathcal{L}}_c}
\nwc{\Lct}{{\mathcal{L}}_{c(t)}}
\nwc{\Lco}{{\mathcal{L}}_{c_0}}
\nwc{\ww}{\WW} 
\nwc{\wep}{\WW_\ep}
\theoremstyle{remark}
\newtheorem{remark}{Remark}[section]
\newcommand{\la}{\langle}
\newcommand{\ra}{\rangle}
\newcommand{\pd}{\partial}
\newcommand{\bM}{\mathbb{M}}
\nwc{\bmc}{\bM_c}
\nwc{\bmx}{\bM_x}
\nwc{\bmv}{\bM_v}
\nwc{\bmone}{\bM_1}
\nwc{\bmtwo}{\bM_2}
\nwc{\bmtot}{\bM_{\rm tot}}
\nwc{\kone}{k_1}
\nwc{\ktwo}{k_2}
\DeclareMathOperator{\sech}{sech}
\DeclareMathOperator{\spann}{span}
\renewcommand{\Re}{\re} 
\renewcommand{\Im}{\im} 
\nwc{\zone}{\zeta_{1,c}}
\nwc{\ztwo}{\zeta_{2,c}}
\nwc{\zzone}{\eta_{1,c}}
\nwc{\zztwo}{\eta_{2,c}}
\nwc{\zaone}{\zeta_{1,c}^*}
\nwc{\zatwo}{\zeta_{2,c}^*}
\nwc{\vonew}{\|\ti v_1(t)\|_{W_{\vap}}}
\nwc{\vonews}{\|\ti v_1(s)\|_{W_{\vap}}}
\nwc{\wnorm}[1]{\|#1\|_{W_{\vap}}}
\nwc{\gwt}{\omega}
\nwc{\cp}{{c_\star}} 
\nwc{\xp}{{x_\star}} 
\nwc{\xot}{x_0(t)}
\nwc{\vap}{\nu}
\nwc{\apo}{{\ap_1}}
\nwc{\tcha}{\tilde{\chi}}
\nwc{\cha}{\chi_\vap}
\nwc{\rp}{\rho}
\numberwithin{equation}{section}
\begin{document}
\title[Stability of Benney-Luke solitary waves]
{Asymptotic stability of solitary waves in the Benney-Luke model of
water waves }
\author{Tetsu Mizumachi}
\address{
Faculty of Mathematics, Kyushu University, Fukuoka 812-8581, Japan}

\author{Robert L. Pego}
\address{Department of Mathematical Sciences, Carnegie Mellon University,
Pittsburgh, PA, USA}

\author{ Jos\'e Ra\'ul Quintero}
\address{Departamento  Matem\'aticas, Universidad del Valle, Colombia}

\begin{abstract}
We study asymptotic stability of solitary wave solutions
in the one-dimensional Benney-Luke equation, a formally
valid approximation for describing two-way water wave propagation.
For this equation, as for the full water wave problem,
the classic variational method for proving
orbital stability of solitary waves fails dramatically due to 
the fact that the second variation of the energy-momentum functional
is infinitely indefinite.
We establish nonlinear stability in
energy norm under the spectral stability hypothesis
that the linearization admits no non-zero eigenvalues of
non-negative real part. 
We then verify this hypothesis for waves of small energy.
\end{abstract}
\subjclass[2010]{37K40, 35Q35, 35B35, 76B15}
\maketitle





\vfil
\pagebreak

\setcounter{tocdepth}{1}
\tableofcontents

\section{{Introduction}}
In this paper we study the stability  of solitary
waves for the nonlinear dispersive wave equation
\begin{equation}
\dt^2\bphi
-\dx^2\bphi
+a\dx^4\bphi
-b\dx^2\dt^2\bphi
+(\dt\bphi)(\dx^2\bphi)+2(\dx\bphi)(\dx\dt\bphi)=0.
\label{e.bl}
\end{equation}
This equation is a formally valid approximation for describing 
small-amplitude, long water waves in water of finite depth.
It is the one-dimensional version of an equation originally derived 
by Benney and Luke \cite{BL64} as an isotropic model 
for three-dimensional water waves.  (Also see \cite{PQ99}.) 
The parameters $a, \ b >0$ are  such that 
$a-b = \bondi -\frac13$, where $\bondi$ is the inverse Bond number.
We will assume $0<a<b$ throughout this paper,
which corresponds to small or zero surface tension ($\bondi<\frac13$), 
and will study solitary waves that travel with any speed 
$c$ satisfying $c^2>1$. 
These waves are captured in the small-amplitude, long-wave regime 
of formal validity for speed with $c^2$ near 1. 
We remark that \qref{e.bl} is an approximation formally valid 
for describing {\em two-way} water wave propagation, in contrast to
one-way equations such as the KdV, BBM, or KP equations.  

There are a considerable number of previous works on 
model water wave equations and stability of solitary 
waves---Especially see \cite{BCS02,BCL05,CCD10}
and references therein regarding a large family of models 
of Boussinesq type.
Our interest in the Benney-Luke equation \qref{e.bl} is
motivated by a number of features that this equation shares with the 
full water wave equations with zero surface tension, for which the 
problem of solitary wave stability remains open.  

In particular, many works on model equations employ the variational
method originated by Benjamin \cite{Ben72} and Bona \cite{Bo75} for
the KdV equation and developed into a powerful abstract tool by 
Grillakis, Shatah and Strauss \cite{GSS1,GSS2}.
In particular, for \qref{e.bl} in the complementary case corresponding to strong
surface tension ($a>b$ and $0\le c^2<{a/b}$), Quintero
has used this method to establish orbital stability of solitary waves,
 in both one and two space dimensions \cite{Q03,Q05}.
Techniques related to the GSS variational method have been developed 
to obtain many further results as well.
For the full water wave equations with strong surface tension
($\bondi>\frac13$), Mielke has obtained a
conditional orbital stability result for small solitary waves of
depression~\cite{Mi02}.  
For generalized KdV equations, a variety of results concerning asymptotic
stability with small perturbations of finite energy, 
even for multiple pulses,
have been obtained by Martel and Merle and others, 
using various methods involving 
variational, virial, and nonlinear Liouville properties,
e.g., see \cite{MM00,MM01,MMT02,MM05,MM11,Miz04,El05b}. 

For the Benney-Luke equation \qref{e.bl} in the present case---as for the 
full water wave equations with zero surface tension
\cite{BS89}---the variational approach fails dramatically, 
due to the infinite indefiniteness of the energy-momentum functional
whose critical points are the solitary-wave profiles. 
(See Appendix \ref{ap-ham} below for details.)
This also happens for the line soliton solution of the KP-II equation
with periodically transverse perturbations, which was recently 
studied by Mizumachi and Tzvetkov \cite{MT11} using methods based on
Miura transformations.  But the Benney-Luke and water wave equations 
are nonintegrable, and such transformation techniques appear to be 
unavailable.

A salient feature of solitary water waves that is shared 
by solitary waves of \qref{e.bl} in the present case with
$0<a<b$ and $c^2>1$ is that 
they travel at a speed greater than the maximum group velocity of linear waves. 
This motivates us to study the scattering of localized perturbations
by using norms with spatial weights that decay to zero in the 
direction behind solitary waves.
This approach has been used successfully to obtain
asymptotic stability results for the 
KdV equation \cite{PW94}, the BBM equation \cite{MW96}, and 
Fermi-Pasta-Ulam lattice equations \cite{FP2,FP3,FP4},
for pertubations small both in energy and in weighted norm with
exponential weights $e^{ax}$.
And, in a recent analysis for the Toda lattice \cite{Miz09}, 
Mizumachi established asymptotic
stability of solitary waves for arbitrary perturbations of small energy, 
by combining stability estimates with exponential weights, 
as used by Friesecke and Pego,
with dispersive propagation estimates (virial estimates) 
related to techniques of Martel and Merle.

In the present paper, we will build on Mizumachi's approach to establish
asymptotic stability results for Benney-Luke solitary waves of small amplitude.
The analysis comes in two main parts, corresponding to linear and
nonlinear analysis.

In the first part, we show that spectral stability---the absence 
of nonzero eigenvalues with non-negative real part---implies 
linear stability with exponential decay rate in exponentially weighted norm, 
for perturbations orthogonal to the adjoint neutral-mode space
generated by variations of phase and wave amplitude.
Also, we prove that small solitary waves are spectrally stable.
This is done by a suitable comparison of a reduced resolvent
operator with the corresponding one for KdV solitons.
The KdV limit is used to control the reduced resolvent on long length 
scales at long times, and this is combined with additional estimates
to obtain control on all length scales and all time scales. 
This happens in a manner similar to the spectral stability analysis 
of small solitary waves in FPU lattices \cite{FP4} and water waves \cite{PS2}.

In the second part, we prove that nonlinear stability follows 
from spectral stability.  Perturbed solitary waves are studied in terms of 
i) time-modulated speed and shift parameters, 
ii) a solution freely propagated from the intial wave perturbation,
and iii) the exponentially localized interaction of the two.
Key to this analysis is a linear decay estimate
based on a discrete-time {\em recentering} technique
reminiscent of the {renormalization} method developed by Promislow
\cite{Pr02} for studying pulse dynamics in parametrically driven
nonlinear Schr\"odinger equations. 
For the present problem, discrete-time recentering is used to avoid a
problem of loss of derivatives in linear stability estimates that may
occur in frames translating with time-varying speed.

\section{Statement of main results} \label{s.main}

\subsection{Equations of motion}
In terms of the notation
\[
 q=\dx\bphi, \quad r=\dt\bphi, 
 \qquad A=I-a\dx^2, \quad B=I-b\dx^2, 
\]
the Benney-Luke equation takes the form of the system
\begin{equation}\label{e.qr}
\dt q -\dx r=0,\qquad B \dt r -A\dx q + r\dx q+2q \dx r = 0.
\end{equation}
We write this system in the abstract form
\begin{equation}\label{eq:bousys}
\dt u = Lu + f(u) ,
\end{equation}
with 
\begin{equation} \label{d.Lf}
u=\begin{pmatrix} q \\ r\end{pmatrix},
\quad
 L=\begin{pmatrix} 0 & \pd_x \\ B^{-1}A\pd_x & 0 \end{pmatrix},
\quad 
 f(u)=\begin{pmatrix}0 \\ -B^{-1}(r\pd_xq+2q\pd_xr)\end{pmatrix}.
\end{equation}
The energy
\begin{equation}
E(u)=
\frac12 \int_{\R}
\left(q^2+r^2+a(\D_xq)^2+b(\D_xr)^2\right)dx = 
\frac12 \int_\R (qAq+rBr)\,dx
\end{equation}
is formally conserved in time for solutions to \eqref{eq:bousys}:
\begin{equation}
  \label{eq:energy}
  E(u(t))=E(u_0).
\end{equation}
By standard arguments, the Cauchy problem 
for \qref{e.qr} is globally well-posed for initial data in the Sobolev
space $H^s(\R)$ for any $s\ge1$, and \qref{eq:energy} holds for all $t$. 

\subsection{Solitary waves}
The Benney-Luke system \qref{e.qr} admits a two-parameter family
of solitary waves 
\[
(q,r)=(q_c(x-ct-x_0),r_c(x-ct-x_0)), 
\qquad c^2>1, \quad x_0\in\R,  
\]
whose profiles must satisfy
\begin{equation}\label{e.sol0}
-c q_c'-r_c' = 0, \qquad
-cB r_c' - A q_c' + r_cq_c'+2q_cr_c' = 0,
\end{equation} whence
\begin{equation}\label{e.sol}
r_c=-c q_c, \qquad (bc^2-a)q_c''-(c^2-1)q_c+\frac{3c}2 q_c^2 = 0.
\end{equation}
Explicitly,
\begin{equation}\label{e.sol2}
q_c(x)= \frac{c^2-1}{c}\sech^2 \left(\frac12 \apc x\right),
\qquad \apc =\sqrt{\frac{c^2-1}{bc^2-a}}.
\end{equation}

\subsection{Spectral and linear stability}
We linearize the Benney-Luke system \qref{e.qr} 
about a solitary wave $(q,r)=(q_c,r_c)$ with $c>1$,
after changing to a coordinate frame moving with speed $c$.
The resulting linearized system for the perturbation
$z=(q_1,r_1)$ takes the following form, with $\D=\D_x$:
\begin{equation}\label{e.lin1}
\D_t z = \Lc z, \qquad
\Lc = 
\begin{pmatrix}
c\D & \D \cr -B\inv (-A\D + r_c\D+2r_c') & 
c\D-B\inv(2q_c\D+q_c') 
\end{pmatrix}.
\end{equation}
We study equation \qref{e.lin1} in exponentially weighted 
function spaces, writing 
\[
L^p_\alpha=\{g\mid e^{\ap x}g\in L^p(\R)\},
\qquad H^s_\alpha=\{g\mid e^{\ap x}g\in H^s(\R)\},
\]
with norms 
\begin{equation*}
 \|g\|_{L^p_\alpha}=\left(\int_\R |e^{\alpha x}g(x)|^p\,dx\right)^{1/p}\,,
\quad  
\|g\|_{H^s_\alpha}=\left(\int_\R 
(1+|k|^2)^s |\hat g(k+i\ap)|^2\,\frac{dk}{2\pi}
\right)^{1/2}.
\end{equation*}
In these definitions, $g$ may be scalar or vector valued according
to context.
We normalize the Fourier transform according to the definition
\[
\hat g(k) = \int_\R g(x)e^{-ikx}\,dx.
\]
Note that for any $s\ge0$, 
the solitary wave profile components $q_c$, $r_c\in H^s_\ap$ 
if and only if $|\ap|<\apc$.
In terms of these spaces, the following basic facts will be established
in section \ref{s.Lc}.
\begin{lemma} \label{l.lin1}
Assume $0<a<b$ and $c>1$.  Fix $\alpha$ with $0<\ap<\apc$,
and consider the operator $\Lc$ from \qref{e.lin1} 
in the space $L^2_\ap$ with domain 
$D(\Lc)=H^1_\ap$.  Then
\begin{itemize}
\item[(i)]  $\Lc$ is a compact perturbation of $\BBo$.
\item[(ii)] $\Lc$ 
is the generator of a $C^0$ semigroup in $L^2_\ap$.
\item[(iii)] The essential spectrum of $\Lc$ is contained strictly 
in the left half-plane $\Re \lambda<0$.
\item[(iv)]  The value $\lambda=0$ is an eigenvalue of $\Lc$ with 
multiplicity 2. Specifically, 
\begin{equation} \label{d.tan1}
\z_{1,c} = \begin{pmatrix}\D_x q_c\cr \D_x r_c\end{pmatrix}, 
\qquad
\z_{2,c} = -\begin{pmatrix}\D_c q_c\cr \D_c r_c\end{pmatrix},
\end{equation}
satisfy $\Lc\z_{1,c}=0$, $\Lc\z_{2,c}=\z_{1,c}$.
\end{itemize}
\end{lemma}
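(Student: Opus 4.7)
Decompose $\Lc = (L+c\D) + V$, where the perturbation
\[
V = -\begin{pmatrix} 0 & 0 \\ B\inv(r_c\D + 2r_c') & B\inv(2q_c\D + q_c')\end{pmatrix}
\]
collects the non-constant-coefficient terms. The plan is to treat (i)--(iii) as a spectral perturbation problem with $V$ compact on top of the constant-coefficient operator $\BBo = L+c\D$, and to obtain (iv) by differentiating the traveling-wave profile equation. For (i), first integrate by parts via $B\inv(r_c\D u) = B\inv\D(r_c u) - B\inv(r_c' u)$, making $V$ bounded on $L^2_\ap$. Conjugating by $e^{\ap x}$ identifies $L^2_\ap$ isometrically with $L^2$ and turns $V$ into a sum of operators of the form $B\inv\D\circ M_\psi$ and $B\inv\circ M_\psi$, where $M_\psi$ denotes multiplication by $e^{\ap x}$ times one of $q_c, r_c, q_c', r_c'$; this product is Schwartz because $\ap < \apc$, so its rapid two-sided decay combined with the exponentially decaying convolution kernel of $B\inv\D$ and Rellich's theorem yields compactness on $L^2(\R)$.

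For (ii) and (iii), under the same conjugation $\BBo$ becomes the constant-coefficient Fourier multiplier whose $2\times 2$ symbol is obtained by $\D \mapsto ik-\ap$; its eigenvalues are
\[
\lambda_\pm(k) = (ik-\ap)\left(c \pm \sqrt{\frac{1-a(ik-\ap)^2}{1-b(ik-\ap)^2}}\,\right), \qquad k\in\R.
\]
The key step, which I expect to be the main calculational obstacle, is to show $\sup_{k\in\R}\Re\lam_\pm(k) \le -\de$ for some $\de>0$ whenever $0<\ap<\apc$. At $k=0$, $\Re\lam_\pm(0) = -\ap\bigl(c \pm \sqrt{(1-a\ap^2)/(1-b\ap^2)}\bigr)$, and the formula for $\apc$ in \qref{e.sol2} together with monotonicity of the radicand in $\ap^2$ shows $\sqrt{(1-a\ap^2)/(1-b\ap^2)}<c$ exactly when $\ap<\apc$; as $|k|\to\infty$ the radical tends to $\sqrt{a/b}<1<c$, so $\Re\lam_\pm(k) \to -\ap(c\pm\sqrt{a/b})<0$; intermediate $k$ are handled by direct estimation using $\Re(ik-\ap)^2 = \ap^2 - k^2$ and $\Im(ik-\ap)^2 = -2k\ap$. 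Once this spectral bound is in place, $\BBo$ generates an exponentially decaying $C^0$ semigroup on $L^2_\ap$; bounded perturbation by $V$ then gives (ii), and Weyl's invariance of the essential spectrum under compact perturbations gives (iii).

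For (iv), rewrite \qref{e.sol0} in the abstract form $-c u_c' = L u_c + f(u_c)$ and differentiate. Differentiation in $x$ produces $\Lc\z_{1,c} = 0$, and differentiation in $c$ produces $\Lc(-\D_c u_c) = u_c'$, i.e.\ $\Lc\z_{2,c} = \z_{1,c}$; both vectors lie in $L^2_\ap$ because $\ap<\apc$. Since $V$ is compact and $0$ lies in the resolvent set of $\BBo$ by (iii), analytic Fredholm theory makes $\lambda=0$ an isolated eigenvalue of $\Lc$ of finite algebraic multiplicity. To show that multiplicity is exactly two, I would construct adjoint generalized eigenvectors $\phi_1,\phi_2$ by applying the same differentiation recipe to the adjoint traveling-wave problem, and verify that the biorthogonal pairing matrix $\bigl(\ip{\phi_i,\z_{j,c}}\bigr)_{i,j=1,2}$ is invertible, which precludes any further extension of the Jordan chain beyond $\z_{2,c}$.
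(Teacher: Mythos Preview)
Your approach for (i)--(iii) is essentially the paper's: decompose $\Lc = (L+c\D)+V$, show $V$ is compact, compute the symbol curves $\lambda_\pm(k)$ and verify they stay in the open left half-plane, then invoke bounded perturbation for the semigroup and Weyl/analytic Fredholm for the essential spectrum. One small slip: under conjugation by $e^{\ap x}$, multiplication operators are \emph{invariant}, so the multipliers remain $q_c,r_c,q_c',r_c'$ themselves, not $e^{\ap x}$ times these. This is harmless since $q_c$ is already Schwartz, but worth correcting.

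For (iv), the differentiation-of-profile argument and the adjoint-pairing obstruction are exactly what the paper does, but your plan has a genuine gap. Invertibility of the biorthogonal pairing matrix only blocks extension of the Jordan chain (i.e.\ rules out $\z_3$ with $\Lc\z_3=\z_{2,c}$); it does \emph{not} rule out a second independent eigenvector in $\ker\Lc$. The paper closes this by a separate ODE argument: after applying $B$ to the second component, $\Lc z=0$ is a fourth-order system whose characteristic exponents at $x\to+\infty$ are $\pm\apc$ and a double root at $0$, so any $L^2_\ap$ solution with $0<\ap<\apc$ must decay like $e^{-\apc x}$ and is therefore a multiple of $\z_{1,c}$. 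You should add this step.

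Also, the adjoint generalized eigenvectors do not arise from ``the same differentiation recipe.'' In the paper the adjoint null vector is $\eta_{1,c}=(Aq_c,Br_c)=E'(u_c)$, and $\eta_{2,c}$ is found by directly solving $\Lc^*\eta_{2,c}=\eta_{1,c}$ as an ODE (the second component turns out to be $-cB\D^{-1}\D_c q_c$). The crucial nonvanishing pairing is $\langle\z_{2,c},\eta_{1,c}\rangle=-\frac{d}{dc}E(u_c)$, which is checked to be nonzero by explicit computation from the profile formula.
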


We let $P_c$ denote the spectral projection onto 
$\spann\{\zone,\ztwo\}$,
the generalized eigenspace associated with the eigenvalue 0 for $\Lc$.
Our main results in Part I 
(concerning asymptotic linear stability) are as follows.
\begin{theorem}[Spectral stability implies linear stability]
\label{t.splin} 
Fix $c>1$ and $\alpha$ with $0<\ap<\apc$.
Assume that $\Lc$ has no nonzero eigenvalue $\lambda$
satisfying $\Re\lambda\ge 0$. 
Then there exist positive constants $K$ and $\beta$ such that
for all $z\in L^2_\ap$ and all $t\ge0$, 
\begin{equation}\label{i.semi1}
\|e^{\Lc t}(I-P_c) z\|\slap \le Ke^{-\beta t}\|z\|\slap.
\end{equation}
\end{theorem}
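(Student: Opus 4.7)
The plan is to combine the spectral information from Lemma \ref{l.lin1} with the spectral stability hypothesis to isolate $\lambda=0$ as the only spectrum of $\Lc$ in a half-plane $\{\Re\lambda\ge-\beta_0\}$, and then to apply the Gearhart--Pr\"uss theorem to the restriction of $\Lc$ to the $\Lc$-invariant complement $(I-P_c)L^2_\ap$.

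First I would show that the spectrum of $\Lc$ near the imaginary axis is finite and discrete. By Lemma \ref{l.lin1}(i) and (iii), $\Lc=\BBo+K_c$ with $K_c$ relatively compact and $\sigma_{\mathrm{ess}}(\Lc)\subset\{\Re\lambda\le-\ep_0\}$ for some $\ep_0>0$. Therefore the spectrum in $\{\Re\lambda > -\ep_0\}$ consists of isolated eigenvalues of finite algebraic multiplicity, with no accumulation point off the essential spectrum. Combined with the hypothesis that $\Lc$ has no nonzero eigenvalue with $\Re\lambda\ge 0$, and with Lemma \ref{l.lin1}(iv), I may fix $\beta_0\in(0,\ep_0)$ so that
\begin{equation}\label{i.spec}
\sigma(\Lc)\cap\{\lambda\in\C:\Re\lambda\ge-\beta_0\}=\{0\},
\end{equation}
with $0$ an isolated eigenvalue of algebraic multiplicity exactly $2$, whose Riesz projection is $P_c$. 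Writing $X_2:=(I-P_c)L^2_\ap$, the restriction $\Lc_2:=\Lc|_{X_2}$ generates the $C^0$ semigroup $e^{\Lc t}(I-P_c)$ on the Hilbert space $X_2$, and by the spectral decomposition theorem $\sigma(\Lc_2)=\sigma(\Lc)\setminus\{0\}\subset\{\Re\lambda<-\beta_0\}$.

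By the Gearhart--Pr\"uss theorem on the Hilbert space $X_2$, the estimate \qref{i.semi1} reduces to establishing the uniform resolvent bound
\begin{equation}\label{i.rbd}
\sup_{\tau\in\R}\left\|\bigl(\Lc+(\beta-i\tau)I\bigr)\inv(I-P_c)\right\|_{L^2_\ap\to L^2_\ap}<\infty
\end{equation}
for some $\beta\in(0,\beta_0)$. For bounded $\tau$ this follows from continuity of the resolvent on the spectrum-free line $\Re\lambda=-\beta$. For large $|\tau|$, I would analyze $\BBo=L+c\D$ in Fourier: conjugation by $e^{\ap x}$ is an isometry of $L^2_\ap$ onto $L^2$ that replaces the wavenumber $k$ by $k+i\ap$, and the two eigenvalue branches of the resulting $2\times 2$ symbol lie in the left half-plane with real parts bounded above by $-\ep_0$ and imaginary parts growing linearly as $|k|\to\infty$. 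A direct symbol estimate then yields $\|(\BBo-\lambda)\inv\|_{L^2_\ap\to L^2_\ap}=O(|\tau|\inv)$ on the line $\Re\lambda=-\beta$. Since $K_c=\Lc-\BBo$ has the explicit form
\begin{equation*}
K_c=\begin{pmatrix}0&0\\-B\inv(r_c\D+2r_c')&-B\inv(2q_c\D+q_c')\end{pmatrix},
\end{equation*}
i.e., $B\inv$ composed with a first-order operator having Schwartz coefficients, the composition $K_c(\BBo-\lambda)\inv$ is small in operator norm for large $|\tau|$, and a Neumann series then yields \qref{i.rbd}.

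The principal obstacle is making this last step quantitative. Abstract compactness of $K_c$ gives $\|K_c(\BBo-\lambda)\inv\|\to 0$ only on bounded sets of $\lambda$; to get vanishing along the entire line $\Re\lambda=-\beta$ as $|\tau|\to\infty$, one must combine the $O(|\tau|\inv)$ decay of the unperturbed resolvent with the smoothing supplied by $B\inv$. The exponential weight $e^{\ap x}$ prevents any appeal to normality, so all bounds must be read from the pseudo-differential symbol on the shifted contour $\R+i\ap$; in particular, one must verify that the poles of this symbol stay bounded away from $\{\Re\lambda=-\beta\}$ uniformly in $k$. This symbolic analysis on the shifted contour is the crux of the argument.
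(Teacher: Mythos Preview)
Your overall architecture matches the paper's: apply Gearhart--Pr\"uss on the invariant complement $(I-P_c)L^2_\ap$, use Lemma~\ref{l.lin1} to reduce to a uniform resolvent bound for large $|\lambda|$, and handle that by a perturbation argument off the constant-coefficient operator $\BBo$. The gap is in your execution of the last step.

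The claim that $\|(\BBo-\lambda)^{-1}\|_{L^2_\ap\to L^2_\ap}=O(|\tau|^{-1})$ along $\Re\lambda=-\beta$ is false. After diagonalizing via~\qref{e.sim}, the resolvent norm equals $\sup_{k\in\R}\max_\pm|\lambda-\hQ_\pm(k+i\ap)|^{-1}$. By Corollary~\ref{hQests} the curves $k\mapsto\hQ_\pm(k+i\ap)$ lie in a fixed vertical strip in the left half-plane, but their \emph{imaginary parts} sweep through all of $\R$ (indeed $\Im\hQ_\pm(k+i\ap)\sim k(c\pm\sqrt{a/b})$ as $|k|\to\infty$). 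Hence for any $\lambda=-\beta+i\tau$ one can choose $k$ so that $\Im\hQ_\pm(k+i\ap)=\tau$, and then $|\lambda-\hQ_\pm(k+i\ap)|$ is bounded by a constant independent of $\tau$. The unperturbed resolvent is therefore only $O(1)$ in operator norm, and no amount of smoothing from $B^{-1}$ will manufacture decay of $K_c(\BBo-\lambda)^{-1}$ from an $O(|\tau|^{-1})$ bound that does not exist.

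The paper closes this gap differently, and in fact your dismissal of the compactness route is premature. What \emph{is} true is that the Fourier symbol $(\lambda-\hQ_\pm(k+i\ap))^{-1}$ tends to $0$ for each fixed $k$ as $|\lambda|\to\infty$ with $\Re\lambda\ge0$, and is uniformly bounded; by dominated convergence this gives $(\lambda-\Q_\pm)^{-1}\to0$ in the \emph{strong} operator topology on $L^2_\ap$. Since these are Fourier multipliers, the adjoints converge strongly to $0$ as well. Compactness of the transformed perturbation $R_q\pm R_r$ then upgrades strong convergence to norm convergence of the product $(R_q\pm R_r)(\lambda-\Q_\pm)^{-1}\to0$, which is exactly \qref{lg.bd1} and yields $\ww(\lambda)\to I$. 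So the missing idea is not a sharper symbol bound on the free resolvent, but the combination of pointwise (strong) convergence with compactness of the localized perturbation.
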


\begin{theorem}[Spectral stability for small waves]
\label{t.smlin}
Fix $\hap\in(0,(b-a)^{-1/2})$. 
Then there exists $\ep_0>0$ such that whenever $0<\ep<\ep_0$
and $c=1+\frac12\ep^2$,
then $\Lc$ has no nonzero eigenvalue $\lambda$ 
satisfying $\Re\lambda\ge0$, 
in the space $L^2_{\ep\hap}$. 
\end{theorem}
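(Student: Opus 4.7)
The plan is to pass to KdV long-wave variables, in which the Benney--Luke solitary wave becomes, at leading order, a KdV soliton, and then to transfer the known spectral stability of KdV solitons (Pego--Weinstein) to $\Lc$ via a comparison of reduced resolvents, supplemented by a direct Neumann-series estimate in the large-spectral-parameter regime.

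First, I would set $X = \ep x$ and $\mu = \ep^{-3}\lambda$. From \qref{e.sol2}, with $c = 1 + \tfrac12\ep^2$ one has $\apc = \ep(b-a)^{-1/2}(1+O(\ep^2))$ and $q_c(x) = \ep^2 Q_\ep(\ep x)$, where $Q_\ep \to Q_0(X) := \sech^2(X/(2\sqrt{b-a}))$ uniformly; moreover $e^{\ep\hap x} = e^{\hap X}$, so $L^2_{\ep\hap}$ in $x$ is isometric to $L^2_{\hap}$ in $X$, with $\hap\in(0,(b-a)^{-1/2})$ fixed. The identity $r_c = -cq_c$ motivates writing $r_1 = -q_1 + \ep^2 s_1$ in $\lambda z = \Lc z$ and projecting onto an effective scalar sector; after rescaling, this takes the form
\begin{equation*}
\mu\, w = \LL_{\mathrm{KdV}}\,w + \ep^2 R_\ep(\mu)\,w,
\end{equation*}
where $\LL_{\mathrm{KdV}}$ is the operator linearizing KdV about $Q_0$ in a co-moving frame, and $R_\ep(\mu)$ depends analytically on $\mu$ and is bounded on $L^2_{\hap}$, uniformly for $\ep\to 0$ and for $\mu$ in compact subsets of the closed right half-plane.

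Second, I would split the spectral parameter by size. By Lemma \ref{l.lin1}(iii), any $\lambda$ with $\Re\lambda\ge 0$ would be a discrete eigenvalue. For $\mu$ in a compact subset of $\{\Re\mu\ge 0\}\setminus\{0\}$, the Pego--Weinstein spectral stability of KdV gives a uniform bound on the reduced KdV resolvent $(\mu - \LL_{\mathrm{KdV}})^{-1}(I - P_{\mathrm{KdV}})$, where $P_{\mathrm{KdV}}$ projects onto the 2D generalized kernel of $\LL_{\mathrm{KdV}}$ (spanned by $Q_0'$ and a generalized eigenvector associated to speed variations). Since the rescaled generalized eigenfunctions $\zone,\ztwo$ from \qref{d.tan1} converge in $L^2_{\hap}$ to this KdV basis, an analytic-perturbation argument transfers the bound to the reduced resolvent of $\ep^{-3}\Lc$, excluding nonzero right-half-plane eigenvalues in this range for $\ep$ small. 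For large $|\mu|$---equivalently $|\lambda|$ sufficiently large compared with $\ep^2$---I would use the Neumann expansion $(\lambda - \Lc)^{-1} = (\lambda-\BBo)^{-1}\sum_k\bigl(K_c(\lambda-\BBo)^{-1}\bigr)^k$ with $K_c := \Lc - \BBo$, which is multiplication by $q_c, r_c = O(\ep^2)$ postcomposed with the smoothing operator $B^{-1}$; a direct Fourier bound gives $\|(\lambda-\BBo)^{-1}\|_{L^2_{\ep\hap}} = O(|\lambda|^{-1})$ away from the essential spectrum, so the series converges and excludes eigenvalues in this regime as well. Choosing the cutoff so that the two regimes overlap in a zone such as $|\mu|\in[M,\ep^{-1/2}]$ covers the full closed right half-plane.

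The principal obstacle is the uniform matching of the 2D generalized kernels as $\ep\to 0$: one needs projections onto $\spann\{\zone,\ztwo\}$ converging to $P_{\mathrm{KdV}}$ in a topology strong enough that the Pego--Weinstein bound on $(\mu - \LL_{\mathrm{KdV}})^{-1}(I - P_{\mathrm{KdV}})$ transfers with uniform constants, especially near $\mu = 0$ where both resolvents blow up and the Jordan-block structure must be respected. A related delicate point is the reduction of the vector-valued spectral problem to a scalar one: inverting $B$ on $L^2_{\ep\hap}$ and tracking the $O(\ep^2)$ dispersive and coupling corrections must be done so that $R_\ep(\mu)$ is genuinely uniformly bounded, not merely formally small, throughout the matching regime.
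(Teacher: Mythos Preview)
Your overall strategy---rescale to KdV variables and transfer KdV soliton spectral stability---is the paper's strategy as well. The execution differs, however, in a way that directly resolves what you identify as the principal obstacle.

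The paper does not attempt to match the Benney--Luke spectral projection to $P_{\mathrm{KdV}}$ with uniform resolvent bounds near $\mu=0$. Instead it first diagonalizes $L+c\partial$ via the Fourier multiplier $\calS=\sqrt{B^{-1}A}$, eliminates the fast ($\Q_+$) sector, and reduces the eigenvalue problem to a characteristic-value problem for a scalar analytic Fredholm bundle $\ww(\lambda)=I+(R_q+R_r)(\lambda-\Q_+)^{-1}+(R_q-R_r)(\lambda-\Q_-)^{-1}$. It then proves that the rescaled bundle $\wep(\Lambda)$ converges to the KdV bundle $\WW_0(\Lambda)$ \emph{uniformly on the entire half-plane} $\Re\Lambda\ge-\hb/2$, and applies the Gohberg--Sigal operator-valued Rouch\'e theorem on a fixed contour enclosing $\Lambda=0$. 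Since $\WW_0$ has total null multiplicity exactly $2$ in that region and $\ww$ already has multiplicity at least $2$ at $\lambda=0$ (inherited from $\zone,\ztwo$), the multiplicities must agree and no further characteristic values can exist. This topological count replaces your projection-matching step entirely; you never need uniform reduced-resolvent bounds through the Jordan block.

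Your large-$|\mu|$ argument has a genuine gap. The bound $\|(\lambda-\BBo)^{-1}\|_{L^2_{\ep\hap}}=O(|\lambda|^{-1})$ is false on the imaginary axis: for $\lambda=i\omega$ the curve $k\mapsto\hat\Q_-(k+i\ep\hap)$ always has a point with matching imaginary part, and there $|\Re\hat\Q_-|$ ranges only between $O(\ep^3)$ (low frequency) and $O(\ep)$ (high frequency). In particular, for $|\mu|$ of order one the relevant frequency is $k=O(\ep)$ and the resolvent of $\Q_-$ is $O(\ep^{-3})$, so with $\|K_c\|=O(\ep^2)$ the Neumann product $K_c(\lambda-\BBo)^{-1}$ is $O(\ep^{-1})$, not small. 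Thus your KdV regime ($|\mu|\le M$) and your Neumann regime do not overlap. The paper closes this gap not by a Neumann series but by proving the uniform convergence $\wep\to\WW_0$ via a frequency splitting (low frequencies $|\ep\hat\xi|\le 4\ep^p$ versus high frequencies $|\ep\hat k|\ge 2\ep^p$, with $p\in(\tfrac13,\tfrac12)$) together with a nontrivial commutator estimate showing $\|[B^{-1},q_c]\,\partial(\lambda-\Q_-)^{-1}\|\to0$; treating $K_c$ as a single bounded operator loses exactly the cancellation this commutator captures.
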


\subsection{Nonlinear stability}
Our basic nonlinear stability result establishes asymptotic orbital 
stability for the family of solitary waves, with respect to arbitrary 
small-energy perturbations.
\begin{theorem} [Spectral stability implies nonlinear stability]
  \label{thm:1}
Suppose $c_0>\sigma>1$ and $0<\ap<\frac12\ap_{c_0}$, 
and assume that in $L^2_\ap$,
$\Lco$ has no nonzero eigenvalue $\lambda$ satisfying $\Re\lambda\ge0$. 
Then there exists $\delta>0$ satisfying the following:
If $u_0(x)=u_{c_0}(x-x_0)+v_0(x)$ where $x_0\in\R$ and
$\|v_0\|_{H^1}<\delta$, then there exist $\cp>1$ and
a $C^1$-function $x(t)$ such that
\begin{gather}
\label{Phase2}
|\cp-c_0|+\sup_{t>0}|{x'}(t)-c_0|=O(\|v_0\|_{H^1}),
\\  
\label{Phase1}
\lim_{t\to\infty}{x'}(t)=\cp,
\\ 
\label{OS}
\sup_{t\ge0}\|u(t,\cdot)-u_{c_0}(\cdot-x(t))\|_{H^1}^2=O(\|v_0\|_{H^1}),
\\ 
\label{AS}
\lim_{t\to\infty}\|u(t,\cdot)-u_{\cp}(\cdot-x(t))\|_{H^1(x\ge \sigma t)}=0.
\end{gather}
\end{theorem}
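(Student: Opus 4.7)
The plan is to follow the renormalization/recentering strategy of Promislow and Mizumachi. I would first decompose the perturbed solution in the shifted form
\[
 u(t,x) = u_{c(t)}\bigl(x-x(t)\bigr) + v_1\bigl(t,x-x(t)\bigr) + v_2(t,x),
\]
where $c(t)$ and $x(t)$ are $C^1$ modulation parameters with $c(0)$ close to $c_0$, $v_2$ is a free (or globally-in-$H^1$) solution built from the initial perturbation $v_0$ to carry away the slowly-dispersing bulk of the data, and $v_1$ is the residual interaction piece. I fix $c(t), x(t)$ by imposing orthogonality of $v_1(t,\cdot)$ in the frame $y=x-x(t)$, in $L^2$, against adjoint modes $\zeta^*_{1,c(t)}, \zeta^*_{2,c(t)}$ spanning $\ker(\Lct^*)^2$. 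An implicit function argument furnishes such $c(t), x(t)$ as long as $\|v_1\|_{H^1}$ is small, together with the standard modulation bounds
\[
 \dot x(t)-c(t) = O(\|v_1(t)\|_{L^2_\ap}), \qquad
 \dot c(t) = O(\|v_1(t)\|_{L^2_\ap}^2).
\]

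In the frame $y=x-x(t)$ the remainder $v_1$ satisfies an equation of the form
\[
 \pd_t v_1 = \Lct v_1 + N(v_1, v_2, c, \dot c, \dot x-c),
\]
where $N$ gathers quadratic self-interactions, modulation source terms, and the coupling with $v_2$. To bypass the loss of derivatives inherent in the time-dependent coefficient $\Lct$, I would partition $[0,\infty)$ into short intervals $[t_n, t_{n+1}]$ and, on each, \emph{recenter}: freeze $c_n:=c(t_n)$, replace $\Lct$ by the fixed operator $\mathcal{L}_{c_n}$, and move the difference $(\Lct-\mathcal{L}_{c_n})v_1$ into the source. Theorem~\ref{t.splin} applied at $c_n$ then provides, for data orthogonal to $P_{c_n}$, exponential decay of $e^{\mathcal{L}_{c_n} t}(I-P_{c_n})$ on $L^2_\ap$, uniformly for $c_n$ near $c_0$. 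A Duhamel estimate on each step yields
\[
 \|v_1(t_{n+1})\|_{L^2_\ap} \le K e^{-\beta(t_{n+1}-t_n)}\|v_1(t_n)\|_{L^2_\ap} + \text{(source)},
\]
after using the modulation-imposed orthogonality to absorb the $P_{c_n}$-component at time $t_n$.

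Closing the iteration reduces to bounding the source uniformly in $n$. Terms quadratic in $v_1$ are absorbed by smallness in the bootstrap. The dangerous contributions are linear in $v_2$; since $v_2$ is only globally bounded in $H^1$, they must be controlled by spatial localization. The crucial fact is that $u_{c(t)}$ travels with $c>1$, which exceeds the maximum group velocity $1$ of the Benney--Luke linearization at zero, so a virial/monotonicity estimate for $v_2$ in the variable $y$ yields $\|v_2(t,\cdot)\|_{H^1(y\ge -\delta t)}\to 0$ for some $\delta>0$. Because the coefficients coupling $v_2$ into $v_1$ are supported near the solitary wave, this confinement of $v_2$ makes the source terms summable over the recentering grid. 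Iteration then gives $\sup_n\|v_1(t_n)\|_{L^2_\ap}=O(\|v_0\|_{H^1})$ and $\|v_1(t)\|_{L^2_\ap}\to 0$. The modulation ODEs imply $c(t)\to\cp$ and $\dot x(t)\to\cp$ with $|\cp-c_0|=O(\|v_0\|_{H^1})$, giving \eqref{Phase2} and \eqref{Phase1}. Orbital stability \eqref{OS} follows from energy conservation and Lipschitz dependence of $u_c$ on $c$. For \eqref{AS}, the region $\{x\ge\sigma t\}$ corresponds in the $y$-frame to $\{y\ge(\sigma-\cp)t+o(t)\}$, on which the exponentially weighted decay of $v_1$ and the virial confinement of $v_2$ each contribute $H^1$ smallness.

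The main obstacle I expect is quantifying the interaction step carefully enough to iterate. The spectral bound of Theorem~\ref{t.splin} is stated at a fixed $c$, so the $c$-dependence of $P_c$, of the weight range $(0,\apc)$, and of the decay constants $K, \beta$ must all be tracked, and both the modulation source and the recentering truncation error $(\Lct-\mathcal{L}_{c_n})v_1$ must be shown subordinate to the exponential decay rate $\beta$. The discrete recentering is precisely the device that prevents $\dot c$ from appearing undifferentiated in the energy identity for $v_1$, and making the virial estimate for $v_2$ quantitative enough to sum over steps, while simultaneously closing the bootstrap for $v_1$ and the modulation parameters, is the delicate core of the argument.
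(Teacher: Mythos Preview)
Your plan matches the paper's approach (with the labels swapped: in the paper $v_1$ is the free solution and $v_2$ the interaction remainder carrying the orthogonality constraints), including the recentering device of Lemma~\ref{lem:lineardecay2} and the virial control on the free piece. Two steps, however, are genuine gaps. First, your modulation bound $\dot c = O(\|v_1\|_{L^2_\alpha}^2)$ omits the free-piece contribution: differentiating the orthogonality condition brings in the coupling term $f'(u_c)v_{\text{free}}$ from the source $N$, which produces a term in $\dot c$ that is \emph{linear} in the localized norm $\|v_{\text{free}}\|_{W_\nu}$ (see Lemma~\ref{lem:modulation}, estimate \eqref{i:mod2}). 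Since the virial lemma gives only $L^2$-in-time control of this norm, $\dot c$ is not a priori integrable and $\lim_{t\to\infty} c(t)$ does not follow. The paper's fix is \eqref{eq:modulation2}: the corrected quantity $c(t)+\langle \tilde v_1(t),\zeta^*_{2,c(t)}\rangle$ has a derivative that \emph{is} quadratic, and this cancellation is what makes $c_\star$ exist for merely $H^1$-small data.

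Second, your argument for \eqref{AS} does not close for the interaction piece. The region $\{x\ge\sigma t\}$ becomes $\{y\ge -(c_\star-\sigma)t+o(t)\}$, whose left endpoint goes to $-\infty$. Decay of the interaction piece in $H^1_\alpha$ controls only $\int e^{2\alpha y}\mathcal{E}\,dy$; this gives no $H^1$ bound on $\{y\ge -Ct\}$ unless the $H^1_\alpha$-norm decays exponentially in $t$, which is unavailable for general $H^1$ data. The paper handles this by a \emph{second} virial/monotonicity computation (the passage following \eqref{eq:ap8}), applied to the full perturbation $v=\tilde v_1+v_2$ in the moving frame: the localized forcing from $l$ and $f'(u_c)v$ is integrable in time by the $L^2$-in-time bounds already obtained, and monotonicity then propagates control from $\{y\ge 0\}$ out to $\{y\ge \sigma t - x(t)\}$. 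Your sketch needs this additional step.
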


If the initial perturbation 
is sufficiently localized ahead of the main solitary wave,
then we obtain 
local convergence to some fixed solitary wave,
and an estimate of the local decay rate. 
\begin{theorem} [Asymptotic phase and local decay rates]
\label{thm:2}
In addition to the assumptions of Theorem \ref{thm:1}, assume that
\begin{equation}
  \label{eq:g}
\|\gwt v_0\|_{L^2(\R)} +\|\gwt\pd_xv_0\|_{L^2(\R)} <\infty\,,  
\end{equation}
where $\gwt $ is an increasing function on $\R$ such that
$\gwt(x)=1$ for $x\le 0$ and $1/\gwt(x)$ is integrable on $(0,\infty)$.
Then 
\begin{equation}
  \label{Phase3}
\xp= \lim_{t\to\infty}(x(t)-\cp t) \qquad\mbox{exists.}  
\end{equation}
Additionally, 
\begin{enumerate}
\item if $v_0\in H^1_\apo$ for some $\apo>0$ small, then there exists
$\gamma>0$ such that as $t\to\infty$,
  \begin{gather}
\label{Exp-phase-convergence}
|x(t)-\cp t-\xp|=O(e^{-\gamma t}),
\\ 
    \label{Exp-convergence}
\left\|u(t,\cdot+\cp t+\xp)-u_{\cp }\right\|_{H^1_\apo}=O(e^{-\gamma t});
  \end{gather}
\item if $\int_0^\infty x^{2\rp}(|v_0(x)|^2+|\pd_xv_0(x)|^2)\,dx<\infty$
for some $\rp>1$, then as $t\to\infty$,
\begin{equation}
  \label{Poly-convergence}
\left\|\min(1,e^{\alpha x})
(u(t,\cdot+\cp t+\xp)-u_{\cp })\right\|_{H^1}=O(t^{-\rp+1}).
\end{equation}
\end{enumerate}
\end{theorem}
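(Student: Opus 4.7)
The overall plan is to track additional decay or moment information on $v_0$ through the modulation decomposition
\[
u(t,x)=u_{c(t)}(x-x(t))+v(t,x)
\]
used in the proof of Theorem \ref{thm:1}, where $v(t)$ is constrained to lie in a complement of the two-dimensional adjoint neutral subspace for $\Lct$. The modulation equations derived there give
\[
|\dot c(t)|+|\dot x(t)-c(t)|=O(\|v(t)\|_Y^{2})
\]
for a weighted norm $\|\cdot\|_Y$ adapted to the exponential linear stability estimate \qref{i.semi1}. Since Theorem \ref{thm:1} already furnishes $\|v(t)\|_Y\to 0$, the cumulative phase drift $x(t)-\cp t-x(0)=\int_0^t(\dot x(s)-\cp)\,ds$ has a finite limit whenever $\|v(t)\|_Y$ is integrable in $t$, which is exactly where \qref{eq:g} enters: it forces enough localization of the freely propagated part of $v(t)$ to yield the required integrability, establishing \qref{Phase3}.

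\emph{Part (1).} If $v_0\in H^1_\apo$ with $\apo>0$ small, the weighted space $L^2_\apo$ is available from the start. I would close a Duhamel bootstrap in $H^1_\apo$: the source term $f(u)-f(u_{c(t)})$ is at least quadratic in $v$ after the $c,x$-derivative of the profile has been subtracted off by the modulation ODEs, so a standard contraction on a small ball together with the exponential semigroup decay \qref{i.semi1} on the range of $I-P_c$ propagates the exponential smallness of $v_0$ to give $\|v(t)\|_{H^1_\apo}=O(e^{-\gamma t})$ for some $\gamma>0$. Plugging this back into the modulation estimates yields \qref{Exp-phase-convergence}, and a triangle inequality using smoothness of $c\mapsto u_c$ in $H^1_\apo$ delivers \qref{Exp-convergence}.

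\emph{Part (2).} Here the exponential weight is unavailable, so I would use a time-dependent virial/multiplier. In the frame moving with speed $\cp>1$, every group velocity of the linear operator $L$ lies in $(-\cp+\sqrt{a/b},-\cp+1)\subset(-\infty,0)$, so free wave packets are convected strictly behind the solitary wave. Introducing a weight $\Phi(t,x)$ that interpolates between $\la x\ra^{2\rp}$ at $t=0$ and $\min(1,e^{2\alpha(x-\cp t-\xp)})^{-1}$ at large $t$, and computing $\frac{d}{dt}\int\Phi(t,x)|v(t,x)|^2\,dx$, the commutator $[\Phi,L+\cp\D]$ inherits a favorable sign from the strictly negative group velocities, modulo a lower-order remainder absorbable by the bootstrap. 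Integration in time converts the bounded initial virial into polynomial decay
\[
\|\min(1,e^{\alpha(x-\cp t-\xp)})\,v(t)\|_{H^1}=O(t^{-\rp+1}),
\]
and \qref{Poly-convergence} follows.

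The main obstacle is the weighted virial estimate in Part (2): one must construct $\Phi(t,x)$ so that $\D_t\Phi$ dominates $[\Phi,L+\cp\D]$ in a pointwise sense, while still allowing the nonlinear and modulation remainders to be controlled in the corresponding weighted norm without loss of derivatives. The time-varying speed $c(t)$ sharpens this difficulty, but the discrete-time recentering technique flagged in the introduction lets one work relative to a piecewise-constant approximation of $c(t)$, converting the derivative-loss obstruction into an interval-by-interval iteration in which the exponential estimate \qref{i.semi1} is applied on each piece before re-gluing. Closing the bootstrap then simultaneously yields \qref{Phase3}, \qref{Exp-phase-convergence}--\qref{Exp-convergence} in case (1), and \qref{Poly-convergence} in case (2).
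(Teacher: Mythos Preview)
Your overall strategy---localization of $v_0$ forces integrability of the modulation parameters, hence convergence of the phase---is correct, but the execution diverges from the paper in ways that matter.

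First, the modulation estimates from Theorem~\ref{thm:1} are \emph{not} quadratic in a single norm $\|v\|_Y$. The paper's decomposition is $v=\tilde v_1+v_2$ with $v_1$ the \emph{free} Benney--Luke solution launched from $v_0$, and the orthogonality is imposed on $v_2$ only (because $P_c$ is unbounded on $H^1$). The resulting modulation estimates (Lemma~\ref{lem:modulation}) contain \emph{linear} terms in $\|\tilde v_1\|_{W_\nu}$; it is only the corrected quantity $c(t)+\langle\tilde v_1,\zeta^*_{2,c}\rangle$ whose derivative is quadratic. The proof of \eqref{Phase3} therefore runs through the Virial Lemma applied to the free evolution $v_1$: the hypothesis \eqref{eq:g} yields $\|\tilde v_1(t)\|_{W_\nu}\lesssim e^{-\nu\hat\sigma t}+\omega(\hat\sigma t)^{-1}$, which is in $L^1_t$; Young's inequality on the Duhamel bound \eqref{eq:ap5} then transfers this to $\|v_2\|_{L^1_t H^1_\alpha}$, and the modulation ODEs integrate.

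For Part~(1), the paper does \emph{not} bootstrap a Duhamel estimate on $v$ in $H^1_{\alpha_1}$. Instead it extracts exponential decay of $\tilde v_1$ directly from the Virial Lemma (passing to the limit in shifted weights to obtain a genuine exponential weight, Lemma~\ref{lem:v1-exponential}), and exponential decay of $v_2$ in $H^1_\alpha$ follows from \eqref{eq:ap5}. An interpolation between $H^1$ and $H^1_\alpha$ then places $v_2$ in $H^1_{\alpha_1}$. Your direct semigroup approach in $H^1_{\alpha_1}$ could likely be made to work, but it requires re-running the Gearhart--Pr\"uss and recentering machinery at the new exponent $\alpha_1$, which the paper avoids.

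The main overcomplication is Part~(2). No time-dependent interpolating weight $\Phi(t,x)$ is needed. The paper simply takes $\omega(x)=(1+x_+)^\rho$ in the argument already done for \eqref{Phase3}: the virial bound \eqref{eq:v1bound} immediately gives $\|\tilde v_1(t)\|_{W_\nu}\lesssim(1+t)^{-\rho}$, the Duhamel bound \eqref{eq:v2bound} gives the same rate for $\|v_2\|_{H^1_\alpha}$, and the modulation estimates then yield $|c(t)-c_\star|\lesssim(1+t)^{-\rho}$ and $|x(t)-c_\star t-x_\star|\lesssim(1+t)^{-\rho+1}$. A triangle inequality as in the exponential case finishes. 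The commutator sign-definiteness you worry about is already encoded in the Virial Lemma for the \emph{free} equation, which is much cleaner than trying to run a virial estimate on the full $v$ with a bespoke weight.
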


A linear stability estimate of particular significance 
in the proof of nonlinear stability is stated in the following lemma,
which is used to deal with time-dependent variations of wave speed and phase. 
The proof, provided in Appendix~\ref{sec:appendix-L},
involves the recentering technique mentioned in the introduction,
in order to avoid estimating the time-dependent advection term 
as a forcing term.
\begin{lemma}
  \label{lem:lineardecay2}
Let $c_0>1$ and $0<\alpha<\alpha_{c_0}$. 
Assume that in $L^2_\ap$, $\Lc$ has no nonzero eigenvalue $\lambda$ satisfying 
$\Re\lambda\ge0$.  Then there exist positive constants
$\hat\delta$ and $\hat K$ 
with the following property.
Suppose $c(t)$ and $\eta(t)$ are continuous functions on $[0,T)$ 
$(0<T\le\infty)$ such that 
\begin{equation}\label{a.ceta}
\sup_{t\in[0,T)}\left(|c(t)-c_0|+|\eta(t)|\right)<\hat\delta.
\end{equation}
Suppose also that $F\in C([0,T);H^1_\alpha)$, and that
$w \in C([0,T);H^1_\alpha)$ is a solution of 
\begin{equation}
  \label{eq:linearw}
 \pd_tw=\mathcal{L}_{c(t)}w+\eta(t)\pd_yw+F(t),
\end{equation}
satisfying the non-secularity condition $P_{c(t)}w(t)=0$, $t\in[0,T)$.
Then
\begin{equation}
  \label{eq:w-estimate}
\|w(t)\|_{H^1_\alpha}\le \hat K\left(e^{-\beta t/3}\|w(0)\|_{H^1_\alpha}
+\int_0^t e^{-\beta(t-s)/3}\|F(s)\|_{H^1_\alpha}\,ds\right)\,,
\end{equation}
where $\beta$ is the constant given in Theorem~\ref{t.splin} for $c=c_0$.
\end{lemma}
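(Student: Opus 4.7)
The strategy is to combine the exponential-decay semigroup bound of Theorem~\ref{t.splin} with a discrete-time recentering that absorbs the combined first-order perturbation $(c(t)-c_n+\eta(t))\pd_y$ into a small bounded shift of the coefficients of a frozen generator $\mathcal{L}_{c_n}$, rather than treating it as an inhomogeneity that would cost a derivative. As preparation I would first upgrade Theorem~\ref{t.splin} from $L^2_\ap$ to $H^1_\ap$ by a standard regularization argument (essentially because $\pd_y$ nearly commutes with $\mathcal{L}_c$, the commutator being bounded on $H^1_\ap$, and $(I-P_c)$ preserves $H^1_\ap$), and then extend to all $c$ close to $c_0$ by standard semigroup perturbation, using that $\mathcal{L}_c-\mathcal{L}_{c_0}$ is a first-order-in-$(c-c_0)$ perturbation whose principal part $(c-c_0)\pd_y I$ is relatively bounded by $\mathcal{L}_{c_0}$ with small relative bound, and whose lower-order part involves only the coefficient differences $q_c-q_{c_0},\,r_c-r_{c_0}$. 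This yields
\begin{equation*}
\|e^{\mathcal{L}_c t}(I-P_c)z\|_{H^1_\ap}\le K'e^{-\beta' t}\|z\|_{H^1_\ap}
\end{equation*}
uniformly for $c$ in a neighbourhood of $c_0$, with $\beta'$ arbitrarily close to $\beta$ and $P_c$ depending continuously on $c$. Fix $T_0$ so that $2K'e^{-\beta'T_0}\le\tfrac12 e^{-\beta T_0/3}$, and partition $[0,T)$ into intervals $I_n=[nT_0,(n+1)T_0)$.

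On each $I_n$ I set $c_n=c(nT_0)$, $\tau_n(t)=\int_{nT_0}^t(c(s)-c_n+\eta(s))\,ds$, and $\tilde w(t,y)=w(t,y-\tau_n(t))$. Splitting $\mathcal{L}_{c(t)}-\mathcal{L}_{c_n}=(c(t)-c_n)\pd_y I+\tilde R(t)$ with $\tilde R(t)$ bounded on $H^1_\ap$ of norm $O(\hat\delta)$, the equation can be written
\[
\pd_t w=\mathcal{L}_{c_n}w+\tau_n'(t)\pd_y w+\tilde R(t)w+F(t),
\]
and a direct computation using $\tau_n'=c-c_n+\eta$ transforms this into
\[
\pd_t\tilde w=\mathcal{L}_{c_n}^{[\tau_n(t)]}\tilde w+(\tilde R(t))^{[\tau_n(t)]}\tilde w+F(t,\cdot-\tau_n(t)),
\]
where $\mathcal{L}_c^{[\tau]}$ denotes $\mathcal{L}_c$ with its localized coefficients $q_c,r_c$ translated by $\tau$; the advection piece has disappeared. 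Since $|\tau_n(t)|\le 3\hat\delta T_0$ on $I_n$, the shifted operator $\mathcal{L}_{c_n}^{[\tau_n(t)]}$ is a bounded perturbation of $\mathcal{L}_{c_n}$ of norm $O(\hat\delta T_0)$ in $H^1_\ap$, and the translation changes the weighted norm only by the factor $e^{3\ap\hat\delta T_0}$. For $\hat\delta$ chosen small (after $T_0$), Duhamel combined with the frozen estimate for $\mathcal{L}_{c_n}$ gives the one-step recursion
\begin{equation*}
\|w((n{+}1)T_0)\|_{H^1_\ap}\le\tfrac12 e^{-\beta T_0/3}\|w(nT_0)\|_{H^1_\ap}+C\!\int_{nT_0}^{(n{+}1)T_0}\!e^{-\beta((n{+}1)T_0-s)/3}\|F(s)\|_{H^1_\ap}\,ds,
\end{equation*}
provided the $P_{c_n}$-component of $\tilde w$ stays small. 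The non-secularity hypothesis $P_{c(t)}w(t)=0$, combined with Lipschitz continuity of $c\mapsto P_c$, gives $\|P_{c_n}\tilde w(t)\|_{H^1_\ap}=O(\hat\delta\|\tilde w(t)\|_{H^1_\ap})$ on $I_n$, absorbed by shrinking $\hat\delta$ further. Iterating across $n$ and filling in $t\in(nT_0,(n+1)T_0)$ by the trivial $O(1)$-bound on sub-$T_0$ intervals then yields \eqref{eq:w-estimate}.

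The main obstacle is exactly the combined first-order perturbation $(c(t)-c_n+\eta(t))\pd_y w$: as a forcing term it would demand $\|\pd_y w\|_{H^1_\ap}$-control on the right-hand side, costing a derivative and breaking the closure of the estimate. The recentering converts it into a bounded, zeroth-order shift of the coefficients of the frozen operator, at the cost of tracking the auxiliary translation $\tau_n(t)$, which remains uniformly small as long as $\hat\delta T_0$ is small. The remaining technical points — continuity of $P_c$ in $c$, uniformity of semigroup constants under small variation of $c$, relative boundedness of the $(c-c_0)\pd$ piece when passing from $\mathcal{L}_{c_0}$ to $\mathcal{L}_c$, and choosing $\hat\delta$ independently of $T$ — are quantitative but routine once $T_0$ and the spectral gap $\beta$ have been fixed.
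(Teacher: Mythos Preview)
Your proposal is correct and follows the same recentering strategy as the paper: partition time, translate by the accumulated drift $\int(c(s)-c_{\mathrm{frozen}}+\eta(s))\,ds$ to absorb the dangerous first-order term, treat what remains as a bounded perturbation, and iterate a one-step Duhamel estimate. Two simplifications in the paper are worth noting. First, the paper freezes at the single fixed speed $c_0$ on every sub-interval, rather than at $c_n=c(nT_0)$; since the hypothesis already gives $|c(t)-c_0|<\hat\delta$, nothing is gained by varying $c_n$, and freezing at $c_0$ avoids the preliminary work of establishing uniform-in-$c$ semigroup constants via perturbation theory. Second, the upgrade from $L^2_\ap$ to $H^1_\ap$ is done in one line by observing that $(1-\mathcal{L}_{c_0})$ is an isomorphism from $Q_{c_0}H^1_\ap$ onto $Q_{c_0}L^2_\ap$ commuting with the semigroup, which is cleaner than a commutator argument. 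Finally, rather than controlling the neutral component $P_{c_n}\tilde w$ separately, the paper works directly with the projected quantity $w_j(t)=Q_{c_0}\tau_{h_j(t)}w(t)$ and proves a short claim that $\|w_j\|_{H^1_\ap}$ and $\|w\|_{H^1_\ap}$ are comparable; this makes the Gronwall step slightly more transparent.
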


\part{Spectral and linear stability}

To prove Theorem~\ref{t.splin}, our plan is to use the 
characterization of exponential stability provided by the 
Gearhart-Pr\"uss theorem \cite{Ge78,Pr85}.
(See \cite{CL03} for a survey of the use of this theorem, 
and also \cite{AGG86,Herbst,How,Huang}.)
By this theorem (in particular see \cite[Cor. 4]{Pr85}), 
exponential stability of a $C^0$ semigroup in a Hilbert
space is equivalent to the uniform boundedness of the resolvent on the 
right half-plane.
We apply this theorem in the space $Z_\ap:=(I-P_c)L^2_\ap(\R,\R^2)$, 
the spectral complement of the neutral-mode space.
Thus, the conclusion on linear stability in Theorem~\ref{t.splin}
is equivalent to the statement that
the restricted resolvent $(\lambda-\Lc)\inv|_{Z_\ap}$ 
is uniformly bounded on the right half-plane
 $\C_+=\{\lambda\in\C:\Re\lambda>0\}$.  

Naturally, the restricted resolvent is bounded
for $\lambda$ in a neighborhood of the discrete eigenvalue 0. 
And due to Lemma~\ref{l.lin1}, the main hypothesis of
Theorem~\ref{t.splin} ensures that the resolvent set of $\Lc$ contains
all of the closed right half-plane 
$\bar\C_+$ 
except $\lambda=0$. 
The restricted resolvent is therefore bounded on compact subsets  
of $\bar\C_+$. To complete the proof, then, it will suffice to prove 
that the resolvent is bounded on $L^2_\ap$,
uniformly outside a bounded set in $\C_+$. 
That is, there are constants $M_1$ and $M_2$ such that
\begin{equation}\label{i.unif1}
\sup_{\Re\lambda>0,|\lambda|>M_2}
\|(\lambda-\Lc)\inv \|_\ap\le M_1 .
\end{equation}
(Throughout this part we write $\|\cdot\|_\ap$ to denote the operator norm 
in $L^2_\ap$.) The proof of \qref{i.unif1} will be completed in
section~\ref{s.splin}.

The proof of Theorem~\ref{t.smlin} involves analysis 
of eigenvalues in the KdV scaling limit.
The eigenvalue problem for $\Lc$ is reduced to a 
characteristic value problem (nonlinear eigenvalue problem)
for an analytic Fredholm operator bundle $\ww(\lambda)$,
for  which the value $\lambda=0$ has ``null multiplicity'' at least 2.
In the KdV limit, this bundle converges after scaling to one naturally
associated with the KdV soliton, for which the only characteristic value is at
the origin, with null multiplicity exactly 2.  Theorem~\ref{t.smlin} will
be proved using the operator-valued version of Rouch\'e's theorem due to
Gohberg and Sigal \cite{GS} to conclude that $\ww(\lambda)$ can have no nonzero 
characteristic values $\lambda\ne0$ satisfying $\Re\lambda\ge0$.

To begin all the analysis, it is convenient to 
change variables to diagonalize $\BBo$, the leading part of the system. 
Define the Fourier multiplier operators
\begin{equation}
\calS = \sqrt{B\inv A}= \sqrt{\frac{1-a\D^2}{1-b\D^2}}, 
\qquad
\Q_\pm= c\D \pm  \calS\D,
\end{equation}
associated with the symbols
\begin{equation}
\hS(\xi) = \sqrt{
\frac{1+a\xi^2}{1+b\xi^2}}, 
\qquad
\hQ_\pm(\xi)= i\xi c \pm i\xi\hS(\xi),
\end{equation}
and observe 
\begin{equation}\label{e.sim}
\pmat{\calS & I\cr-\calS&I}
\pmat{\lambda-c\D & -\D\cr-\calS^2\D&\lambda-c\D} 
\pmat{\calS & I\cr-\calS&I}\inv
= \pmat{\lambda-\Q_+ &0\cr 0&\lambda-\Q_-}.
\end{equation}
The Fourier multipliers $\calS$, $\calS\inv$, and 
$(\lambda-\Q_\pm)\inv$ will be seen to be
bounded on $L^2_\ap$, uniformly for $\lambda$ of positive real part.
Lemma~\ref{l.lin1} concerning the basic properties of $\Lc$
is proved in section \ref{s.Lc}
(except for the proof of part (iv), which we provide in 
appendix~\ref{apx:Multzero}).
Before that we will develop necessary estimates for
various Fourier multipliers associated with the resolvent of $\BBo$.

\section{General Fourier symbol estimates}
Note that for any smooth $g\colon\R\to\R$ with compact support, 
by Plancherel's theorem, 
\[
\intR |e^{\ap x}g(x)|^2\,dx = \intR |\hat g(k+i\ap)|^2\,\frac{dk}{2\pi}.
\]
It follows that if $\RR$ is a Fourier multiplier operator with symbol $\hat\RR$
analytic and bounded on the strip where $0\le \Im\xi\le\ap$, 
then 
the operator norm of $\RR$ acting on $L^2_\ap$ is
\begin{equation}\label{b.Fmul}
\|\RR\|_\ap = \sup_{k\in\R} |\hat\RR(k+i\ap)| . 
\end{equation}

\begin{lemma} \label{l.sym}
Suppose $c>1$ and $0<\ap<\apc$ and $0<a<b$. 
For all real $k\ne0$, with $\xi=k+i\ap$ we have
\begin{eqnarray}
k\, \Im \hS(\xi)&<& 0, 
\label{hS1} \\
\sqrt{\frac ab}\ <\ |\hS(\xi)|&<& 
\hS(i\ap)= \sqrt{\frac{1-a\ap^2}{1-b\ap^2} } <c , 
\label{hS2a} \\ 
|\hS(\xi)|&<& 1 
-\frac12 \frac{(b-a)(k^2-\ap^2)} {1+b(k^2-\ap^2)} , 
\label{hS2} \\
i\xi\hS(\xi)&=& -\sqrt{-\xi^2 \frac{1+a\xi^2}{1+b\xi^2}}. 
\label{hS3}
\end{eqnarray} 
\end{lemma}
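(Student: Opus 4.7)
The plan is to reduce every claim to properties of the M\"obius map $f(\xi):=(1+a\xi^2)/(1+b\xi^2)$, exploiting $\hS(\xi)^2=f(\xi)$ and the partial-fraction form
\[
f(\xi)=\frac{a}{b}+\frac{b-a}{b(1+b\xi^2)}.
\]
I first observe that the hypothesis $\ap<\apc$ forces $\ap<1/\sqrt b$ (since $\apc^2=(c^2-1)/(bc^2-a)<1/b$), so $1+b\xi^2$ has no zero in the strip $0\le\Im\xi\le\ap$ and $f$ is analytic there. A direct computation gives $\Im f(\xi)=2k\ap(a-b)/|1+b\xi^2|^2$, which has sign opposite to $k$, while $f(i\ap)>0$. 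Thus $f$ avoids $(-\infty,0]$ in the strip, so $\hS:=\sqrt f$ is well defined as the principal branch and is analytic with $\Re\hS>0$. Claim \eqref{hS1} then follows: $\Im(\hS^2)=2\Re\hS\cdot\Im\hS$ together with $\Re\hS>0$ forces $\Im\hS$ to share the sign of $\Im f$, so $k\,\Im\hS<0$.

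For the upper bound in \eqref{hS2a} and the estimate \eqref{hS2}, I use the triangle inequality in the partial-fraction form:
\[
|f(\xi)|\le \frac{a}{b}+\frac{b-a}{b|1+b\xi^2|}\le \frac{1+au}{1+bu},\qquad u:=k^2-\ap^2,
\]
where the second step uses $|1+b\xi^2|\ge 1+bu>0$, and the overall estimate is strict whenever $k\ne 0$ because $\Im(1+b\xi^2)=2bk\ap\ne 0$ forces the strict triangle inequality. The right-hand side is strictly decreasing in $u$ and equals $\hS(i\ap)^2$ at $u=-\ap^2$, giving $|\hS|^2<\hS(i\ap)^2$; the bound $\hS(i\ap)<c$ reduces algebraically to $\ap^2<\apc^2$. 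Setting $y:=(b-a)u/(1+bu)$ (with $1-y=(1+au)/(1+bu)>0$, hence $y<1$), the estimate reads $|\hS|^2<1-y$, and the elementary real-variable inequality $\sqrt{1-y}\le 1-y/2$ converts this into \eqref{hS2}.

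For the lower bound in \eqref{hS2a}, I write $|f(\xi)|^2=A/B$ with $A=|1+a\xi^2|^2=1+2au+a^2(k^2+\ap^2)^2$ and $B$ the analogous expression with $b$ in place of $a$; the subtraction $b^2A-a^2B=(b-a)(a+b+2abu)$ is strictly positive since $u>-1/b>-(a+b)/(2ab)$ (using $b>a$), so $|\hS|^4=|f(\xi)|^2>a^2/b^2$ and \eqref{hS2a} is complete.

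Finally, for \eqref{hS3}: squaring yields $(i\xi\hS(\xi))^2=-\xi^2 f(\xi)$, and the right-hand side is analytic and nonvanishing in the strip, its only zeros being $\xi=0$ and $\xi=\pm i/\sqrt a$, none of which lies in the strip (since $\ap>0$ and $\ap<1/\sqrt b<1/\sqrt a$). On the simply connected strip any two analytic branches of $\sqrt{-\xi^2 f(\xi)}$ differ by a global sign. At $\xi=i\ap$ the expression under the root is positive real while $i(i\ap)\hS(i\ap)=-\ap\hS(i\ap)<0$, so the identity $i\xi\hS(\xi)=-\sqrt{-\xi^2 f(\xi)}$ holds there for the principal branch, and analytic continuation extends it throughout the strip. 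The delicate point throughout is keeping every inequality strict, which always traces back to the observation that $k\ne 0$ implies $\Im(1+b\xi^2)\ne 0$ and so forces the strict triangle inequality underlying the M\"obius estimate.
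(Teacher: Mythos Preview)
Your argument is correct and follows essentially the same route as the paper's, centered on the partial-fraction form $\hS(\xi)^2 = a/b + (1-a/b)/(1+b\xi^2)$. A couple of minor differences: for the lower bound in \eqref{hS2a} the paper simply observes that $\Re(1/(1+b\xi^2)) > 0$, whence $|\hS(\xi)|^2 \ge \Re\hS(\xi)^2 > a/b$, which is shorter than your explicit $b^2A - a^2B = (b-a)(a+b+2abu)$ computation; and for \eqref{hS2} the paper uses $x \le \tfrac12 + \tfrac12 x^2$ with $x = |\hS(\xi)|$, equivalent to your $\sqrt{1-y} \le 1 - y/2$.

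There is one small gap in your treatment of \eqref{hS3}. To conclude that the identity holds with the \emph{principal} square root (which is what the statement asserts), it is not enough that the radicand $-\xi^2 f(\xi)$ be nonvanishing on the line $\Im\xi = \alpha$; you must also verify that it avoids the branch cut $(-\infty,0]$. Otherwise the principal square root is discontinuous across the cut while your analytic branch is not, and the two disagree past the crossing, so the continuation argument fails. The paper closes this gap by a direct argument computation: for $k>0$ one has $0 < \arg(1+b\xi^2) < \arg\xi^2 < \pi$ and $0 < \arg(1+a\xi^2) < \pi/2$, so $\arg\bigl(-\xi^2(1+a\xi^2)/(1+b\xi^2)\bigr) \in (-\pi,\pi/2)$ and the radicand never meets $(-\infty,0]$. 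You should add this verification (or, equivalently, show directly that $\Re(i\xi\hS(\xi)) \le 0$).
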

\begin{proof}
It suffices to consider $k>0$.
Taking $k\to\pm\infty$ and $k\to0$, note that due to (\ref{e.sol2}b),
\[
\hS(\pm \infty+i\ap) = \sqrt{\frac ab} <1< 
\sqrt{\frac{1-a\ap^2}{1-b\ap^2} } =\hS(i\ap) <c. 
\]
Observe that 
\begin{equation}
\hS(\xi)^2= 
\frac ab + \left(1-\frac ab\right) 
\frac{1}{1+b\xi^2}
= 
\frac ab + \left(1-\frac ab\right) 
\frac1{1+b(k^2-\ap^2)+2ibk\ap}.
\end{equation}
The last term has negative imaginary part and positive real part, 
which implies \qref{hS1} and the first part of \qref{hS2a}.
By the triangle inequality,
\begin{equation}
|\hS(\xi)^2| < 
\frac ab + \left(1-\frac ab\right) 
\frac1{1+b(k^2-\ap^2)} 
= \frac{1+a(k^2-\ap^2)}{1+b(k^2-\ap^2)} 
<  \frac{1-a\ap^2}{1-b\ap^2} .  
\end{equation}
This gives \qref{hS2a}, and
since $x\le \frac12+\frac12x^2$ for any $x\ge0$, 
taking $x=|\hS(\xi)|$ gives \qref{hS2}.

To prove \qref{hS3}, observe that since $\xi^2= k^2-\ap^2+2ik\ap$
and $k>0$, we have 
\[
0<\arg(1+b\xi^2)<\arg b\xi^2=\arg \xi^2<\pi,
\qquad 
0<\arg(1+a\xi^2)<\pi/2. 
\]
Hence the quantity
\[
\frac{-\xi^2}{1+b\xi^2} (1+a\xi^2)
\]
is never strictly negative, since its argument
is strictly between $-\pi$ and $\pi/2$.
Now \qref{hS3} follows by continuation starting at $k=0$.
\end{proof}

Since \qref{hS3} implies $i\xi\hS(\xi)$ has negative real part,
and since
\[
\Re\hQ_\pm(\xi) = 
-\ap c \pm \Re(i\xi\hS(\xi)) = -\ap c\mp(\ap\Re\hS(\xi)+k\Im\hS(\xi)),
\]
we infer the following by using \qref{hS2} and \qref{hS1} and \qref{hS3},
along with \qref{b.Fmul}.
\begin{corollary} \label{hQests}
For all real $k$, with $\xi=k+i\ap$ we have
\begin{align*}
& -2\ap c < \Re\hQ_+(\xi) < -\ap c,
\\
& -\ap c <  \Re\hQ_-(\xi) \le -\ap\left( c - 1+ \frac12 
\frac{(b-a)(k^2-\ap^2)} {1+b(k^2-\ap^2)} \right)<0. 
\end{align*}
Moreover, whenever $\Re\lambda+\ap(c-\hS(i\ap))\ge0$ we have
\begin{align*}
& \|(\lambda-\Q_+)\inv\|_\ap \le (\Re\lambda+\ap c)\inv, \\
& \|(\lambda-\Q_-)\inv\|_\ap \le (\Re\lambda+\ap (c -\hS(i\ap))\inv.
\end{align*}
\end{corollary}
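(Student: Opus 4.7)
The plan is to derive everything from the elementary decomposition
\[
\Re\hQ_\pm(\xi) = -\alpha c \pm \Re(i\xi\hS(\xi)) = -\alpha c \mp \bigl(\alpha\Re\hS(\xi) + k\Im\hS(\xi)\bigr),
\]
valid at $\xi = k+i\alpha$ since $i\xi = -\alpha + ik$, to combine it with the symbol estimates of Lemma~\ref{l.sym}, and then to convert the resulting pointwise bounds into operator-norm bounds via \eqref{b.Fmul}.

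The inner inequalities $\Re\hQ_+(\xi) < -\alpha c$ and $\Re\hQ_-(\xi) > -\alpha c$ both reduce to the single statement $\Re(i\xi\hS(\xi)) < 0$, which is immediate from \eqref{hS3}: the radicand $-\xi^2(1+a\xi^2)/(1+b\xi^2)$ was shown in the proof of Lemma~\ref{l.sym} to have argument strictly in $(-\pi, \pi/2)$, so its principal square root has strictly positive real part, and the leading minus sign in \eqref{hS3} reverses it. For the outer inequalities I would invoke $k\Im\hS(\xi) \le 0$ from \eqref{hS1} to drop the $k\Im\hS(\xi)$ term in the decomposition, leaving
\[
\alpha\Re\hS(\xi) + k\Im\hS(\xi) \le \alpha\Re\hS(\xi) \le \alpha|\hS(\xi)|.
\]
Inserting $|\hS(\xi)| < c$ from \eqref{hS2a} yields the lower bound $\Re\hQ_+(\xi) > -2\alpha c$, while inserting the sharper \eqref{hS2} yields the $k$-dependent upper bound on $\Re\hQ_-(\xi)$.

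For the resolvent bounds, observe that $\Q_\pm$ on $L^2_\alpha$ is unitarily equivalent (via the weight $e^{\alpha x}$) to a Fourier multiplier on $L^2$ whose symbol at real frequency $k$ is $\hQ_\pm(k+i\alpha)$, so that \eqref{b.Fmul} gives
\[
\|(\lambda - \Q_\pm)^{-1}\|_\alpha = \sup_{k\in\R}|\lambda - \hQ_\pm(k+i\alpha)|^{-1}.
\]
The estimate for $\Q_+$ follows at once from $|\lambda - \hQ_+(\xi)| \ge \Re\lambda - \Re\hQ_+(\xi) \ge \Re\lambda + \alpha c$. For $\Q_-$ the pointwise bound I would use is $\Re\hQ_-(\xi) \le -\alpha(c-\hS(i\alpha))$, obtained by the same argument via $\alpha\Re\hS(\xi) + k\Im\hS(\xi) \le \alpha|\hS(\xi)| \le \alpha\hS(i\alpha)$ from \eqref{hS1} and \eqref{hS2a}.

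The computations themselves are routine. The only point that requires any care is to verify that the resolvent is actually well defined on $L^2_\alpha$ under the stated hypothesis $\Re\lambda + \alpha(c-\hS(i\alpha)) \ge 0$; this reduces to $\lambda \ne \hQ_\pm(k+i\alpha)$ for every real $k$, which is immediate from the strict pointwise bounds on $\Re\hQ_\pm$ combined with the hypothesis, and so is not a genuine obstacle.
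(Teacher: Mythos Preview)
Your proposal is correct and follows essentially the same route as the paper: the paper's proof consists precisely of writing the decomposition $\Re\hQ_\pm(\xi) = -\alpha c \mp(\alpha\Re\hS(\xi)+k\Im\hS(\xi))$, invoking \eqref{hS3} for the inner inequalities, \eqref{hS1} together with \eqref{hS2} (and \eqref{hS2a}) for the outer ones, and then \eqref{b.Fmul} for the operator-norm bounds. Your write-up supplies a little more detail (notably the well-definedness check for the resolvent) but the argument is the same.
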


\section{Basic properties of the linearization}
\label{s.Lc}

Here we provide the proof of Lemma~\ref{l.lin1}, parts (i)--(iii).
The proof of part (iv) appears in appendix~\ref{apx:Multzero}.

The proof of part (i) is straightforward:
Writing $\xi=k+i\ap$, the Fourier symbol of $B\inv \D^j$ satsifies
\[
\frac{(i\xi)^j}{1+b\xi^2} \to0 \quad\mbox{as $k\to\pm\infty$}
\]
for $j=0,1$. And for $g=q_c$, $r_c$, $q_c'$, $r_c'$, $g$ is continuous
with $g(x)\to0$ as $x\to\pm\infty$.
Hence the operators $B\inv\D^jg$ are all compact on $L^2_\ap$,  
by the convenient compactness criterion of \cite{P85},
using the isomorphism $g\to e^{\alpha x}g$ from $L^2_\ap$ to $L^2$.
Since $B\inv g\D=B\inv(\D g-g_x)$, the operator $\Lc-(\BBo)$ is a 
simple linear combination of compact operators, so is compact.

To prove part (ii), we observe that $\BBo$ is the generator
of a $C^0$ group on $L^2_\ap$.  After the change of variables
\qref{e.qr2}, this follows from the Hille-Yosida
theorem due to the resolvent bounds in Corollary~\ref{hQests}.
Now $\Lc$ generates a $C^0$ group on $L^2_\ap$ also,
by a standard perturbation theorem 
\cite[Ch. 3, Thm. 1.1]{Pa}.  

For part (iii) we note that 
the spectrum of $\BBo$ on $L^2_\ap$ is the
union of the image of the curves
\[
k\mapsto \lambda = \hQ_\pm(k+i\ap),
\]
which lie strictly in the left half-plane $\Re\lambda<0$ due
to Corollary~\ref{hQests}.
Then the essential spectrum of $\Lc$ lies in the left half-plane too,
by a standard generalization of Weyl's theorem to non-selfadjoint 
operators---One applies the analytic Fredholm theorem from 
\cite[I.5.1]{GK} or \cite[VI.14]{RSv1} 
in the right half-plane to the factorization
\[
I-(\lambda-L-c\D)\inv(\Lc-L-c\D)
= (\lambda-L-c\D)\inv (\lambda-\Lc) .
\]

\section{Reduction of the resolvent}

For simplicity we write $(q,r)=(q_c,r_c)$ henceforth.
The resolvent equation for the operator $\Lc$ 
takes the following form:
\begin{eqnarray}
(\lambda -c\D)q_1 -\D r_1 &=& f_1, \label{e.r1a}\\
(-A\D+r\D+2r')q_1+(B(\lambda-c\D)+q'+2q\D)r_1
&=& Bg_1.
\label{e.r1b}
\end{eqnarray}
We study this system in $L^2_\ap$, $0<\ap<\apc$, 
by changing variables using the transformation
\begin{equation} \label{e.qr2}
\begin{pmatrix}q_2\cr r_2\end{pmatrix} = 
\begin{pmatrix}\calS & I\cr -\calS & I\end{pmatrix}
\begin{pmatrix}q_1\cr r_1\end{pmatrix}  ,
\end{equation}
which is bounded on $L^2_\ap$ with bounded inverse,
due to \qref{hS2a}.
In the new variables the resolvent system \qref{e.r1a}-\qref{e.r1b} is written 
\begin{equation}\label{e.qr3}
\pmat{\lambda-\Q_+ &0\cr 0& \lambda - \Q_-}
\pmat{ q_2\cr r_2} 
+\pmat{ 
R_r & R_q \cr
R_r & R_q }
\pmat{I&-I\cr I&I}
\pmat{ q_2\cr r_2} 
= \pmat{ f_2\cr g_2 },
\end{equation}
with 
\begin{equation} \label{e.fg2}
\begin{pmatrix}f_2\cr g_2\end{pmatrix} = 
\begin{pmatrix}\calS & I\cr -\calS & I\end{pmatrix}
\begin{pmatrix}f_1\cr g_1\end{pmatrix}  ,
\end{equation}
\begin{equation} \label{d.RqRr}
R_r = \frac12 B\inv(r\D +2r')\calS\inv, \qquad
R_q = \frac12 B\inv(q'+2q\D ).
\end{equation}
Subtracting the second equation from the first, this system becomes
\begin{eqnarray}
\pmat{\lambda-\Q_+ & -\lambda+\Q_- \cr
R_q+R_r & \lambda-\Q_- +R_q- R_r }
\pmat{q_2\cr r_2} &=& 
\pmat{f_2-g_2\cr g_2}.
\end{eqnarray}
We can eliminate $q_2$ by writing
\begin{equation}\label{d.q2}
 q_2 = (\lambda-\Q_+)\inv (\lambda-\Q_-)r_2+ (\lambda-\Q_+)\inv(f_2-g_2), 
\end{equation}
and using this in the second equation. This reduces the resolvent
equation to the form
\begin{equation} \label{res.r2}
\boxed{\ww(\lambda) (\lambda-\Q_-)r_2 = g_3}
\end{equation}
with
\begin{equation}\label{d.Wlam}
\ww(\lambda)= I + (R_q+R_r)(\lambda-\Q_+)\inv + (R_q-R_r) (\lambda-\Q_-)\inv  ,
\end{equation}
and
\begin{equation}\label{d.g3}
g_3 = g_2-(R_q+R_r)(\lambda-\Q_+)\inv(f_2-g_2).
\end{equation}
Thus we see that to prove both Theorems~\ref{t.splin} and \ref{t.smlin} it will 
suffice to study the invertibility of the operator bundle $\ww(\lambda)$.
\begin{lemma}
If $\Re\lambda+\ap(c-\hS(i\ap))\ge 0$, then $\lambda$ is in the resolvent set of $\Lc$
if and only if $\ww(\lambda)$ is invertible.
\end{lemma}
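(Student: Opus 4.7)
The plan is to verify that under the hypothesis on $\lambda$, each step of the algebraic reduction from the resolvent equation \qref{e.r1a}--\qref{e.r1b} to the scalar equation \qref{res.r2} is a bounded invertible operation; this will make the two solvability problems equivalent, and the claim will follow.

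First I would check that the change of variables $T=\pmat{\calS & I \\ -\calS & I}$ used in \qref{e.qr2} and \qref{e.fg2} is a bounded isomorphism of $L^2_\ap\times L^2_\ap$ onto itself (and likewise of $H^1_\ap\times H^1_\ap$), because $\calS$ and $\calS^{-1}$ are Fourier multipliers bounded on $L^2_\ap$ by the two-sided estimate \qref{hS2a}. Applying $T$, the resolvent equation for $\Lc$ becomes the system \qref{e.qr3}, and solvability in $H^1_\ap$ is preserved. Under the assumption $\Re\lambda+\ap(c-\hS(i\ap))\ge 0$, Corollary~\ref{hQests} gives that $(\lambda-\Q_+)^{-1}$ and $(\lambda-\Q_-)^{-1}$ are bounded on $L^2_\ap$, so that $\lambda-\Q_\pm:H^1_\ap\to L^2_\ap$ are isomorphisms.

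Next, subtracting the two components of \qref{e.qr3} yields the identity \qref{d.q2}, which expresses $q_2\in H^1_\ap$ uniquely in terms of $r_2\in H^1_\ap$ and the data; substituting back into the second component produces \qref{res.r2} with $s:=(\lambda-\Q_-)r_2\in L^2_\ap$. This reduction is reversible: given $s\in L^2_\ap$ solving $\ww(\lambda)s=g_3$, one sets $r_2=(\lambda-\Q_-)^{-1}s\in H^1_\ap$, recovers $q_2\in H^1_\ap$ from \qref{d.q2}, and checks that \qref{e.qr3} holds. Moreover, the data map $(f_2,g_2)\mapsto g_3$ defined by \qref{d.g3} is a bounded surjection onto $L^2_\ap$, since choosing $f_2=g_2$ gives $g_3=g_2$. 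Combining these observations with the isomorphism $T$, we conclude that $(\lambda-\Lc):H^1_\ap\to L^2_\ap$ is bijective if and only if $\ww(\lambda):L^2_\ap\to L^2_\ap$ is bijective.

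The main subtle point I expect is upgrading bijectivity of $\ww(\lambda)$ to bounded invertibility, and doing so in a way that matches the $H^1_\ap$ domain of $\Lc$. I would handle this by noting that $\ww(\lambda)-I$ is compact on $L^2_\ap$: the operators $R_q$ and $R_r$ are built from $B^{-1}$ composed with multiplication by the decaying profile functions $q,q',r,r'$ (and the bounded operator $\calS^{-1}$), while $(\lambda-\Q_\pm)^{-1}$ and $\D(\lambda-\Q_\pm)^{-1}$ are bounded on $L^2_\ap$ since $\hQ_\pm(\xi)\sim i\xi(c\pm\sqrt{a/b})$ as $|\xi|\to\infty$. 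The compactness criterion used in the proof of Lemma~\ref{l.lin1}(i) then applies, so $\ww(\lambda)$ is Fredholm of index zero and the Fredholm alternative yields bounded invertibility from bijectivity. Since $\Lc$ is closed (being a semigroup generator on $L^2_\ap$ by Lemma~\ref{l.lin1}(ii)), bijectivity of $\lambda-\Lc:H^1_\ap\to L^2_\ap$ is equivalent to $\lambda\in\rho(\Lc)$ by the closed graph theorem, completing the equivalence.
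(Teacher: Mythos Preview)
Your proposal is correct and is precisely the argument the paper has in mind: the lemma is stated immediately after the algebraic reduction \qref{e.qr3}--\qref{d.g3} and is left without proof, the intended justification being exactly the reversibility of that reduction together with the boundedness of $T$, $T^{-1}$, and $(\lambda-\Q_\pm)^{-1}$ under the stated hypothesis. Your additional care with the Fredholm/compactness argument for $\ww(\lambda)$ and the closed graph theorem for $\Lc$ simply makes explicit what the paper treats as routine.
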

For later use, note $R_q$ and $R_r$ are compact (since $B\inv \D q$
and $B\inv q'$ are compact), and
\begin{eqnarray}
R_q+R_r &=& B\inv \left(
-q'(\half+\half c\calS\inv) +\D q (1-\half c\calS\inv)
\right), \label{d.Rqpr}
\\
\label{d.Rqmr}
R_q-R_r &=& B\inv \left(
q'(\half+c\calS\inv) + q\D(1+\half c\calS\inv) 
\right) \\
\nonumber
&=&  \left(q'(\half+c\calS\inv) + q\D(1+\half c\calS\inv)\right) B\inv \\
&& \ +\ [B\inv,q']
(\half+c\calS\inv) + [B\inv,q]\D(1+\half c\calS\inv) ,
\nonumber
\end{eqnarray}
where 
\[
[B\inv,q']=B\inv q'-q'B\inv ,
\qquad
[B\inv,q]=B\inv q-qB\inv.
\]

\section{Spectral implies linear stability}
\label{s.splin}

In this section we complete the proof of Theorem~\ref{t.splin}. 
Fix $c>1$ and $\ap$ with $0<\ap<\apc$. 
By Lemma~\ref{l.lin1} and the hypothesis of Theorem~\ref{t.splin} concerning
eigenvalues, we know that the closed right half-plane 
is in the resolvent set of $\Lc$, except for the origin $\lambda=0$.
We will deduce the conclusion of the theorem by applying 
the Gearhart-Pr\"uss theorem in the spectral complement 
$Z_\ap=(I-P_c)L^2_\ap$ of the generalized eigenspace for $\lambda=0$.  
For this purpose,
it suffices to prove the uniform resolvent bound \qref{i.unif1}.
Due to the reduction carried out in the previous section,
to prove the uniform resolvent bound \qref{i.unif1}, it suffices to prove that 
\begin{equation}\label{lg.bd1}
\sup_{ \Re\lambda\ge0,\, |\lambda|>R}
\|(R_q\pm R_r)(\lambda-\Q_\pm)\inv\|_\ap \to 0
\qquad\mbox{as $R\to\infty$},
\end{equation}
since then $\ww(\lambda)\to I$ and $g_3$ is uniformly bounded 
in terms of $(f_2,g_2)$. 

From Corollary~\ref{hQests},
we know that $\lambda-\Q_\pm$ has bounded inverse whenever $\Re\lambda\ge0$.
Moreover, we claim that as $|\lambda|\to\infty$ with $\re\lambda\ge0$,
$(\lambda-\Q_\pm)\inv\to0$ in the strong operator sense on $L^2_\ap$.
To see this, fix $z\in L^2_{\ap}$, and let $w= (\lambda-\Q_\pm)\inv z$. Then
the Fourier transform 
\[
\hat w(k+i\ap) = (\lambda-\hQ_\pm(k+i\ap))\inv \hat z(k+i\ap)\to0
\]
for a.e. $k$. By dominated convergence it follows $\|w\|_{L^2_\ap}\to0$.

Since $R_q\pm R_r$ is compact, \qref{lg.bd1} follows
as a consequence of an abstract fact: In a Hilbert space, 
if a sequence of bounded operators $T_n$ converges strongly to $0$, 
and the operator $S$ is compact, then $ST_n$ converges to $0$ in operator norm.
(We omit the elementary proof.)

\section{Spectral stability in the KdV scaling regime}

Our goal in this section is to prove Theorem~\ref{t.smlin}, establishing
spectral stability for weakly nonlinear waves.
Our strategy involves making use of known stability properties 
of the soliton solution of the KdV equation
\begin{equation}\label{e.kdv}
\D_t \rho -\D_x \rho + 3\rho\D_x\rho + (b-a)\D_x^3\rho = 0,
\end{equation}
given by $\rho=\theta_0(x)$ where
\begin{equation}\label{d.Th}
\thez(x)=\sech^2\left(\frac12\hap_0 x\right), \qquad \hap_0=\frac1{\sqrt{b-a}}. 
\end{equation}
The eigenvalue problem for the linearization of \qref{e.kdv} about $\theta_0$
takes the form
\[
(\Lambda -\D_x + 3\D_x\theta_0 +(b-a)\D_x^3)\rho_1 = 0,
\]
which we rewrite as
\begin{equation}
\WW_0(\Lambda)(\Lambda-\D+(b-a)\D^3)\rho_1 = 0,
\end{equation}
in terms of the bundle 
\begin{equation}\label{d.W0}
\WW_0(\Lambda)= I+ (3\D\thez ) 
(\Lambda-\D+(b-a)\D^3)\inv . 
\end{equation}
Due to known stability properties of the KdV soliton 
(see Lemma~\ref{l.nullkdv} for a precise characterization),
$\WW_0(\Lambda)$ is known to be invertible in $L^2_\hap$ whenever 
$0<\hap<\hap_0$ and $\Lambda\ne0$ with $\Re\Lambda\ge -\hb$, 
where
\begin{equation}\label{d.hb}
\hb = \hap(1-(b-a)\hap^2).
\end{equation}
(The essential spectrum of $\D-(b-a)\D^3$ in $L^2_\hap$ 
is contained in the half-plane $\Re\Lambda\le-\hb$.)

To see the relevance of this KdV eigenvalue problem for small-energy 
solitary waves
of the Benney-Luke system \qref{eq:bousys}, we study the 
reduced eigenvalue problem from \qref{res.r2} using the KdV scaling,
\begin{equation} \label{kdvscale}
c=1+\frac{\ep^2}2, \quad 
\lambda=\frac{\ep^3}2 \Lambda , \quad
\hat x = \ep x.
\end{equation}
The solitary wave profile from \qref{e.sol2} then takes the form
\begin{equation}
q_c(x) = {\ep^2} \thep(\ep x), \quad \thep(\hat x)= \frac{c+1}{2c}
\sech^2\left(\frac12 \hbet \hat x\right),
\quad \hbet = \sqrt{\frac{c+1}{2(bc^2-a)}}.
\end{equation}

Formally, the KdV scaling corresponds to the following:
\[
\dx\sim\ep\dhx,  \quad \calS \sim I+\frac12(b-a)\ep^2\dhx^2, 
\quad q_c \sim -r_c \sim {\ep^2}\thez(\hat x).
\]
Using this scaling in the reduced resolvent equation 
\qref{res.r2} indicates
\[
R_q-R_r \sim 
\frac{3\ep^3}2 \left( \thez'+\thez\dhx \right)
= 
\frac{3\ep^3}2 \, \dhx\thez \ ,
\qquad
R_q+R_r \sim \frac{\ep^3}2 \left( 
-2\thez'+\dhx \thez
\right),
\]
\[
\lambda-\Q_- \sim \frac{\ep^3}2(\Lambda-\dhx+(b-a)\dhx^3),
\qquad
\lambda-\Q_+ \sim -2 \ep \dhx, 
\]
and consequently
\begin{equation} \label{l.W0}
\ww(\lambda) \sim 
I+ (3\dhx\thez ) (\Lambda-\dhx+(b-a)\dhx^3)\inv 
=\WW_0(\Lambda).
\end{equation}
A key step in the proof of Theorem~\ref{t.smlin} is to 
make the formal limit in \qref{l.W0} precise, and invoke
the operator-valued Rouch\'e theorem proved by  Gohberg and Sigal \cite{GS}
to deduce that $\ww(\lambda)$ is invertible for every nonzero $\lambda$ in
a suitable open half-space that contains the closed right half-plane.

We introduce a scaling operator defined by $\isc g(x)=g(\ep x)\sqrt\ep$ .
Then $\isc:L^2_{\hap}\to L^2_{\ep\hap}$ is an isometry, since
\[
\intR |e^{\hap x}g(x)|^2\,dx = 
\intR |e^{\ep \hap x}g(\ep x)\sqrt\ep|^2 \,dx .
\]
Then any bounded operator $Q$ on $L^2_{\ep\hap}$ induces a bounded operator 
$\isc\inv Q\isc$ on $L^2_{\hap}$ with the same norm.
We make \qref{l.W0} precise in the following sense: 
\begin{theorem}[Bundle convergence]\label{t.Bun} 
Fix $\hap\in(0,\hap_0)$ and let $\hb=\hap(1-(b-a)\hap^2)$. 
Then with $\wep(\Lambda)= \isc\inv \ww(\lambda)\isc$
where $\lambda=\frac12\ep^3\Lambda$, 
we have
\begin{equation}\label{l.W1}
\sup_{\Re\Lambda\ge -\hb/2}
\| \wep(\Lambda)-\WW_0(\Lambda)\|_\hap \to 0 \quad\mbox{as $\ep\to0$}.
\end{equation}
\end{theorem}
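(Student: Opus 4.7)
The plan is to reduce everything to Fourier-multiplier estimates on $L^2_\hap$, using the identification \qref{b.Fmul} of the operator norm of an analytic Fourier multiplier with the supremum of its symbol along the line $\hxi=\hk+i\hap$. First I would work out the scaling rules: conjugation by $\isc$ sends the multiplication operator $q_c$ to $\ep^2\thep(\hat x)$, sends $\D_x$ to $\ep\dhx$, and sends any Fourier multiplier with symbol $m(\xi)$ to one with symbol $m(\ep\hxi)$. Applied to the decomposition \qref{d.Wlam}, this gives $\wep(\Lambda)=I+M_\ep^+(\Lambda)+M_\ep^-(\Lambda)$ where $M_\ep^\pm:=\isc^{-1}(R_q\pm R_r)(\lambda-\Q_\pm)^{-1}\isc$, and the task reduces to showing $\|M_\ep^+\|_\hap\to 0$ and $\|M_\ep^--M_0\|_\hap\to 0$ uniformly on $\Re\Lambda\ge-\hb/2$, where $M_0:=\WW_0-I=(3\dhx\thez)(\Lambda-\dhx+(b-a)\dhx^3)^{-1}$.

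For $M_\ep^+$: every term of the formula \qref{d.Rqpr} for $R_q+R_r$ carries a factor of $q_c$ or $q_c'$, so after conjugation $\isc^{-1}(R_q+R_r)\isc$ has operator norm $O(\ep^3)$, the auxiliary Fourier multipliers $B^{-1}$ and $B^{-1}\D\calS^{-1}$ being uniformly bounded by Lemma~\ref{l.sym}. Corollary~\ref{hQests} applied at weight $\ap=\ep\hap$ gives $\|(\lambda-\Q_+)^{-1}\|_{\ep\hap}=O(1/\ep)$ uniformly on $\Re\Lambda\ge-\hb/2$ for $\ep$ small, so $\|M_\ep^+\|_\hap=O(\ep^2)$.

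For $M_\ep^-$ I would use the identity $\hS(\xi)-1=-\tfrac{(b-a)\xi^2}{(1+b\xi^2)(\hS(\xi)+1)}$ from Lemma~\ref{l.sym} to rewrite the scaled symbol
\[
\sigma_\ep(\hxi):=\tfrac{2}{\ep^3}\bigl(\lambda-\hQ_-(\ep\hxi)\bigr)=\Lambda-i\hxi-\tfrac{2(b-a)i\hxi^3}{(1+b\ep^2\hxi^2)(\hS(\ep\hxi)+1)},
\]
which tends pointwise to $\sigma_0(\hxi):=\Lambda-i\hxi-(b-a)i\hxi^3$. The $\ep^3$ factors from $R_q-R_r$ and from $(\lambda-\Q_-)^{-1}$ cancel, yielding
\[
M_\ep^-=2K_\ep\,\sigma_\ep(\dhx)^{-1},\qquad K_\ep:=B_\ep^{-1}\bigl[\thep'(\tfrac12+c\calS_\ep^{-1})+\thep\dhx(1+\tfrac12 c\calS_\ep^{-1})\bigr],
\]
whose formal $\ep\to 0$ limit is $M_0=3(\dhx\thez)\sigma_0(\dhx)^{-1}$. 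Telescoping
\[
M_\ep^--M_0=2\bigl(K_\ep-\tfrac32\dhx\thez\bigr)\sigma_0(\dhx)^{-1}+2K_\ep\bigl(\sigma_\ep(\dhx)^{-1}-\sigma_0(\dhx)^{-1}\bigr)
\]
reduces the task to two uniform operator-norm estimates.

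The main obstacle is the Fourier-multiplier bound $\|\sigma_\ep(\dhx)^{-1}-\sigma_0(\dhx)^{-1}\|_\hap\to 0$, together with its companion with a single $\dhx$ inserted (needed because of the $\thep\dhx$ term in $K_\ep$). Corollary~\ref{hQests} furnishes the lower bound $|\sigma_\ep(\hxi)|\ge\hb/2$ uniformly on $\Re\Lambda\ge-\hb/2$, and the same for $\sigma_0$; stronger real-part estimates give $|\sigma_\ep|\gtrsim|\hk|^2$ for $|\hk|\lesssim 1/\ep$ and $|\sigma_\ep|\gtrsim 1/\ep^2$ for $|\hk|\gtrsim 1/\ep$. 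Analyzing the ratio $|\sigma_\ep-\sigma_0|/|\sigma_\ep\sigma_0|$ in these three frequency regimes---using the explicit formula $\sigma_\ep-\sigma_0=(b-a)i\hxi^3\bigl[1-\tfrac{2}{(1+b\ep^2\hxi^2)(\hS(\ep\hxi)+1)}\bigr]$, which is $O(\ep^2|\hxi|^5)$ on $|\ep\hxi|\lesssim 1$ and $O(|\hxi|^3)$ overall---yields the required vanishing. A parallel argument controls $(K_\ep-\tfrac32\dhx\thez)\sigma_0(\dhx)^{-1}$: although $B_\ep^{-1}-I$ and $\calS_\ep^{-1}-I$ do not tend to zero in operator norm, their symbols have the form $\ep^2\hxi^2/(1+b\ep^2\hxi^2)$ up to bounded factors, which is absorbed by the $\hxi^{-3}$ decay of $\sigma_0(\dhx)^{-1}$ (minus at most one $\hxi$ spent on $\dhx$) to give vanishing in operator norm as well.
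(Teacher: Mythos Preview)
Your overall architecture matches the paper's: split $\wep-\WW_0$ into the $\Q_+$ piece (junk) and the $\Q_-$ piece (KdV limit), and analyze the latter by telescoping against $\sigma_0(\dhx)^{-1}$ with symbol estimates in frequency regimes. Two points deserve attention.

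First, a minor power-counting slip: $\isc^{-1}(R_q+R_r)\isc$ is $O(\ep^2)$, not $O(\ep^3)$. In the scaled variables $B_\ep^{-1}\dhx$ has symbol $i\hxi/(1+b\ep^2\hxi^2)$ with norm $\sim 1/\ep$, so the term $B_\ep^{-1}\cdot\ep\dhx\cdot\ep^2\thep$ is only $O(\ep^2)$. Combined with $(\lambda-\Q_+)^{-1}=O(\ep^{-1})$ this still gives $\|M_\ep^+\|_\hap=O(\ep)\to 0$, so the conclusion is unaffected.

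The substantive gap is in your treatment of $(K_\ep-\tfrac32\dhx\thez)\sigma_0(\dhx)^{-1}$. You write $K_\ep=B_\ep^{-1}[\thep'(\cdots)+\thep\dhx(\cdots)]$ and then assert that the symbol of $B_\ep^{-1}-I$, of size $\ep^2\hxi^2/(1+b\ep^2\hxi^2)$, is ``absorbed by the $\hxi^{-3}$ decay of $\sigma_0(\dhx)^{-1}$.'' But $B_\ep^{-1}-I$ sits to the \emph{left} of the multiplication operators $\thep,\thep'$, while $\sigma_0(\dhx)^{-1}$ sits to the right; they are separated by a non-Fourier-multiplier and their symbols do not multiply. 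Since $\|B_\ep^{-1}-I\|_\hap$ does not tend to zero, the naive bound $\|B_\ep^{-1}-I\|\cdot\|\thep\dhx(\cdots)\sigma_0^{-1}\|$ gives only $O(1)$. This is exactly the obstruction the paper isolates: in \qref{d.Rqmr} the authors commute $B^{-1}$ to the right of $q,q'$ so that it merges with $(\lambda-\Q_-)^{-1}$ as a Fourier multiplier (Lemma~\ref{lbun-b}), and then they must control the resulting commutators $[B^{-1},q']$ and $[B^{-1},q]\D(\lambda-\Q_-)^{-1}$ separately (Lemma~\ref{lbun-c}, via the kernel estimate of Lemma~\ref{L.comm}). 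Your sketch can be repaired the same way---indeed the elementary identity $[B_\ep^{-1},\thep]=b\ep^2B_\ep^{-1}(2\thep'\dhx+\thep'')B_\ep^{-1}$ together with $\|\dhx B_\ep^{-1}\|_\hap=O(\ep^{-1})$ gives $\|[B_\ep^{-1},\thep]\|_\hap=O(\ep)$, which suffices---but as written the argument is incomplete on precisely the point the paper singles out as the main difficulty.
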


\subsection{Estimates for the bundle convergence theorem}
To prove Theorem~\ref{t.Bun} we substitute \qref{d.Rqpr}--\qref{d.Rqmr} 
into \qref{d.Wlam}.  The proof, to be completed in
subsection~\ref{ssec:bunpf}, follows from three groups of estimates
that we detail in this subsection:
(a) basic estimates on Fourier multipliers,
(b) convergence estimates for certain rescaled Fourier multipliers
in the KdV limit, and (c) estimates on junk terms and commutators.
We define
\begin{equation}\label{d.oms}
\Omega_* = \{\Lambda\in\C: \Re\Lambda\ge -\hb/2 \} .
\end{equation}

\begin{lemma}[Basic estimates]\label{lbun-a}
Uniformly for small $\ep>0$ and for $\Lambda\in\Omega_*$,
\begin{eqnarray}
\label{b.b1}  
\|\calS\inv\|_\ap &\le& C,
\\ \label{b.b2} 
\|B\inv\D^j\|_\ap  &\le& C \quad(j=0,1), 
\\ \label{b.b3} 
\|(\lambda-\Q_+)\inv\|_\ap  &\le& C\ep\inv,
\\ \label{b.b4} 
\| (\lambda-\Q_-)\inv\|_\ap  &\le& C\ep^{-3}.
\end{eqnarray}
\end{lemma}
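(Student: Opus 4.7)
The plan is to reduce each bound to a supremum estimate on the corresponding Fourier symbol along the line $\xi = k + i\ap$ via the identity \qref{b.Fmul}, where $\ap = \ep\hap$ and $\lambda = \ep^3\Lambda/2$. The bound \qref{b.b1} then follows at once from Lemma~\ref{l.sym}, which gives $|\hS(\xi)| > \sqrt{a/b}$ and hence $\|\calS\inv\|_\ap \le \sqrt{b/a}$. For \qref{b.b2} I would note that $\Re(1+b\xi^2) = 1+bk^2-b\ap^2 \ge \tfrac12(1+bk^2)$ once $\ep$ is small enough that $b\ep^2\hap^2 \le \tfrac12$, whence $|(i\xi)^j/(1+b\xi^2)| \le 2(|k|+\ap)^j/(1+bk^2)$ is uniformly bounded in $k$ for $j=0,1$.

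The substantive bounds are \qref{b.b3} and \qref{b.b4}, to be derived from the resolvent estimates in Corollary~\ref{hQests}:
\[
\|(\lambda-\Q_+)\inv\|_\ap \le (\Re\lambda+\ap c)\inv,
\qquad
\|(\lambda-\Q_-)\inv\|_\ap \le (\Re\lambda+\ap(c-\hS(i\ap)))\inv,
\]
valid whenever the denominators are positive. The claim will then reduce to showing $\Re\lambda + \ap c \gtrsim \ep$ and $\Re\lambda + \ap(c-\hS(i\ap)) \gtrsim \ep^3$, uniformly for $\Lambda \in \Omega_*$ and small $\ep$. The assumption $\Re\Lambda \ge -\hb/2$ gives $\Re\lambda \ge -\tfrac14\ep^3\hb$. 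The first is easy: $\ap c \ge \ep\hap$, so $\Re\lambda + \ap c \ge \ep\hap - \tfrac14\ep^3\hb \ge \tfrac12\ep\hap$ for small $\ep$, yielding \qref{b.b3}.

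The only nontrivial point of the lemma is the scale calibration in \qref{b.b4}, which I would verify by Taylor-expanding at $\ep = 0$:
\[
\hS(i\ep\hap) = \sqrt{\frac{1-a\ep^2\hap^2}{1-b\ep^2\hap^2}} = 1 + \tfrac12(b-a)\ep^2\hap^2 + O(\ep^4).
\]
Combined with $c = 1+\ep^2/2$, this gives $c-\hS(i\ap) = \tfrac12\ep^2(1-(b-a)\hap^2) + O(\ep^4)$, and since $\hb = \hap(1-(b-a)\hap^2)$ by definition, one finds $\ap(c-\hS(i\ap)) = \tfrac12\ep^3\hb + O(\ep^5)$. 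Therefore $\Re\lambda + \ap(c-\hS(i\ap)) \ge \tfrac14\ep^3\hb - O(\ep^5) \ge \tfrac18\ep^3\hb$ for small $\ep$, proving \qref{b.b4}. The definitions of $\Omega_*$ and of $\hb$ are precisely calibrated so that $\Re\lambda$ and $\ap(c-\hS(i\ap))$ are both of order $\ep^3$ and balance with the right sign; once this calibration is recognized, the rest is routine bookkeeping.
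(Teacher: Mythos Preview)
Your proposal is correct and follows essentially the same approach as the paper: reduce to symbol estimates via \qref{b.Fmul}, use Lemma~\ref{l.sym} for \qref{b.b1}, direct bounds on $(1+b\xi^2)^{-1}$ for \qref{b.b2}, and Corollary~\ref{hQests} with the scaling $\ap=\ep\hap$, $\lambda=\tfrac12\ep^3\Lambda$ for \qref{b.b3}--\qref{b.b4}. The only cosmetic difference is that the paper asserts the exact inequality $\hS(i\ap)\le 1+\tfrac12(b-a)\ep^2\hap^2$ in place of your Taylor expansion with $O(\ep^4)$ remainder; your version is in fact more careful, since that inequality is slightly off by a positive $O(\ep^4)$ term, though this has no effect on the conclusion.
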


\begin{lemma}[KdV limit of Fourier multipliers]\label{lbun-b}
For $j,k=0,1$,
\begin{equation}\label{l.F1}
\| \isc\inv 
\left( \ep^{3-j}\D^j\calS^{-k}B\inv(\lambda- \Q_-)\inv \right)\isc 
- 2 \D^j (\Lambda-\D+(b-a)\D^3)\inv \|_\hap \to 0,
\end{equation}
uniformly for $\Lambda\in\Omega_*$.
\end{lemma}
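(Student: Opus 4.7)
The approach is to realize each operator as a Fourier multiplier and use the norm formula \qref{b.Fmul}, which identifies the operator norm on $L^2_\hap$ with the supremum of the symbol on the shifted line $\{\Im\kappa=\hap\}$. The scaling isometry $\isc$ intertwines a Fourier multiplier with symbol $\hat T(\xi)$ acting on $L^2_{\ep\hap}$ with the Fourier multiplier on $L^2_\hap$ whose symbol is $\kappa\mapsto\hat T(\ep\kappa)$. Hence the operator inside the norm in \qref{l.F1} is, after conjugation by $\isc$, the Fourier multiplier on $L^2_\hap$ with symbol
\begin{equation*}
m_\ep(\kappa)=\ep^3(i\kappa)^j\,\hS(\ep\kappa)^{-k}(1+b\ep^2\kappa^2)^{-1}(\lambda-\hQ_-(\ep\kappa))^{-1}, \qquad \lambda=\tfrac{\ep^3}{2}\Lambda,
\end{equation*}
to be compared on $\{\Im\kappa=\hap\}$ with the target symbol $m_0(\kappa)=2(i\kappa)^j(\Lambda-i\kappa-(b-a)i\kappa^3)^{-1}$. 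The lemma thus reduces to showing that $\sup_{\Im\kappa=\hap}|m_\ep(\kappa)-m_0(\kappa)|\to 0$ as $\ep\to 0$, uniformly for $\Lambda\in\Omega_*$.

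The heart of the argument is the Taylor expansion $\hS(\eta)=1-\tfrac12(b-a)\eta^2+O(\eta^4)$ valid for small $\eta$. Setting $\eta=\ep\kappa$ and using $c=1+\ep^2/2$ gives, in the regime $\ep|\kappa|\le 1$,
\begin{equation*}
\lambda-\hQ_-(\ep\kappa)=\tfrac{\ep^3}{2}\bigl(\Lambda-i\kappa-(b-a)i\kappa^3\bigr)+O\bigl(\ep^5(1+|\kappa|^5)\bigr),
\end{equation*}
together with $\hS(\ep\kappa)^{-k}(1+b\ep^2\kappa^2)^{-1}=1+O(\ep^2(1+|\kappa|^2))$. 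The non-degeneracy needed to apply a resolvent identity comes from two explicit computations. Writing $\kappa=\ell+i\hap$ one finds
\begin{equation*}
\Re\bigl(\Lambda-i\kappa-(b-a)i\kappa^3\bigr)=\Re\Lambda+\hb+3(b-a)\hap\ell^2\ge \tfrac{\hb}{2}+3(b-a)\hap\ell^2
\end{equation*}
for every $\Lambda\in\Omega_*$ and $\ell\in\R$, by the definition \qref{d.hb} of $\hb$. Meanwhile Corollary~\ref{hQests} applied at $\xi=\ep\kappa$ yields an analogous lower bound $|\lambda-\hQ_-(\ep\kappa)|\ge C\ep^3(1+\ell^2)$ uniformly in $\Lambda\in\Omega_*$ and small $\ep$. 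Combining these ingredients with the resolvent identity gives a pointwise estimate of the form $|m_\ep(\kappa)-m_0(\kappa)|\le C(1+|\kappa|^2)\ep^2$ valid for $\ep|\kappa|\le 1$ and $|\Lambda|$ bounded.

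To upgrade pointwise convergence to uniform convergence on the full parameter region $\{\Im\kappa=\hap\}\times\Omega_*$, I would use a box/complement split: fix $M$ large, write the region as $\{|\kappa|\le M,\,|\Lambda|\le M\}$ together with its complement, and handle the two separately. On the box, the quantitative estimate above gives uniform convergence once $\ep\le 1/M$. On the complement, each symbol is individually small: for $|\kappa|\to\infty$ one has $m_0=O(|\kappa|^{j-3})$ from the cubic denominator, and $m_\ep=O(\ep^2|\kappa|^{j-1})$ using the Corollary~\ref{hQests} lower bound on $-\Re\hQ_-(\ep\kappa)$; for $|\Lambda|\to\infty$, both symbols are $O(|\Lambda|^{-1})$. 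The main obstacle is precisely the joint unboundedness of $\kappa$ and $\Lambda$ combined with the fact that the Taylor remainder grows with $|\kappa|$, so the expansion alone cannot give a global estimate. The box/complement split is what sidesteps this: on the complement the decay of each symbol makes the remainder irrelevant, while on the box the remainder is controlled by a constant depending only on $M$. Choosing $M$ large enough first and then $\ep$ small enough completes the proof.
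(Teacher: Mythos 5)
Your reduction to symbols via \qref{b.Fmul}, the Taylor expansion $\hS(\eta)=1-\frac12(b-a)\eta^2+O(\eta^4)$, the lower bound $\Re\mo\ge\hb/2+3(b-a)\hap\ell^2$, and the use of Corollary~\ref{hQests} for the non-Taylor regime are exactly the ingredients of the paper's proof; the difference is in how the frequency plane is cut. The paper splits at an $\ep$-dependent threshold $|\ep\hxi|\le 4\ep^p$ with $p\in(\frac13,\frac12)$, chosen so that the Taylor remainder contributes $O(\ep^{3p-1})$ below the threshold while both symbols are individually $O(\ep^{1-2p})+O(\ep^{1-p})$ above it; this yields an explicit rate and, because both estimates are automatically uniform in $\Lambda\in\Omst$, no separate large-$|\Lambda|$ case is needed. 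Your fixed box $\{|\kappa|\le M,\,|\Lambda|\le M\}$ with ``choose $M$, then $\ep$'' is a legitimate alternative (and your pointwise bound $|m_\ep-m_0|\lesssim(1+|\kappa|^2)\ep^2$ on the box is in fact uniform over all of $\Omst$, since the lower bound on $\Re\mo$ does not use $|\Lambda|\le M$), but it forfeits the rate and obliges you to treat $|\Lambda|>M$ separately. One stated estimate needs repair: the lower bound $|\lambda-\hQ_-(\ep\kappa)|\ge C\ep^3(1+\ell^2)$ is false once $\ep|\ell|\gtrsim 1$, where Corollary~\ref{hQests} only gives $\Re(\lambda-\hQ_-(\ep\kappa))\gtrsim \ep^3\bigl(1+\ell^2/(1+\ep^2\ell^2)\bigr)$, which saturates at order $\ep$. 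The conclusion on the complement survives because the factor $(1+b\ep^2\kappa^2)^{-1}\lesssim(1+\ep^2\ell^2)^{-1}$ supplies the missing decay, giving $|m_\ep(\kappa)|\lesssim|\kappa|^{j-2}\le M^{j-2}$ uniformly in $\Lambda\in\Omst$ and small $\ep$ — this is essentially how the paper handles its high-frequency regime, by pairing the bound \qref{l.fhi1} on $\ep^2(\lambda-\hQ_-(\ep\hxi))^{-1}$ with the bound \qref{b.Bb} on $\ep(i\hxi)^j(1+b\ep^2\hxi^2)^{-1}$ — so this is a fixable slip rather than a gap, but as written the intermediate claim should not stand.
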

Lemma~\ref{lbun-b} is key, but its proof 
turns out not to be very hard, only involving Taylor expansion
of symbols at low frequency, and uniform bounds at high frequency.
The presence of the smoothing operator $B\inv$ simplifies the 
analysis as compared to the case of water waves treated in \cite{PS2}.

Finally, the junk terms include $(R_q+R_r)(\lambda-\Q_+)\inv$,
and terms involving the commutators $[B\inv,q']$ and
$[B\inv,q]$  in $R_q-R_r$.
The first kind of junk term is handled by noting
that since $q=O(\ep^2)$ and $q'=O(\ep^3)$,
the estimates \qref{b.b1}, \qref{b.b2} and \qref{b.b3} yield
\begin{equation}\label{e.Rqpr}
\|(R_q+R_r)(\lambda-\Q_+)\inv\|_\ap \le C\ep, 
\end{equation}
where $\ap=\ep\hap$ (here and below).
Concerning the commutators, we will establish the following. 
\begin{lemma}[Commutator estimates]\label{lbun-c}
\begin{equation}\label{e.Bcqx}
\| [B\inv,q']\|_\ap\le C \ep^4,
\end{equation}
\begin{equation}\label{e.keyB}
\| [B\inv,q]\D(\lambda-\Q_-)\inv\|_\ap \to 0 \qquad\mbox{as $\ep\to0$},
\end{equation}
uniformly for $\Lambda\in\Omega_*$.
\end{lemma}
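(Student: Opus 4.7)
The plan is to exploit the commutator identity
\[
[B\inv, f] = B\inv(2b f'\D + b f'')B\inv,
\]
valid for smooth $f$, which follows from $B = I - b\D^2$, the product-rule identity $[\D^2,f] = 2f'\D + f''$, and $[B\inv,f] = -B\inv[B,f]B\inv$. Throughout I use that the operator norm on $L^2_\ap$ of multiplication by $f\in L^\infty(\R)$ is at most $\|f\|_\infty$, together with the uniform bounds $\|B\inv\|_\ap, \|B\inv\D\|_\ap \le C$ from Lemma~\ref{lbun-a}. Because $q_c(x) = \tfrac{c+1}{2c}\ep^2\sech^2(\hbet\ep x/2)$, every spatial derivative of $q = q_c$ supplies an extra factor of $\ep$, so $\|q^{(k)}\|_\infty = O(\ep^{k+2})$ for each $k\ge 0$.

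For \eqref{e.Bcqx} the estimate is essentially immediate: applying the identity with $f = q'$, so that $f' = q'' = O(\ep^4)$ and $f'' = q''' = O(\ep^5)$ in $L^\infty$, gives
\[
\|[B\inv,q']\|_\ap \le 2b\|B\inv\|_\ap \|q''\|_\infty \|B\inv\D\|_\ap + b\|B\inv\|_\ap^2 \|q'''\|_\infty = O(\ep^4).
\]

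For \eqref{e.keyB} we apply the identity with $f = q$, use $q' = O(\ep^3)$ and $q'' = O(\ep^4)$ in $L^\infty$, and the fact that $B\inv$ commutes with $\D$, to write
\[
[B\inv,q]\D(\lambda-\Q_-)\inv = 2b B\inv q'\bigl(\D^2 B\inv(\lambda-\Q_-)\inv\bigr) + b B\inv q''\bigl(\D B\inv(\lambda-\Q_-)\inv\bigr).
\]
What is then needed is the sharp resolvent bound
\begin{equation}\label{auxbd}
\|\D^j B\inv(\lambda-\Q_-)\inv\|_\ap \le C\ep^{j-3}, \qquad j = 1,2,
\end{equation}
uniformly for $\Lambda\in\Omega_*$ and small $\ep$. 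The case $j = 1$ is contained in Lemma~\ref{lbun-b}, since the scaled operator $\isc\inv[\ep^2\D B\inv(\lambda-\Q_-)\inv]\isc$ converges in operator norm on $L^2_\hap$ to the bounded operator $2\D(\Lambda-\D+(b-a)\D^3)\inv$. The case $j = 2$ is not literally stated in Lemma~\ref{lbun-b} but follows from the same scaling argument: after conjugation by $\isc$, the operator $\ep\D^2 B\inv(\lambda-\Q_-)\inv$ has Fourier symbol at $\xi = k + i\hap$ of the form
\[
\frac{-2\xi^2}{(1+b\ep^2\xi^2)\bigl(\Lambda - i\xi(1+(b-a)\xi^2)\bigr)} + O(\ep^2),
\]
which is bounded on the low-frequency region and decays like $|k|\inv$ for large $|k|$, hence is uniformly bounded on $L^2_\hap$ for $\Lambda\in\Omega_*$. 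Granting \eqref{auxbd}, the two terms contribute respectively $O(\|q'\|_\infty\cdot\ep^{-1}) = O(\ep^2)$ and $O(\|q''\|_\infty\cdot\ep^{-2}) = O(\ep^2)$, so their sum tends to zero as $\ep\to 0$.

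The main technical obstacle is the sharp resolvent bound \eqref{auxbd} at $j = 2$, uniform in $\Lambda\in\Omega_*$: one must verify that $|\lambda - \hQ_-(k+i\ap)| \gtrsim \ep^3$ uniformly in $k\in\R$, reconciling the real-part bound from Corollary~\ref{hQests} with the allowed size $\Re\Lambda\ge-\hb/2$ of the spectral parameter, and then check that the low-frequency peak in the symbol of $\D^2 B\inv(\lambda-\Q_-)\inv$ produces only $O(\ep^{-1})$ growth and not worse. Beyond this, the argument reduces to bookkeeping of $\ep$-powers combined with a standard high/low-frequency split, of the same flavor as the proof of Lemma~\ref{lbun-b}.
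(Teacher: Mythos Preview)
Your argument is correct and takes a genuinely different route from the paper's. The paper proves both estimates by invoking a general Fourier-space commutator lemma (Lemma~\ref{L.comm}, reproduced from \cite{PS2}): one writes $\|\PP[\QQ,g]\RR h\|_{L^2}\le M_\ep M_G\|h\|_{L^2}$ for suitable multipliers $\PP,\QQ,\RR$, and then bounds $M_\ep$ via a low/high-frequency split of the symbol $\hat\QQ(\ep k)-\hat\QQ(\ep\hat k)$. You instead exploit the explicit algebraic identity $[B\inv,f]=B\inv(2bf'\D+bf'')B\inv$ and reduce everything to $L^\infty$ bounds on derivatives of $q$ together with operator-norm bounds on Fourier multipliers.

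For \eqref{e.Bcqx} your approach is strictly simpler: it bypasses the symbol-difference machinery entirely. For \eqref{e.keyB} the trade-off is that you need the auxiliary bound $\|\D^2 B\inv(\lambda-\Q_-)\inv\|_\ap\le C\ep\inv$, which is one derivative beyond what Lemma~\ref{lbun-b} states. This bound is true and follows from the same low/high-frequency split used there: on the low-frequency region $|\ep\hxi|\le 4\ep^p$ one uses $|m_0|\gtrsim|\hxi|^2$ from \eqref{b.lomo} to get $|\hxi^2/m_\ep|\le C$, while on the high-frequency region the extra $\D$ is absorbed by the improved lower bound $\Re(\lambda-\hQ_-(\xi))\gtrsim \ep k^2$ for moderate $|k|$ (coming from the $k^2$ term in Corollary~\ref{hQests}) and by the $B\inv$ factor for large $|k|$. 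You correctly flag this as the main point requiring verification. The paper's route avoids this extension by keeping the resolvent factor inside $\hat\RR$ in the commutator lemma and never needing more than one derivative on it; your route trades that for a more transparent algebraic decomposition but pays with a slightly sharper multiplier estimate.
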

The first of these estimates is 
not difficult.  However, it turns out that the 
term $[B\inv,q]\D$ has operator norm
$\|[B\inv,q]\D\|_\ap = O(\ep^3)$, which is not small enough to neglect,
due to \qref{b.b4}.  
Consequently, we have 
to establish instead the more complicated commutator estimate
in \qref{e.keyB}.
To establish this we will use the commutator estimate 
in Lemma \ref{L.comm} below (from \cite{PS2}), and deal separately
with the high and the low frequencies.

The result of Theorem~\ref{t.Bun} follows directly 
from the symbol limits in \qref{l.F1} and the estimates in \qref{e.Rqpr},
\qref{e.Bcqx}, \qref{e.keyB} and \qref{l.F1},
using \qref{d.Rqmr} and the fact that $\isc$ is an isometry from
$L^2_\hap$ to $L^2_\ap$.


\subsection{Basic estimates}
We now prove Lemma~\ref{lbun-a}.
From \qref{hS2a} and \qref{b.Fmul} we infer
\begin{equation}\label{b.b1p}
\|\calS\inv\|_\ap = \sup_{k\in\R} |\hS(k+i\ap)\inv| = \sqrt{\frac ba}.
\end{equation}
Moreover, 
\begin{equation}\label{b.b2pa}
\|B\inv\|_\ap =\sup_{k\in \R} \left| \frac1{1+b(k^2-\ap^2)+2ibk\ap}\right| =
\frac1{1-b\ap^2},
\end{equation}
and since $2|\xi|\le 1+|\xi|^2$, 
\begin{equation}\label{b.b2pb}
\|B\inv\D\|_\ap =\sup_{k\in \R} \left| \frac{k+i\ap}
{1+b(k^2-\ap^2)+2ibk\ap}\right|
\le \sup_{k\in\R} \frac12 \left| \frac{1+k^2+\ap^2} 
{1+b(k^2-\ap^2)}\right| \le C.
\end{equation}
Next we invoke Corollary~\ref{hQests} 
with $\Lambda\in\Omst$ and $\lambda=\half\ep^3\Lambda$. 
Since $\hb\le\hap$ we have $\Re\Lambda+\hap\ge0$, so
\[
\Re\lambda+\ap c = 
\half \ep^3 \Re\Lambda +\ep\hap(1+\half\ep^2) \ge \ep\hap,
\]
hence
\begin{equation}\label{b.b3p}
\|(\lambda-\Q_+)\inv\|_\ap \le \frac1{\ep\hap}.
\end{equation}
Also, since
\[
\hS(i\ap)=\sqrt{\frac{1-a\ap^2}{1-b\ap^2}} \le 1+\frac12(b-a)\ep^2\hap^2,
\]
we have
\begin{eqnarray*}
\Re\lambda+\ap(c-\hS(i\ap)) &\ge& 
\frac{\ep^3}2 (\Re\Lambda  + \hap (1-(b-a)\hap^2)) \ge \frac{\ep^3\hb}{4},
\end{eqnarray*}
hence
\begin{equation}\label{b.b4p}
\|(\lambda-\Q_-)\inv\|_\ap \le \frac4{\ep^3\hb}.
\end{equation}

\subsection{KdV limit of Fourier multipliers}
Next we prove Lemma~\ref{lbun-b}.
By \qref{b.Fmul}, this is equivalent to showing that for $j,\hat j=0,1$,
\begin{equation}\label{l.F2}
\left| \frac{\hS(\ep\hxi)^{-\hat j}}{1+b\ep^2\hxi^2} 
\frac{\ep^3(i\hxi)^j}{\lambda-\hat\Q_-(\ep\hxi)} 
- \frac{ 2 (i\hxi)^j}{\Lambda-i\hxi-(b-a)i\hxi^3}
\right| \to 0  \quad\mbox{as $\ep\to0$},
\end{equation}
uniformly for $\hxi=\hat k+i\hap$ with $\hat k\in \R$, 
and uniformly for $\Lambda\in\Omst$, with $\lambda=\half\ep^3\Lambda$.
The factor in \qref{l.F2} that corresponds to the symbol of $\calS\inv B\inv$ satisfies
\begin{equation}\label{l.hSe}
\frac{\hS(\ep\hxi)^{-\hat j}}{1+b\ep^2\hxi^2}  = 1+O(\ep\hxi),
\end{equation}
and is uniformly bounded. To establish \qref{l.F2}, we will treat separately
the low and high frequencies.  

To start, we obtain a basic lower bound on the denominator of
the second term in \qref{l.F2},
\begin{equation}\label{b.mo}
\mo= {\Lambda-i\hxi-(b-a)i\hxi^3}.
\end{equation} 
Observe that for $\hxi=\hk+i\hap$ and $\Lambda\in\Omst$
we have the estimate
\begin{equation}\label{b.lomo}
\Re \mo = \Re\Lambda + \hap+(b-a)(3k^2-\hap^2)\hap 
\ge
\frac{\hb}2 + 3(b-a)\hap k^2 \ge \frac{\hap |\hxi|^2}C,
\end{equation}
provided $\half\hb C\ge \hap^3$ and $3(b-a)C\ge1$.

\subsubsection{Low frequency (KdV) regime: $|\ep\hxi|\le 4\ep^p$.}
We fix $p\in(\frac13,\frac12)$, and let
\[
I_0=\{\hxi=\hat k+i\hap: \hat k\in \R \mbox{\ and\ } |\ep\hxi|\le 4\ep^p\}.
\]
For frequencies in this regime we carry out a Taylor expansion
of the symbols in \qref{l.F2}, handling the remainder carefully.
Observe that
\[
\hS(\xi)^2 = 
\frac{1+a\xi^2}{1+b\xi^2}=1-(b-a)\xi^2+O(\xi^4), 
\]
so
\begin{equation}\label{ex.S}
\hS(\ep\hxi)= 1-(b-a)\frac{\ep^2\hxi^2}2 +O(\ep^4\hxi^4).
\end{equation}
Hence 
\begin{eqnarray}
\lambda-\hat\Q_-(\ep\hxi) &=& 
\frac{\ep^3}{2}\Lambda-i\ep\hxi(1+\frac{\ep^2}2) + i\ep\hxi \hS(\ep\hxi) 
\nonumber \\
&=& \frac{\ep^3}2\left( 
\Lambda-i\hxi-(b-a)i\hxi^3 + \hxi^3O(\ep^2\hxi^2) \right).
\label{lowf1} 
\end{eqnarray}

Let us define
\begin{equation}
\mep= 2\ep^{-3}(\lambda-\hQ_-(\ep\hxi)).
\end{equation}
Then by \qref{lowf1} we have that 
\[
E:= \mep-\mo = \hxi^3 O(\ep^2\hxi^2) =\hxi^3 O(\ep^{2p}) .
\]
Then due to the lower bound \qref{b.lomo}, for $\hxi\in I_0$ we have
\begin{equation}\label{i.Emo}
\left|\frac{E}{\mo}\right| \le \frac{C |\hxi^3| \ep^{2p}}{|\mo|} 
\le C|\hxi|\ep^{2p}\le C\ep^{3p-1},
\end{equation}
which tends to zero as $\ep\to0$. Then it follows from  \qref{i.Emo}
and \qref{b.lomo} that
\begin{equation}
\left| \frac{(i\hxi)^j}{\mep}-
 \frac{(i\hxi)^j}{\mo}\right| = \frac{ |\hxi|^j |E/\mo|}{|\mo||1+E/\mo|} 
\le C|\hxi|^{j-2}\ep^{3p-1} \le C \ep^{3p-1}
\end{equation}
and consequently \qref{l.F2} holds 
uniformly for $\hxi\in I_0$ and $\Lambda\in\Omst$.

\subsubsection{High frequency regime: $|\ep\hk|\ge 2\ep^p$. }
Consider $\hxi$ in the set 
\[
I_1=\{ \hxi=\hk+i\hap: |\ep\hk|\ge 2\ep^p\},
\]
and note that we have $I_0\cup I_1=\R+i\hap$ for sufficiently small
$\ep>0$. 
In this complementary regime we claim that the terms in \qref{l.F2} 
separately go to zero. Consider the second term first.
From the lower bound \qref{b.lomo}, we find that this term is bounded by
\begin{equation}\label{i.mzero}
\left| \frac{(i\hxi)^j}{\mo}\right| \le C|\hxi|\inv \le C\ep^{1-p} \to0.
\end{equation}
Now consider the first term in \qref{l.F2}.
With $\xi=k+i\ap=\ep\hxi$, for small enough $\ep$ we have 
$k^2-\ap^2>\frac12 k^2\ge 2\ep^{2p}$ and 
\[
\frac{b (k^2-\ap^2)} {1+b(k^2-\ap^2)} \ge 
\frac{ 2b \ep^{2p}} {1+2b\ep^{2p}}\ge b\ep^{2p}.
\]
By Corollary~\ref{hQests}, since $c-1=\frac12\ep^2$ and
$\Re\Lambda+\hap\ge0$ we then get
\begin{equation}
\Re\left(\frac{\ep^3}2\Lambda- \hQ_-(\ep\hxi)\right) 
\ge 
\frac{\ep^3}2 \Re\Lambda +\frac{\ep\hap}2
\left(\ep^2+(b-a)\ep^{2p}\right)
\ge 
\frac{\hap}2 (b-a) \ep^{1+2p}.
\end{equation}
By consequence we have that for sufficiently small $\ep>0$, 
\begin{equation}\label{l.fhi1}
\left| \frac {\ep^2}{\lambda-\hQ_-(\ep\hxi)} \right| \le C \ep^{1-2p} \to 0.
\end{equation}
Since by \qref{b.b2pa}-\qref{b.b2pb} we have
\begin{equation}\label{b.Bb}
\left| \frac{ \ep(i\hxi)^j}{1+b\ep^2\hxi^2}\right| \le C
\end{equation}
for $j=0,1$,
we see that the first term in \qref{l.F2} tends to zero,
uniformly for $\hxi\in I_1$ and $\Lambda\in\Omst$. 

This finishes the proof of the limit formula \qref{l.F1} for Fourier mulitpliers.

\subsection{Commutator estimates}
In this subsection we prove Lemma~\ref{lbun-c}.
The proof of the following commutator bounds, from \cite{PS2}, is short and is
reproduced here for completeness.
We write $\ip{k}=(1+|k|^2)^{1/2}$ below.

\begin{lemma}\label{L.comm}
Let $\PP$, $\QQ$ and $\RR$ be Fourier multipliers
with symbols $\hat\PP$, $\hat\QQ$ and $\hat\RR$ respectively, and  let $s\ge0$.
Let $g(x)=\ep^2 G(\ep x)$ where $G\colon\R\to\R$ is smooth and exponentially
decaying, and let $h\colon\R\to\R$ be smooth with compact support.
Then
\[
\| \PP[\QQ,g]\RR h\|_{L^2} \le M_\ep M_G \|h\|_{L^2},
\]
where
\[
M_\ep = \sup_{k,\hat k\in\R} \ep^2 \frac{
\hat\PP(\ep k) |\hat\QQ(\ep k)-\hat\QQ(\ep\hat k)| \hat\RR(\ep \hat k)
}{\ip{k-\hat k}^s},
\qquad 
M_G = \intR \ip{k}^s|\hat G(k)|\,\frac{dk}{2\pi}.
\]
\end{lemma}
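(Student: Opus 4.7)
The plan is to reduce the commutator estimate to a Young-type convolution bound in Fourier variables, after using the scaling $g(x)=\ep^2 G(\ep x)$ to isolate the $\ep$-dependence.

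First I would compute $[\QQ,g]h$ on the Fourier side. Using the standard identity for the commutator of a Fourier multiplier with multiplication by $g$, together with $\widehat{\RR h}(\ell)=\hat\RR(\ell)\hat h(\ell)$ and applying $\PP$ on the left,
\begin{equation*}
\widehat{\PP[\QQ,g]\RR h}(k)=\int \hat\PP(k)\,\hat g(k-\ell)\,\bigl(\hat\QQ(k)-\hat\QQ(\ell)\bigr)\,\hat\RR(\ell)\,\hat h(\ell)\,\frac{d\ell}{2\pi}.
\end{equation*}
Since $g(x)=\ep^2 G(\ep x)$, a direct calculation gives $\hat g(k-\ell)=\ep\hat G((k-\ell)/\ep)$. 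Now I rescale frequencies by $k=\ep\tilde k$, $\ell=\ep\tilde\ell$, which contributes an additional factor $\ep$ from $d\ell=\ep\,d\tilde\ell$; combining, the integrand carries exactly the factor $\ep^2\hat\PP(\ep\tilde k)(\hat\QQ(\ep\tilde k)-\hat\QQ(\ep\tilde\ell))\hat\RR(\ep\tilde\ell)$ that defines $M_\ep$.

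Second, I would split this factor as $\bigl[\ep^2\hat\PP(\ep\tilde k)|\hat\QQ(\ep\tilde k)-\hat\QQ(\ep\tilde\ell)|\hat\RR(\ep\tilde\ell)\ip{\tilde k-\tilde\ell}^{-s}\bigr]\cdot\ip{\tilde k-\tilde\ell}^{s}$, bounding the bracketed piece by $M_\ep$. Setting $F(\tilde k)=\widehat{\PP[\QQ,g]\RR h}(\ep\tilde k)$, $H(\tilde\ell)=|\hat h(\ep\tilde\ell)|$, and $K(\tilde k)=\ip{\tilde k}^{s}|\hat G(\tilde k)|$, this yields the pointwise convolution bound $|F(\tilde k)|\le M_\ep(K*H)(\tilde k)$ with respect to the measure $d\tilde\ell/(2\pi)$. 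By construction $\|K\|_{L^1(d\tilde k/2\pi)}=M_G$, so Young's inequality gives $\|F\|_{L^2(d\tilde k/2\pi)}\le M_\ep M_G\|H\|_{L^2(d\tilde\ell/2\pi)}$.

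Finally I would translate back using Plancherel and the scaling: the substitution $k=\ep\tilde k$ produces a factor $1/\ep$ in both $\|F\|_{L^2(d\tilde k/2\pi)}^{2}=\ep^{-1}\|\PP[\QQ,g]\RR h\|_{L^2}^{2}$ and $\|H\|_{L^2(d\tilde\ell/2\pi)}^{2}=\ep^{-1}\|h\|_{L^2}^{2}$, and these cancel to give the stated bound. The one thing that needs attention is careful bookkeeping of the $2\pi$'s and $\ep$'s through the rescaling, and verifying that the hypotheses ($h$ smooth with compact support, $G$ smooth and exponentially decaying) suffice to make the Fourier integrals absolutely convergent and $\ip{\cdot}^{s}\hat G\in L^{1}$; none of this is deep, and the real content of the lemma is simply the identification of the correct symbol-level quantity $M_\ep$ coming out of the commutator structure.
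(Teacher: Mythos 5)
Your proof is correct and follows essentially the same route as the paper's: write the commutator on the Fourier side, bound the symbol factor by $M_\ep\ip{\cdot}^s$, and apply Young's convolution inequality with the kernel whose $L^1$ norm is $M_G$. The only difference is that you rescale frequencies by $\ep$ explicitly before invoking Young, whereas the paper keeps the original frequencies and absorbs the $\ep$'s into the measure $d\hat k/(2\pi\ep)$ — a purely cosmetic distinction.
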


\noindent{\it Proof.} Using the Fourier transform and Young's
inequality, since $\hat g(k)=\ep\hat G(k/\ep)$, we have
\begin{eqnarray*}
\| \PP[\QQ,g]\RR h\|_{L^2}^2 &=& 
\intR \left| \intR \hat\PP(k) (\hat\QQ(k)-\hat\QQ(\hat k))\ep 
\hat G\left(
\frac{k-\hat k}{\ep} 
\right)
\hat\RR(\hat k) 
\hat h(\hat k)\,\frac{d\hat k}{2\pi}\right|^2 \frac{dk}{2\pi}
\\
&\le& M_\ep^2 \intR 
\left(\intR 
\ipbig{ \frac{k-\hat k}{\ep} }^s
\left|\hat G\left(
\frac{k-\hat k}{\ep} 
\right)\right| |\hat h(\hat k)|\,
\frac{d\hat k}{2\pi\ep}\right)^2\frac{dk}{2\pi}
\\
&\le& M_\ep^2 M_G^2 \|h\|_{L^2}^2.
\end{eqnarray*}

\subsubsection{Main commutator estimate.} Recall the key estimate 
\qref{e.keyB} that we need is
\begin{equation}\label{e.keyB2}
\| [B\inv,q]\D(\lambda-\Q_-)\inv\|_\ap \to 0 \qquad\mbox{as $\ep\to0$}
\end{equation}
in the $L^2_\ap$ operator norm. We apply the Lemma with $g=q$ 
so $G=\thep$ and $M_G=O(1)$, and take the symbols
\[
\hat\PP(k)=1, \qquad \hat\QQ(k) = 
\frac1{1+b(k+i\ap)^2},
\qquad \hat\RR(k)= \frac{i(k+i\ap)}{\lambda-\hQ_-(k+i\ap)},
\]
with $\ap=\ep\hap$. Taking any $s\ge1$ should work.
Then, writing $\xi=k+i\hap$, $\hat\xi=\hat k+i\hap$, 
since $(\ep\xi)^2-(\ep\hat\xi)^2=\ep(k-\hat k)(\ep\xi+\ep\hat \xi)$, we find
\begin{eqnarray*}
M_\ep &=& 
\sup_{k,\hat k\in\R} 
\frac{\ep^2}{\ip{k-\hat k}^s}
\left| 
\frac{1}{1+b\ep^2\xi^2} -
\frac{1}{1+b\ep^2\hat \xi^2} 
\right| |\hat\RR(\ep\hat k)|
\\ &\le&
\sup_{k,\hat k\in\R} 
\frac{ b|\ep\xi+\ep\hat\xi|}
{ |1+b\ep^2\xi^2| |1+b\ep^2\hat\xi^2| }
\frac{\ep^3 |\ep\hat\xi| }{|\lambda-\hQ_-(\ep\hat\xi)|}
\\ &\le&
C\sup_{\hat k\in\R} 
\frac{ (1+b|\ep\hat\xi|)|\ep\hat\xi|}
{|1+b\ep^2\hat\xi^2| }
\frac{\ep^3}{|\lambda-\hQ_-(\ep\hat\xi)|}
\end{eqnarray*}
Here we used the bound \qref{b.Bb} that follows from 
\qref{b.b2pa}-\qref{b.b2pb}.
We now treat separately the low and high frequency regimes.
In the low frequency regime $|\ep\hat\xi|\le 4\ep^p$ we get 
the bounds 
\begin{equation}
\frac{ (1+b|\ep\hat\xi|)|\ep\hat\xi|}
{|1+b\ep^2\hat\xi^2| }\le C\ep^p, \qquad
\frac{\ep^3}{|\lambda-\hQ_-(\ep\hat\xi)|} \le C,
\end{equation}
and in the high-frequency regime $|\ep\hat k|\ge 2\ep^p$ we have 
\begin{equation}
\frac{ (1+b|\ep\hat\xi|)|\ep\hat\xi|}
{|1+b\ep^2\hat\xi^2| }\le C, \qquad
\frac{\ep^3}{|\lambda-\hQ_-(\ep\hat\xi)|} \le C \frac{\ep^3}{\ep^{1+2p}}
=C\ep^{2-2p} ,
\end{equation}
Consequently $M_\ep\to0$ as $\ep\to0$, proving \qref{e.keyB2}.

\subsubsection{Simple commutator estimate.} In order to prove 
\begin{equation}\label{b.s3p}
\|[B\inv,q']\|_\ap \le C\ep^4,
\end{equation}
we take $g(x)=q'(x)=\ep^3\thep(\ep x)$, so $G(x)=\ep\thep(x)$ and $M_G\le C\ep$,
and take $\hat\QQ(k)$ as above,  and $\hat\PP(k)=\hat\RR(k)=1$.  Then the Lemma now yields
\begin{eqnarray*}
M_\ep \le 
\sup_{k,\hat k\in\R} 
\frac{ \ep^3 b|\ep\xi+\ep\hat\xi|}
{ |1+b\ep^2\xi^2| |1+b\ep^2\hat\xi^2| }
\le C\ep^3,
\end{eqnarray*}
whence \qref{b.s3p} follows since $M_\ep M_G\le C\ep^4$.

This finishes the proof of the bundle convergence theorem \ref{t.Bun}.

\subsection{Proof of Theorem~\ref{t.smlin}}\label{ssec:bunpf}
\begin{lemma} \label{l.kdvlim}
$\|\WW_0(\Lambda)-I\|_\hap \to 0$ 
as $|\Lambda|\to\infty$ with $\Re\Lambda\ge-\hb/2$.
\end{lemma}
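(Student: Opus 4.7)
The plan is to exploit that
$$\WW_0(\Lambda) - I = (3\D\thez)T_\Lambda, \qquad T_\Lambda := (\Lambda - \D + (b-a)\D^3)^{-1},$$
reducing matters to a Fourier multiplier norm estimate on $L^2_\hap$. Since $(3\D\thez)\rho = 3\thez'\rho + 3\thez\,\D\rho$ and since $\D$ and $T_\Lambda$ are both Fourier multipliers (hence commute), I would decompose
$$\WW_0(\Lambda) - I = 3\thez'\,T_\Lambda + 3\thez\cdot\D T_\Lambda.$$
Multiplication by the bounded functions $\thez,\thez'$ is bounded on $L^2_\hap$ with norm at most the $L^\infty$ norm, so
$$\|\WW_0(\Lambda) - I\|_\hap \le 3\|\thez'\|_{L^\infty}\|T_\Lambda\|_\hap + 3\|\thez\|_{L^\infty}\|\D T_\Lambda\|_\hap ,$$
and it suffices to show that both Fourier multiplier norms on the right tend to $0$ as $|\Lambda|\to\infty$ in $\Omst$.

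By \qref{b.Fmul} these two norms equal $\sup_{k\in\R}|\mo(k+i\hap)|^{-1}$ and $\sup_{k\in\R}|(k+i\hap)\mo(k+i\hap)^{-1}|$ respectively, with $\mo$ as in \qref{b.mo}. Expanding $\mo$ at $\xi=k+i\hap$ gives
$$\Re\mo = \Re\Lambda + \hb + 3(b-a)\hap k^2, \qquad \Im\mo = \Im\Lambda - k\bigl(1 + (b-a)(k^2 - 3\hap^2)\bigr).$$
The lower bound $\Re\mo\ge\hb/2 + 3(b-a)\hap k^2$ on $\Omst$ (already in \qref{b.lomo}) yields $(\Re\mo)^2\gtrsim 1+k^4$ uniformly in $\Lambda\in\Omst$, so $|\mo|^{-1}\lesssim 1/(1+k^2)$ and $|(k+i\hap)/\mo|\lesssim 1/(1+|k|)$ uniformly in $\Lambda\in\Omst$. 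In particular, for any fixed $K>0$ the portions of the two suprema taken over $|k|\ge K$ can be made arbitrarily small by taking $K$ large, independently of $\Lambda$.

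It remains to show that on each bounded interval $|k|\le K$, $|\mo(k+i\hap)|\to\infty$ uniformly as $|\Lambda|\to\infty$ in $\Omst$. I would argue by contradiction: along any sequence $\Lambda_n\in\Omst$ with $|\Lambda_n|\to\infty$ and $k_n\in[-K,K]$ with $|\mo_n(k_n+i\hap)|$ bounded, the estimate on $\Re\mo_n$ first forces $\Re\Lambda_n$ to be bounded, so $|\Im\Lambda_n|\to\infty$; but $P(k):=k+(b-a)k^3-3(b-a)k\hap^2$ is bounded on $[-K,K]$, so $|\Im\mo_n|=|\Im\Lambda_n-P(k_n)|\to\infty$, a contradiction. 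Combining with the tail estimate gives both uniform decays and hence the lemma. The only obstacle is the routine bookkeeping of the tail versus bounded-$k$ regimes; no compactness or abstract functional-analytic input is required, since the cubic smoothing $(b-a)\D^3$ in the KdV resolvent provides ample decay in the Fourier symbol.
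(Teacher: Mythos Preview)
Your proof is correct and follows essentially the same strategy as the paper: split the frequency variable into a bounded region (where $|\mo|\to\infty$ because $|\Lambda|\to\infty$) and a tail region (where the cubic term in $\mo$ provides uniform decay). The paper's version is terser because it recycles the $\ep$-parametrized high/low frequency estimates already established in the bundle convergence argument, whereas you carry out a self-contained splitting with a free cutoff $K$; the underlying mechanism is identical.
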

\begin{proof}
This follows from the estimate \qref{i.mzero} for $|\ep\hk|\ge 2\ep^p$,
together with the estimate
\[
\left| \frac{(i\hxi)^j}{\mo}\right| \le \frac{C\ep\inv}
{|\Lambda|-C\ep^{-3}} \le C\ep^{1-p}
\]
for $|\ep\hk|\le 1$, $j=0,1$ and
for $|\Lambda|$ sufficiently large depending on $\ep$.
\end{proof}

As a consequence of Lemma \ref{l.kdvlim},
there exists $M_0>0$ such that for $\ep>0$ sufficiently small, 
$\|\WW_0(\Lambda)-I\|_\hap <\frac14$. 
Applying the bundle convergence theorem \ref{t.Bun}, we infer that
for small enough $\ep$,  $\wep(\Lambda)$ is invertible 
for $\Re\Lambda\ge -\hb/2$ and $|\Lambda|\ge M_0$.
This implies $\Lc$ has no eigenvalue satisfying
$\Re\lambda\ge -\frac14\ep^3\hb$ and $|\lambda|\ge\frac12\ep^3M_0$.

Moreover, with
$\hat\Omega =\{\Lambda: |\Lambda|\le M_0,\, \Re\Lambda\ge -\hb/2\}$,
then for small enough $\ep>0$, 
\begin{equation}
\|(\wep(\Lambda)-\WW_0(\Lambda))\WW_0(\Lambda)\inv\|_\hap<1
\end{equation}
for all $\Lambda\in \D\hat\Omega$. 
By the Rouch\'e theorem of Gohberg and Sigal \cite{GS}, it follows that
the total null multiplicity of characteristic values of the bundle $\wep(\Lambda)$
for $\Lambda\in \hat\Omega$ agrees with that of $\WW_0(\Lambda)$.
Denoting these multiplicities respectively by $m(\hat\Omega,\wep)$
and $m(\hat\Omega,\WW_0)$, we have 
\begin{equation}
m(\hat\Omega,\wep)
= 
m(\hat\Omega,\WW_0).
\end{equation}
As discussed in Appendix C, the null multiplicity of the characteristic value
$0$ is at least 2 for $\ww$, and $m(\hat\Omega,\WW_0)\le 2$.  
Hence $m(\hat\Omega,\wep)=2$, so $\Lambda=0$ is the only characteristic value
of $\wep$ in $\hat\Omega$.

This implies that for all nonzero $\lambda$
satisfying $\Re\lambda\ge -\frac14\ep^3\hb$,
$\ww(\lambda)$ is invertible and so $\lambda$ is not an eigenvalue of $\Lc$.
This concludes the proof of Theorem~\ref{t.smlin}.


\part{Nonlinear stability}

\section{Decomposition of perturbed solitary waves}
In this part we prove Theorems \ref{thm:1} and \ref{thm:2}.
Let 
\[
\mathcal{M}=\{u_c(\cdot-x_0)\mid c^2>1,\ x_0\in\R\}
\] 
denote the two-dimensional manifold of solitary-wave states for the
Benney-Luke system \qref{eq:bousys}. 
To describe the behavior of solutions near $\mathcal{M}$, 
we will represent them using the ansatz
\begin{equation}
  \label{eq:decomp1}
u(t,x)=u_{c(t)}(y)+v(t,y), \qquad y=x-x(t).
\end{equation}
Here $u_{c(t)}$ comprises the main solitary-wave part of the solution and
$v$ is a remainder. The modulating parameters $c(t)$ and $x(t)$ 
describe the speed and phase of the main solitary wave at time $t$.
Substituting \eqref{eq:decomp1} into \eqref{eq:bousys}
and noting $cu_c'+Lu_c+f(u_c)=0$, we require
\begin{equation}
  \label{eq:v}
  \pd_t v=\mathcal{L}_{c(t)}v+(\dot{x}(t)-c(t))\pd_yv+l(t)+f(v),
\end{equation}
where
$\mathcal{L}_c=L+c\pd_y+f'(u_c)$ and $\dot x=dx/dt$ and
\begin{align*}
& l(t)=
(\dot{x}(t)-c(t))\pd_yu_{c(t)}(y)
-\dot{c}(t)\pd_cu_{c(t)}(y).
\end{align*}

If we were only going to consider initial data that is exponentially
well-localized, we could impose the nonsecularity condition
$P_{c(t)}v(t)=0$ at this point and study \qref{eq:v} in an
exponentially weighted space $H^1_\ap$, using the exponential decay
estimate supplied by Lemma~\ref{lem:lineardecay2}.  However, this is not
feasible for arbitrary small-energy perturbations of solitary waves. The
reason is that the spectral projection $P_c$ is not continuous on the
energy space $H^1$, due to the fact that an element of the
generalized kernel of the adjoint $\Lc^*$ does not decay as $x\to\infty$.

To deal with this difficulty, as in \cite{Miz09}
we split the remainder $v(t)$ into a part generated by free
propagation from the initial perturbation, and a well-localized part 
arising from interaction with the main solitary wave. We write
\begin{equation}
  \label{eq:decomp2}
v(t,y)=v_1(t,x)+v_2(t,y),
\end{equation}
where $v_1(t,x)$ is the solution to
\begin{equation}
  \label{eq:v1}
\left\{
  \begin{aligned}
& \pd_tv_1=Lv_1+f(v_1)\quad\text{for $(t,x)\in\R^2$},\\
& v_1(0,x)=v_0(x)\quad\text{for $x\in\R$.}    
  \end{aligned}\right.
\end{equation}
The freely propagating perturbation $v_1$ will decay locally in a coordinate 
frame following the main solitary wave, due to the viral estimates that
we establish in section~\ref{sec:virial}.  The remainder $v_2$ satisfies
\begin{equation}
  \label{eq:v2}
  \left\{\begin{aligned}
& \pd_tv_2=\mathcal{L}_{c(t)}v_2+(\dot{x}-c)\pd_yv_2+l+\kone+\ktwo,\\
& v_2(0,y)=0,
    \end{aligned}\right.
\end{equation}
where
\begin{equation}
  \label{eq:ks}
\kone=f'(u_{c(t)})\ti v_1(t),
\qquad
\ktwo=f(v(t))-f(\ti v_1(t)),
\qquad \ti v_1(t,y)=v_1(t,y+x(t)).
\end{equation}
This part will be `slaved' to $v_1$ via the estimates in exponentially
weighted norm that are provided in 
Lemma~\ref{lem:lineardecay2}.
To enable the use of that Lemma and fix the decomposition,
we will impose the constraint $P_{c(t)}v_2(t)=0$. 
In terms of the elements $\z_{1,c}^*$, $\z_{2,c}^*$ 
described in Appendix \ref{apx:Multzero}, 
that span the generalized kernel of $\Lc^*$,
this means
\begin{equation}
  \label{eq:orth}
\la v_2(t),\z^*_{1,c(t)}\ra=0,\quad 
\la v_2(t),\z^*_{2,c(t)}\ra=0\,.
\end{equation}

{\em Notation.}
Some additional notation to be used in Part II is as follows.
We will write $g\lesssim h$ to mean that 
there exists a positive constant such that $g\le Ch$. 
For $\R^2$-valued functions $g=(g_1,g_2)$ and $h=(h_1,h_2)$ let 
\[
\la g,h\ra=\int_\R (g_1(x)h_1(x)+g_2(x)h_2(x))\,dx.
\] 

For a Banach space $X$ 
we denote by $B(X)$ the space of 
all continuous linear operators on $X$.

\section{Local existence and continuation of the decomposition}
In this section we establish the validity of the representation
described above in \qref{eq:decomp1}--\qref{eq:orth}.
We first show that $u(t)-v_1(t)$ remains in $H^1_\alpha$ whenever
$0\le \alpha<\alpha_{c_0}.$  
Recall $\apc$ from \qref{e.sol2} is the exponential decay rate of the
wave profile $u_c$, and $\apc<b^{-1/2}$.
\begin{lemma}
  \label{lem:uv1}
Let $c_0>1$, $x_0\in\R$ and $v_0\in H^1$. Let $u(t)$ be a solution to
\eqref{eq:bousys} satisfying $u(0)=u_{c_0}(\cdot-x_0)+v_0$ and let 
$v_1$  be a solution to \eqref{eq:v1}. Then 
for every $\alpha\in(-\alpha_{c_0},\alpha_{c_0})$, 
\begin{equation}
  \label{eq:u-v1}
u(t)-v_1(t)\in 
C([0,\infty);H^1_\alpha(\R;\R^2))\cap C^1([0,\infty);L^2_\alpha(\R;\R^2)).
\end{equation}
\end{lemma}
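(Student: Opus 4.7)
The plan is to set $w(t)=u(t)-v_1(t)$ and show that $w$ solves a globally well-posed semilinear Cauchy problem in $H^1_\alpha$; the $C^1$-regularity in $L^2_\alpha$ will then follow directly from the equation itself. Subtracting \qref{eq:v1} from \qref{eq:bousys} and using that $f$ from \qref{d.Lf} is quadratic, one obtains
\[
\pd_t w = Lw + f(u)-f(v_1) = Lw + f(w) + g(w,v_1(t)),
\]
where $g(\cdot,\cdot)$ is the fixed symmetric bilinear cross term coming from the polarization of $f$. The initial datum $w(0)=u_{c_0}(\cdot-x_0)$ decays like $\sech^2$ by \qref{e.sol2} and therefore lies in $H^1_\alpha$ for every $|\alpha|<\alpha_{c_0}$.

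The next step is semigroup generation and local well-posedness. The same diagonalization \qref{e.sim}, specialized to $c=0$, together with the boundedness of the symbols $\pm i\xi\hS(\xi)$ on the strip $0\le\Im\xi\le\alpha$, shows, exactly as in the proof of Lemma~\ref{l.lin1}(ii), that $L$ generates a $C^0$ group on both $L^2_\alpha$ and $H^1_\alpha$ for $|\alpha|<b^{-1/2}$. Since $r\pd_xq+2q\pd_xr=\pd_x(qr)+q\pd_xr$, the operator $B^{-1}\pd_x$ maps $L^2_\alpha$ boundedly into $H^1_\alpha$; combined with the algebra property of $H^1$ and the embedding $H^1\hookrightarrow L^\infty$ (which clearly carries over to the weighted setting on either factor), this yields that $f$ is locally Lipschitz on $H^1_\alpha$ and that $w\mapsto g(w,v_1(t))$ is a bounded linear map on $H^1_\alpha$ with norm controlled by $\|v_1(t)\|_{H^1}$. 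A contraction argument on the Duhamel formula then produces a unique local mild solution $w\in C([0,T];H^1_\alpha)$.

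Global continuation follows from a Gronwall estimate. Since the energy \qref{eq:energy} is conserved for both $u$ and $v_1$, the quantities $\|u(t)\|_{H^1}$ and $\|v_1(t)\|_{H^1}$, and hence $\|w(t)\|_{H^1}$, are bounded uniformly in time by a constant depending only on the initial energy. Combined with the mapping properties above this gives $\|f(w)+g(w,v_1)\|_{H^1_\alpha}\le C\|w\|_{H^1_\alpha}$, and Gronwall then yields an exponential a priori bound $\|w(t)\|_{H^1_\alpha}\le Me^{Mt}\|w(0)\|_{H^1_\alpha}$, ruling out finite-time blow-up and giving $w\in C([0,\infty);H^1_\alpha)$. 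Once this is established, the right-hand side $Lw+f(w)+g(w,v_1)$ lies in $C([0,\infty);L^2_\alpha)$, which is exactly the statement that $w\in C^1([0,\infty);L^2_\alpha)$.

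The main delicate point is the interplay between the quadratic nonlinearity and the exponential weight: multiplication by an unweighted $H^1$ function must preserve $H^1_\alpha$, which does hold thanks to $H^1\hookrightarrow L^\infty$, and the derivative loss inherent in $f$ must be absorbed by the smoothing of $B^{-1}$, so that no net loss of regularity occurs in the weighted norm. Everything else is a straightforward application of semigroup theory applied to a known globally bounded forcing coefficient $v_1(t)$.
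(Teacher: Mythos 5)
Your proof is correct in substance and rests on the same two pillars as the paper's: the $C^0$ group generated by $L$ on the weighted spaces, and the bilinear estimate that places one factor of each product in $L^\infty$ via the unweighted $H^1$ norm and the other factor in the weighted space, with $B^{-1}\pd_x$ absorbing the derivative. The difference lies in how the equation for $w=u-v_1$ is closed. The paper writes $f(u)-f(v_1)$ as $F(t)w$, where $F(t)$ is a \emph{linear} operator whose coefficients are the already-known global $H^1$ solutions $u(t)$ and $v_1(t)$; since $F\in C(\R;B(H^1_\alpha))$ with norm controlled by $1+\|u(t)\|_{H^1}+\|v_1(t)\|_{H^1}$, global existence in $C([0,\infty);H^1\cap H^1_\alpha)$ is immediate from linear semigroup theory, with no continuation argument. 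You instead retain the quadratic self-interaction $f(w)$, which forces a semilinear local theory plus a blow-up alternative, and you then need the a priori bound $\sup_t\|w(t)\|_{H^1}<\infty$ (obtained from energy conservation \qref{eq:energy} for $u$ and for $v_1$) to run Gronwall and exclude finite-time blow-up of the weighted norm. Both routes succeed; the linear reformulation is simply shorter and avoids the continuation step.

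One imprecision should be fixed: $f$ is \emph{not} locally Lipschitz on $H^1_\alpha$ alone, because $H^1_\alpha\not\hookrightarrow L^\infty$ (for $\alpha>0$ an element of $H^1_\alpha$ may grow like $e^{-\alpha x}$ as $x\to-\infty$), so $H^1_\alpha$ is not an algebra. The estimates you actually invoke are the correct ones, namely $\|gh\|_{L^2_\alpha}\le\|g\|_{L^\infty}\|h\|_{L^2_\alpha}$ with the $L^\infty$ factor controlled by the unweighted $H^1$ norm, but these establish local Lipschitz continuity only on the intersection $H^1\cap H^1_\alpha$, and that is the space in which the contraction mapping and the continuation argument must be carried out. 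With that correction, together with the observation (which you use implicitly when appealing to energy conservation) that the locally constructed solution coincides with $u-v_1$ by uniqueness in $H^1$, the argument is complete.
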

\begin{proof}
By standard well-posedness arguments, $u$, $v_1$, and 
$w=u-v_1$ lie in $C(\R;H^1)$.
Writing
\[
v_1= \pmat{q_1\\ r_1}, 
\quad 
w=\pmat{\tilde{q}\\ \tilde{r}},
\]
we find $w$ satisfies a linear equation
\begin{equation}
  \label{eq:w}
\left\{  \begin{aligned}
& \D_tw=Lw+F(t)w,\\
& w(0)=u_{c_0}(\cdot-x_0),    
  \end{aligned}\right.
\end{equation}
where
\[
F(t)w=-B\inv\begin{pmatrix}0 \\ 
\D_x(r\ti q+2q\ti r)+ \ti q\D_x(2r_1-r)+\ti r\D_x(q_1-2q)
\end{pmatrix}.
\]
Since $B\inv$ and $B\inv\D_x$ are bounded on $L^2_\ap$,
we have 
\[
\|F(t)w(t)\|_{H^1_\alpha} \lesssim
(1+\|u(t)\|_{H^1}+\|v_1(t)\|_{H^1})\|w\|_{H^1_\alpha},
\]
and $F(t)\in C(\R;B(H^1_\alpha))$. 
Since $e^{tL}$ is a $C^0$-semigroup on 
both spaces $H^1$ and $H^1\cap H^1_\alpha$, and $u_{c_0}$ lies there,
it follows that 
\eqref{eq:w} has a solution in $C([0,\infty);H^1\cap H^1_\alpha)$
which agrees with $u-v_1$ by uniqueness in $H^1$.
This proves \eqref{eq:u-v1}. 
\end{proof}

Next, we associate a unique phase/speed pair to each $u$ near
$u_{c_0}$ in $L^2_\ap$.
Here and below we will make use of the following pointwise estimates 
for the neutral and adjoint neutral modes 
$\z_{j,c}$, $\z_{j,c}^*$,
satisfied uniformly for $c$ in a neighborhood of $c_0$:
\begin{gather} 
|\zone| +|\zatwo|
\lesssim e^{-\alpha_c|y|},
\qquad |\ztwo| + |\D_c\zatwo| 
\lesssim 
e^{-\alpha_c|y|}
(1+|y|) ,
\label{i:adj1}\\
|\zaone| \lesssim 
\min(1, e^{\alpha_c y}(1+|y|) ),
\qquad
|\D_c\zaone| 
\lesssim 
\min(1, e^{\alpha_c y}(1+|y|^2) ).
\label{i:adj2}
\end{gather}

\begin{lemma}
 \label{lem:decomp}
Let $c_0>1$ and $\alpha\in(0,\alpha_{c_0})$. 
Then there exist positive constants $\delta_0$, $\delta_1$ such that
with 
\[
U_0=\{w\in L^2_\alpha: 
\|w-u_{c_0}\|_{L^2_\alpha}<\delta_0\},
\quad
U_1=\{(\gamma,c)\in\R^2:|\gamma|+|c-c_0|<\delta_1\},
\]
then for each $w\in U_0$ there is a unique $(\gamma,c)\in U_1$ satisfying
\begin{gather*}
\la w(\cdot+\gamma)-u_c,\z_{1,c}^*\ra
=\la w(\cdot+\gamma)-u_c,\z_{2,c}^*\ra=0\,.
\end{gather*}
Further, the mapping $w\mapsto \Phi(w)=(\gamma,c)$ is smooth.
\end{lemma}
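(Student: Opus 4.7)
The natural approach is the implicit function theorem. Define the two-component map
$$G(\gamma, c; w) = \bigl(\la w(\cdot+\gamma) - u_c,\, \zeta^*_{1,c}\ra,\ \la w(\cdot+\gamma) - u_c,\, \zeta^*_{2,c}\ra\bigr),$$
so that the orthogonality conditions in the lemma amount to $G(\gamma, c; w) = 0$, and clearly $G(0, c_0; u_{c_0}) = 0$. First I would check that $G$ is well-defined on a neighborhood $U_1 \times U_0$: the pointwise bounds \qref{i:adj1}--\qref{i:adj2} give $\zeta^*_{j,c_0} \in L^2_{-\alpha}$ for $0 < \alpha < \alpha_{c_0}$, which pairs with $w(\cdot+\gamma) - u_c \in L^2_\alpha$ to produce a finite inner product. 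Next I would verify that $G$ is $C^1$ jointly in $(\gamma, c, w)$ on $U_1 \times U_0$; the only nontrivial point is regularity in $(\gamma, c)$, which follows from transferring the translation onto $\zeta^*_{j,c}$ and using that $\zeta^*_{j,c}$ and its derivative satisfy the same decay bounds, together with smoothness of $c \mapsto u_c$ in $L^2_\alpha$ (immediate from the explicit formula \qref{e.sol2}) and of $c \mapsto \zeta^*_{j,c}$ in $L^2_{-\alpha}$ (from the construction in Appendix~\ref{apx:Multzero}).

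The key step is computing the Jacobian of $G$ with respect to $(\gamma, c)$ at $(0, c_0; u_{c_0})$. A direct differentiation, using the product rule and the fact that the factor $u_{c_0} - u_c$ vanishes at $c = c_0$, gives
$$\pd_\gamma G_j\big|_{(0,c_0;u_{c_0})} = \la \zeta_{1,c_0},\, \zeta^*_{j,c_0}\ra, \qquad \pd_c G_j\big|_{(0,c_0;u_{c_0})} = \la \zeta_{2,c_0},\, \zeta^*_{j,c_0}\ra,$$
recalling $\zeta_{1,c} = \pd_x u_c$ and $\zeta_{2,c} = -\pd_c u_c$. The Jacobian is therefore the $2\times 2$ matrix $M_{j,i} = \la \zeta_{i,c_0},\, \zeta^*_{j,c_0}\ra$, and its invertibility is precisely the biorthogonality of the Jordan chain $\{\zeta_{1,c_0}, \zeta_{2,c_0}\}$ of $\Lco$ at $\lambda = 0$ against the dual Jordan chain $\{\zeta^*_{1,c_0}, \zeta^*_{2,c_0}\}$ of $\Lco^*$. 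Since $\lambda = 0$ is an isolated eigenvalue of finite algebraic multiplicity for $\Lco$ by Lemma~\ref{l.lin1}, the Riesz spectral projection $P_{c_0}$ and its adjoint furnish a nondegenerate pairing between the two two-dimensional generalized kernels, and the adjoint modes constructed in Appendix~\ref{apx:Multzero} are normalized so that $M$ is a specific invertible matrix.

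With the Jacobian inverted, the implicit function theorem produces $\delta_0, \delta_1 > 0$, a smooth map $\Phi: U_0 \to U_1$ with $\Phi(u_{c_0}) = (0, c_0)$, and the local uniqueness of solutions in $U_1$. I expect the main obstacle to be purely bookkeeping: pinning down the precise regularity of $c \mapsto \zeta^*_{j,c}$ as a map into $L^2_{-\alpha}$ and verifying the biorthogonality relation for the particular choice of dual basis built in Appendix~\ref{apx:Multzero}. No new analysis beyond what is already assembled in Part I is required—the lemma is essentially an inverse-function statement at a nondegenerate reference point on the solitary-wave manifold $\mathcal{M}$.
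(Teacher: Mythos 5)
Your proposal is correct and follows essentially the same route as the paper: define the map $G$, verify smoothness via the regularity of $(\gamma,c)\mapsto \z_{j,c}^*(\cdot-\gamma)$ in $L^2_{-\ap}$, compute the Jacobian in $(\gamma,c)$ at the reference point (where the extra term from differentiating $\z_{j,c}^*$ in $c$ drops because $w(\cdot+\gamma)-u_c$ vanishes there), and invoke the implicit function theorem. The paper simply reads off that the Jacobian equals the identity from the biorthogonality normalization \qref{e:zaeq}, which is exactly the nondegenerate pairing you describe.
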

\begin{proof} 
The map $G:L^2_\ap \times\R \times (0,\infty) \to\R^2$ defined by
\begin{gather}
 G(w,\gamma,c)=
\pmat{
\ip{w-u_c(\cdot-\gamma),\zaone(\cdot-\gamma)} \\
\ip{w-u_c(\cdot-\gamma),\zatwo(\cdot-\gamma)}
}
\end{gather}
is smooth since $(\gamma,c)\mapsto \z_{j,c}^*(\cdot-\gamma)$ is
smooth with values in $L^2_{-\ap}=(L^2_\ap)^*$, due to 
the definitions in \qref{d:zz} and Lemma~\ref{l:adj}.
Moreover, $G(u_c,0,c)=0$ and 
\[
\frac{\pd G}{\pd(\gamma,c)}
(u_c,0,c) = 
\pmat{
\ip{\D_y u_c ,\zaone} & \ip{-\D_c u_c ,\zaone}
\\ \ip{\D_y u_c ,\zatwo} &\ip{-\D_c u_c ,\zatwo}
}
= \pmat{1&0\\0&1},
\]
due to \qref{e:zaeq}.
Thus the result follows immediately from the implicit function theorem. 
\end{proof}

Now we establish the local existence of the desired representation of
solutions, and we provide a continuation principle that ensures its existence
as long as a suitable distance to $\mathcal{M}$ and the wave-speed variation
remain small.  Since the manifold $\mathcal{M}$ is translation
invariant, we need only to use the local
coordinates in Lemma~\ref{lem:decomp}, without needing to study the global
geometry of $\mathcal{M}$ as in \cite{FP2}.

\begin{proposition}\label{cor:decomp}
Make the assumptions of Lemma~\ref{lem:uv1}, let $0<\ap<\ap_{c_0}$,
and let $\delta_0$, $\delta_1$ be given by Lemma~\ref{lem:decomp}.
Then there exist $T>0$ and $C^1$ functions $x(t)$, $c(t)$ on $[0,T)$
satisfying 
\begin{equation}\label{eq:xc0}
x(0)=x_0,\quad c(0)=c_0,\quad |c(t)-c_0|<\delta_1,
\end{equation}
such that if $v_2$ is defined by the decomposition 
\begin{equation}\label{e:udecom}
u(t,x)=u_{c(t)}(y)+ v_1(t,x)+v_2(t,y), \qquad y=x-x(t),
\end{equation}
then the orthogonality relations \qref{eq:orth} hold for all $t\in[0,T)$. 

Moreover, if $T<\infty$ and 
\begin{equation}\label{eq:supv}
\sup_{t\in[0,T)} 
\| u_{c(t)} + v_2(t) -u_{c_0}\|_{L^2_\ap} <\delta_0,
\end{equation}
then $T$ is not maximal.
\end{proposition}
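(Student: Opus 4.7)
The plan is to apply the implicit function theorem to the $\R^2$-valued map
\begin{equation*}
F(t,x,c)=\bigl(\la h(t)(\cdot+x)-u_c,\z^*_{j,c}\ra\bigr)_{j=1,2},
\end{equation*}
where $h(t)=u(t)-v_1(t)$. By Lemma~\ref{lem:uv1}, $h\in C([0,\infty);H^1_\ap)\cap C^1([0,\infty);L^2_\ap)$; combined with the smoothness of the assignment $c\mapsto \z^*_{j,c}\in L^2_{-\ap}$ (already used in the proof of Lemma~\ref{lem:decomp}) and the decay estimates \qref{i:adj1}--\qref{i:adj2}, this makes $F$ well-defined and $C^1$ on a suitable open subset of $[0,\infty)\times\R\times(1,\infty)$.

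For local existence, note that $h(0)=u_{c_0}(\cdot-x_0)$, so $F(0,x_0,c_0)=0$. The derivative of $F$ with respect to $(x,c)$ at $(0,x_0,c_0)$ equals the matrix $(\la\z_{k,c_0},\z^*_{j,c_0}\ra)_{j,k=1,2}$, which is the identity by the biorthogonality relations \qref{e:zaeq} from Appendix~\ref{apx:Multzero}. The implicit function theorem then produces $C^1$ functions $x(t),c(t)$ on some interval $[0,T_1)$ with $x(0)=x_0$ and $c(0)=c_0$. By construction, $v_2(t,y):=h(t,y+x(t))-u_{c(t)}(y)$ satisfies the orthogonality relations \eqref{eq:orth}, since these are exactly $F(t,x(t),c(t))=0$.

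For the continuation principle, assume $T<\infty$ and that \eqref{eq:supv} holds. Differentiating the identity $F(t,x(t),c(t))=0$ in $t$ yields an ODE of the form $(\dot x,\dot c)^T=-J(t)^{-1}\pd_tF(t,x(t),c(t))$. A direct computation shows that $J(t)=I+O(\|v_2(t)\|_{L^2_\ap}+|c(t)-c_0|)$, so that for $\delta_0,\delta_1$ small enough $J(t)$ is uniformly invertible on $[0,T)$. Using the Benney-Luke equation \eqref{eq:bousys}, Lemma~\ref{lem:uv1}, and the bound \eqref{eq:supv}, one then sees that $\|\pd_t h(t)\|_{L^2_\ap}$ remains uniformly bounded on $[0,T)$; hence $(\dot x(t),\dot c(t))$ is bounded, and the limits $x(T^-)$, $c(T^-)$ exist. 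Reapplying the implicit function theorem at $t=T$, where the Jacobian is still close to the identity, yields an extension of the decomposition beyond $T$, contradicting maximality of $T$.

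The main technical obstacle is keeping the definition of $F$ and the computation of $\pd_tF$ under uniform control despite the fact that $v_1$ itself does not lie in any exponentially weighted space. Lemma~\ref{lem:uv1} is precisely what makes this work: although neither $u$ nor $v_1$ belongs to $H^1_\ap$, their difference $h=u-v_1$ does, so all pairings against the adjoint modes $\z^*_{j,c}\in L^2_{-\ap}$ are bounded and depend smoothly on $t$, $x$, and $c$.
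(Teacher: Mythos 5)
Your proposal is correct in substance, and its local-existence step is essentially the paper's: both apply the implicit function theorem to the pairing of $h(t)=u(t)-v_1(t)$ (recentred, with $u_c$ subtracted) against the adjoint modes $\z^*_{j,c}$, with Jacobian equal to the identity at $(0,x_0,c_0)$ by \qref{e:zaeq}, and both rest on Lemma~\ref{lem:uv1} to place $h$ in $H^1_\ap$ so that these pairings are well defined and smooth. Where you genuinely diverge is the continuation step. The paper never differentiates the implicit relation in $t$ at this stage: it recentres at $\hat t=T-\tau_0$, uses continuity of $t\mapsto h(t)$ on the compact interval $[0,T+1]$ to keep $w(\hat t+\tau;x(\hat t))$ inside a fixed closed ball $\hat U_0\subset U_0$, applies Lemma~\ref{lem:decomp} there to produce a branch $(\hat\gamma,\hat c)$ on $[T-\tau_0,T+\tau_0]$, and then glues it to the original branch by local uniqueness together with a connectedness argument; this needs only $C^0$ regularity of $h$ and automatically keeps the extended parameters in a compact subset of $U_1$. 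You instead derive the modulation ODE $(\dot x,\dot c)^T=-J(t)^{-1}\pd_tF$, bound its right-hand side, extract limits at $T^-$, and restart the IFT at the endpoint. Your route is more quantitative and anticipates the modulation equations that the paper derives anyway in Section~\ref{sec:mod-energy}, at the price of invoking the $C^1$-in-$t$ regularity from Lemma~\ref{lem:uv1}.

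Two details of your continuation deserve explicit treatment. First, $\pd_tF_j(t,x(t),c(t))=\la \pd_th(t),\z^*_{j,c(t)}(\cdot-x(t))\ra$, and translating $\z^*_{j,c}$ by $x(t)$ rescales its $L^2_{-\ap}$ norm by $e^{-\ap x(t)}$; so the asserted bound on $(\dot x,\dot c)$ presupposes a lower bound on $x(t)$ over $[0,T)$, which is itself controlled by $\dot x$ --- a circularity as written. It can be closed from the hypothesis \qref{eq:supv}: since $\|h(t)(\cdot+x(t))\|_{L^2_\ap}=e^{-\ap x(t)}\|h(t)\|_{L^2_\ap}$ is pinned between $\|u_{c_0}\|_{L^2_\ap}\pm\delta_0$ while $\|h(t)\|_{L^2_\ap}$ is bounded on the compact interval $[0,T]$ by continuity, one gets a priori control of $e^{-\ap x(t)}$ before bounding the ODE. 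Second, to conclude that $T$ is not maximal you must check that the extended $c(t)$ still satisfies $|c(t)-c_0|<\delta_1$ with room to spare; this follows by pairing \qref{eq:supv} with $\z^*_{2,c(t)}$ and using the orthogonality \qref{eq:orth}, which forces $|c(t)-c_0|\lesssim\delta_0$ (or, as in the paper, from the fact that $\Phi$ maps the compact ball $\hat U_0$ into a compact subset of $U_1$). Neither point is a flaw in the strategy, but both should appear in a complete write-up.
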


\begin{proof}
Define $w(t;\hat x)=(u-v_1)(t,\cdot+\hat x)$ for $\hat x\in\R$.
Since $w(0;x_0)=u_{c_0} \in U_0$ by assumption, 
there exists $T_1>0$ such that 
$t\mapsto w(t;x_0)$ is $C^1$ with values in $U_0$ for $t\in[0,T_1)$.  
Then $(\gamma(t), c(t)):=\Phi(w(t;x_0))$ are the unique points in
$U_1$ such that $G(w(t;x_0+\gamma),0,c)=0$.
It follows that with $x(t):=x_0+\gamma(t)$, 
and with $v_2(t)=w(t;x(t))-u_{c(t)}$ given  by \qref{e:udecom},
\qref{eq:xc0} and \qref{eq:orth} hold for $t\in[0,T_1)$.
Moreover $\gamma(t)$ and $c(t)$ are $C^1$ 
because $\Phi$ is $C^1$ on $U_0$.

Suppose now that $C^1$ functions $x(t)$, $c(t)$ exist on $[0,T)$ 
such that \qref{eq:xc0} and \qref{eq:orth} 
hold with $v_2$ given by \qref{e:udecom},
which means such that for $0\le t<T$,  we have \qref{eq:xc0} and
\begin{equation}\label{e:G1}
 G(w(t;x(t)),0,c(t))=0.
\end{equation}
Suppose further that \qref{eq:supv} holds. Then
there is a closed ball $\hat U_0\subset U_0$
such that for all $t\in[0,T)$, $w(t;x(t))\in\hat U_0$.
Since $w\colon[0,T+1]\to L^2_\ap$ is uniformly continuous,
by enlarging $\hat U_0$ if necessary we can say 
there exists $\tau_0>0$ such that
whenever $\hat t\in[0,T)$ and $\tau\in[0,2\tau_0]$,
\begin{equation}\label{e:wU0}
w(\hat t+\tau;x(\hat t)) \in \hat U_0. 
\end{equation}
Fix $\hat t=T-\tau_0$. Applying Lemma~\ref{lem:decomp}, 
we infer that 
$(\hat\gamma(\tau), \hat c(\tau)):=\Phi(w(\hat t+\tau;x(\hat t)))$ 
are the unique points in $U_1$ such that 
for $\tau\in[0,2\tau_0]$,
\begin{equation}\label{e:G2}
G(w(\hat t+\tau;x(\hat t)+\hat\gamma),0,\hat c)=0.
\end{equation}
Also, $\hat\gamma(\tau)$ and $\hat c(\tau)$ are $C^1$, and the 
values $(\hat\gamma(\tau),\hat c(\tau))$ lie in a compact 
$\hat U_1\subset U_1$ for $\tau\in[0,2\tau_0]$.

Now, note that by \qref{e:G1} and the definition of $G$,
for $\tau\in[0,\tau_0)$ we have $\hat t+\tau<T$ and
\begin{equation}\label{e:G3}
G(w(\hat t+\tau;x(\hat t+\tau)),0,c(\hat t+\tau))=0.
\end{equation}
For $\tau\in[\tau_0,2\tau_0]$ we have $\hat t+\tau\in[T,T+\tau_0]$,
and we {\em define}
\begin{equation}\label{e:xc}
(x(\hat t+\tau),c(\hat t+\tau))=
(x(\hat t)+\hat\gamma(\tau), \hat c(\tau)).
\end{equation}
Then \qref{eq:xc0} and \qref{e:G1} hold for $0\le t\le T+\tau_0$, 
due to \qref{e:G2} for $\hat t+\tau\in[T,T+\tau_0]$.
We {\em claim} that \qref{e:xc} holds
for all $\tau\in[0,\tau_0)$ also, hence for all 
$\tau\in[0,2\tau_0]$.
From this claim it follows that
$x(t)$ and $c(t)$ are $C^1$ and \qref{eq:xc0} and \qref{e:G1} 
hold on $[0,T+\tau_0]$, so $T$ is not maximal.

To prove the claim, note that by \qref{e:wU0}--\qref{e:G3}
and the local uniqueness statement in Lemma~\ref{lem:decomp}
applied with $w=w(\hat t+\tau;x(\hat t))\in U_0$,
we have the following implication. For $\tau\in[0,\tau_0)$,
\begin{equation}
\mbox{if}\ \ z(\tau):=(x(\hat t+\tau)-x(\hat t),c(\hat t+\tau))\in
U_1,
\quad\mbox{then}\ \ z(\tau)=(\hat\gamma(\tau),\hat c(\tau))\in
\hat U_1.
\end{equation}
Since indeed $z(0)=(0,c(\hat t))\in U_1$ and 
$\hat U_1$ is a compact subset of $U_1$, however, we infer
by continuity that
$\sup\{ \tau\in[0,\tau_0): z(\tau)\in U_1\} =\tau_0$.
This proves the claim, and finishes the proof of the Proposition.
\end{proof}
\begin{remark}
We remark that this Lemma implies that the decomposition \qref{e:udecom}
can be continued as long as $|c(t)-c_0|$ and $\|v_2\|_{L^2_\ap}$ remain
sufficiently small, since $\|u_c-u_{c_0}\|_{L^2_\ap}\lesssim|c-c_0|$.  
\end{remark}

\section{Modulation equations and energy estimates}
\label{sec:mod-energy}

The decomposition described in \eqref{eq:decomp1}--\eqref{eq:orth}
yields a system of ordinary differential equations that govern the
modulating speed $c(t)$ and phase shift $x(t)$ of the main solitary wave.
In this section we describe these modulation equations,
and we provide estimates that control the 
energy norm of the combined perturbation
$v$ in terms of initial data and the modulation of the wave speed.
For $u=(q,r)$ we denote the energy density by 
\begin{equation}
\label{d.calE}
  \mathcal{E}(u)=\frac12\left( q^2+r^2+a(\pd_xq)^2+b(\pd_xr)^2\right).
\end{equation}

\subsection{Modulation equations}
Differentiate \eqref{eq:orth} with respect to $t$ and substitute
\eqref{eq:v2} into the resulting equation. Using the fact 
from \qref{e:zaeq} that
$\Lc^*\z_{2,c}^*=0$ and $\Lc^*\z_{1,c}^*=\z_{2,c}^*$ are both
orthogonal to $v_2$, it follows that for $i=1$ and $2$,
\begin{align*}
0 = &\ \frac{d}{dt}\la v_2(t), \z_{i,c(t)}^*\ra
-\la v_2,\mathcal{L}_{c(t)}^*\z_{i,c(t)}^*\ra
\\=&\ \dot{c}\la v_2,\pd_c\z_{i,c}^*\ra
+(\dot{x}-c)\la \D_yv_2,\z_{i,c}^*\ra
+\la l+\kone+\ktwo,\z_{i,c}^*\ra
. 
\end{align*}
Since $l= (\dot x-c)\z_{1,c}+\dot{c}\z_{2,c}$, by the biorthogonality
relations $\ip{\z_{i,c},\z_{j,c}^*}=\delta_{ij}$ 
from \qref{e:zaeq} we obtain that
$\dot x$ and $\dot c$ are determined by the {\em modulation equations}
\begin{equation}\label{eq:modeq0}
\begin{pmatrix}
1+\ip{\D_yv_2,\z_{1,c}^*} & \ip{v_2,\D_c\z_{1,c}^*} \\
\ip{\D_yv_2,\z_{2,c}^*} & 1+ \ip{v_2,\D_c\z_{2,c}^*}
\end{pmatrix}
\begin{pmatrix}
\dot x -c \\ \dot c
\end{pmatrix}
+
\begin{pmatrix}
\ip{\kone+\ktwo,\z_{1,c}^*} \\
\ip{\kone+\ktwo,\z_{2,c}^*}
\end{pmatrix}
=0.
\end{equation}

Our next lemma provides estimates for these modulation equations
in terms of the space $W_\vap$ with localized energy norm defined by 
\begin{gather}
\label{d.Wnorm}
\wnorm{v}
=\left(\int_\R e^{-2\vap|y|}\mathcal{E}(v(y))\,dy\right)^{1/2}.
\end{gather}

\begin{lemma}
  \label{lem:modulation}
Let $c_0>1$, $x_0\in\R$ and suppose $0<\nu\le \alpha<\frac12\alpha_{c_0}$.
Then there exist positive constants $\delta_2$ and $C$ 
with the following property.
Suppose the decomposition in Proposition~\ref{cor:decomp} 
holds on $[0,T]$ and suppose
$$
\sup_{t\in[0,T]}\left(|c(t)-c_0|+\|v(t)\|_{H^1}+\|v_1(t)\|_{H^1}
+\|v_2(t)\|_{H^1_\alpha}\right)\le \delta_2\,.$$
Then for $t\in[0,T]$,
\begin{align}
  \label{i:mod1}
 |\dot{x}(t)-c(t)|\le &\ C\vonew +
C \|v_2(t)\|_{H^1_\alpha}
(\|v_1(t)\|_{H^1}+\|v(t)\|_{H^1} +\|v_2(t)\|_{H^1_\alpha}) \,, 
\\
  \label{i:mod2}
 |\dot{c}(t)|\le &\  C\vonew 
+C \|v_2(t)\|_{H^1_\alpha} (\vonew +\|v_2(t)\|_{H^1_\alpha}) \,.
\end{align}
Furthermore, 
\begin{equation}
  \label{eq:modulation2}
\frac{d}{dt}\left(c(t)+\la \ti v_1(t), \z_{2,c(t)}^*\ra \right)
=O\left(\vonew ^2+\|v_2(t)\|_{H^1_\alpha}^2\right)\,.
\end{equation}
\end{lemma}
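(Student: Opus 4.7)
My plan is to view the modulation system \eqref{eq:modeq0} as a linear algebraic system
\[
M(t)\pmat{\dot x-c\\ \dot c}
=-\pmat{\ip{\kone+\ktwo,\zaone}\\ \ip{\kone+\ktwo,\zatwo}},
\]
first inverting $M(t)$ and then estimating the forcing on the right. Using the pointwise bounds \eqref{i:adj1}--\eqref{i:adj2} on $\zaone,\zatwo$ and their $c$-derivatives, together with Cauchy--Schwarz against the weight $e^{\alpha y}$, each entry of $M(t)-I$ is controlled by $\|v_2\|_{H^1_\alpha}$, so $M(t)=I+O(\delta_2)$ is invertible with uniformly bounded inverse and \eqref{i:mod1}--\eqref{i:mod2} reduce to bounding the pairings of $\kone$ and $\ktwo$ with the two adjoint modes.

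Next I would bound those pairings. Since $\kone=f'(u_{c(t)})\tilde v_1$ and every scalar term in $f'(u_c)$ carries one of $q_c,r_c,q_c',r_c'$, each exponentially localized at rate $\alpha_c$, I would transfer $B^{-1}$ to the adjoint mode by self-adjointness, integrate by parts to remove the derivative from $\tilde v_1$, and then distribute the weights $e^{-\vap|y|}$ onto $\tilde v_1$ and $e^{\vap|y|}$ onto the remaining $u_c$-profile. Cauchy--Schwarz then yields $|\ip{\kone,\z^*_{j,c}}|\lesssim\vonew$, since $\vap<\alpha<\tfrac12\alpha_c$ makes the weighted $u_c$-profile square integrable against either $\zaone$ (merely bounded as $y\to+\infty$) or $\zatwo$. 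For $\ktwo=f(v)-f(\tilde v_1)$, bilinearity of $f$ gives $\ktwo=2B(\tilde v_1,v_2)+f(v_2)=2B(v,v_2)-f(v_2)$, so every term is linear in $v_2$. The pairing $\ip{f(v_2),\z^*_{j,c}}$ is bounded by $\|v_2\|_{H^1_\alpha}^2$ after pulling $e^{2\alpha y}$ onto $f(v_2)$ and using $\|e^{-2\alpha y}\z^*_{j,c}\|_{L^\infty}<\infty$ (which holds since $\alpha<\tfrac12\alpha_c$). For the mixed term $\ip{B(\tilde v_1,v_2),\z^*_{j,c}}$, H\"older and Sobolev embedding on the $v_2$ factor yield the bound $\|v_2\|_{H^1_\alpha}(\|\tilde v_1\|_{H^1}+\|v\|_{H^1})$ when $j=1$ (since $\zaone$ does not decay as $y\to+\infty$, $\tilde v_1$ can only be controlled in $H^1$), but the sharper $\|v_2\|_{H^1_\alpha}\vonew$ when $j=2$ (exploiting the two-sided decay of $\zatwo$). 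Combining these produces \eqref{i:mod1} and \eqref{i:mod2}.

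For the almost-conservation \eqref{eq:modulation2}, I would differentiate in time, using that $L$ and $f$ are translation invariant to derive $\pd_t\tilde v_1=L\tilde v_1+f(\tilde v_1)+\dot x\pd_y\tilde v_1$, and then invoke the adjoint identity $\Lc^*\zatwo=0$ from \eqref{e:zaeq} to get
\[
\ip{L\tilde v_1+c\pd_y\tilde v_1,\zatwo}=-\ip{f'(u_c)\tilde v_1,\zatwo}=-\ip{\kone,\zatwo}.
\]
Thus $\frac{d}{dt}\ip{\tilde v_1,\zatwo}$ equals $-\ip{\kone,\zatwo}$ plus terms involving $\dot x-c$, $\dot c$, or the quadratic $\ip{f(\tilde v_1),\zatwo}$. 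Adding the second row of \eqref{eq:modeq0} cancels $\ip{\kone,\zatwo}$ exactly, leaving only terms purely quadratic in the small quantities, each bounded by $\vonew^2+\|v_2\|_{H^1_\alpha}^2$ via \eqref{i:mod1}--\eqref{i:mod2} and the Cauchy--Schwarz estimates on the adjoint modes developed above.

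The chief technical obstacle is the asymmetric decay of $\zaone$, which decays exponentially only as $y\to-\infty$ but is merely bounded as $y\to+\infty$. This is precisely what forces the unlocalized norms $\|v\|_{H^1}$ and $\|v_1\|_{H^1}$ into \eqref{i:mod1}; conversely, the two-sided exponential decay of $\zatwo$ is what allows \eqref{i:mod2} and the key identity \eqref{eq:modulation2} to be purely quadratic in the localized norms $\vonew$ and $\|v_2\|_{H^1_\alpha}$, a property that will be essential for closing the nonlinear stability argument.
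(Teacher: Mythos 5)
Your proposal is correct and follows essentially the same route as the paper's proof: invert the modulation matrix as $I+O(\|v_2\|_{H^1_\alpha})$, bound $\ip{\kone,\z^*_{j,c}}$ by $\wnorm{\ti v_1}$ using the exponential localization of $f'(u_{c})$ together with $\alpha<\tfrac12\alpha_{c}$, exploit the two-sided decay of $\zatwo$ (versus the mere boundedness of $\zaone$ as $y\to+\infty$) to get the sharper, purely quadratic bound on $\ip{\ktwo,\zatwo}$, and obtain \eqref{eq:modulation2} from $\Lc^*\zatwo=0$ combined with the second modulation equation. One caveat, shared with the paper's own write-up: since both $\dot c$ and $\frac{d}{dt}\ip{\ti v_1,\zatwo}$ equal $-\ip{\kone,\zatwo}$ modulo quadratic terms, the combination that is actually almost conserved is $c(t)-\ip{\ti v_1(t),\z^*_{2,c(t)}}$ rather than the sum written in \eqref{eq:modulation2}, so the claimed ``exact cancellation upon adding'' requires a sign adjustment --- which is immaterial to every later use of the estimate, since only $|\ip{\ti v_1,\zatwo}|\lesssim\wnorm{\ti v_1}$ enters.
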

\begin{proof}
Note that for $\delta_2$ small enough, $2\ap<\ap_{c(t)}$ for all
$t\in[0,T]$.  And due to the estimates
\[
|\ip{\D_yv_2,\z_{j,c}^*}|
+|\ip{v_2,\D_c\z_{j,c}^*}| 
 \lesssim \|v_2\|_{H^1_\ap} \le\delta_2
\]
for $j=1,2$, the matrix in \qref{eq:modeq0}
is invertible with inverse $I+O(\|v_2\|_{H^1_\ap})$.

To estimate terms involving $k_1=f'(u_c)\ti v_1$, note 
$\zaone$ and $\zatwo$ are uniformly bounded in $L^2_{-\ap}$,
so $ |\ip{k_1,\z_{j,c}}| \lesssim \|k_1\|_{L^2_\ap}$ for $j=1,2$.
Since 
\[
f'(u_c)\ti v_1=-(B^{-1}\pd_x)
\begin{pmatrix}0 & 0\\ r_c & 2q_c \end{pmatrix}\ti v_1
-B^{-1}
\begin{pmatrix}0 & 0\\ r_c' & -q_c'
\end{pmatrix}\ti v_1,
\]
and $B\inv$ and $B\inv\D$ are bounded from $L^2_\ap$ to $H^1_\ap$,
we may deduce
\begin{equation}\label{i:k1}
\|k_1\|_{H^1_\ap} \lesssim 
\|q_c\ti v_1\|_{L^2_\ap} \lesssim
\|e^{-\apc |y|/2}\ti v_1\|_{L^2} \lesssim
\wnorm{\ti v_1},
\end{equation}
using $\ap<\half\apc$.  
(This estimate will be used also in section~\ref{s:stab}.)

Next we estimate terms involving $k_2$. Since $f$ is quadratic,
\[
k_2=f(v)-f(\ti v_1) = f'(\ti v_1)v_2+f(v_2)=f'(v)v_2-f(v_2).
\]
As for $k_1$, we find  
$ |\ip{k_2,\z_{1,c}}| \lesssim \|k_2\|_{L^2_\ap}$ and
\begin{equation}\label{i:k2}
\|k_2\|_{H^1_\ap} \lesssim 
\|v_2\|_{H^1_\alpha}
(\|v\|_{H^1}+\|v_2\|_{H^1})
\lesssim  \|v_2\|_{H^1_\alpha} (\|v\|_{H^1}+\|v_1\|_{H^1}) .
\end{equation}
Since $|\zatwo|\lesssim e^{-\apc|y|}\le e^{-2\ap|y|}$, however,
from Lemma \ref{cl:1} below we find the tighter estimate
\begin{align*}
|\la \ktwo,\z_{2,c}^*\ra|&\  \lesssim 
 \|e^{-2\ap|y|}(f'(\ti v_1)v_2+f(v_2))\|_{L^1}
\lesssim 
\wnorm{\ti v_1} \|v_2\|_{H^1_\alpha} +\|v_2\|_{H^1_\alpha}^2.
\end{align*}
Then directly we obtain \eqref{i:mod1} and \eqref{i:mod2}.

Next we prove \eqref{eq:modulation2}. Using \eqref{eq:v1}
and the fact that $\ip{\Lc\ti v_1,\zatwo}=\ip{\ti v_1,\Lc^*\zatwo}=0$,
we have
\begin{align*}
&\frac{d}{dt} \ip{\ti v_1,\z_{2,c(t)}^*}
 =
\ip{ \dot{x}(t)\pd_y\ti v_1+L\ti v_1 + f(\ti v_1),\zatwo }
+\dot{c}\,\ip{ \ti v_1,\pd_c\zatwo }
\\ & \quad = 
-\ip{f'(u_c)\ti v_1,\zatwo}
 +O\left( (|\dot{x}(t)-c(t)|+|\dot{c}(t)|)
\vonew +\vonew ^2\right)
\\ & \quad = -\ip{ \kone,\zatwo}
+O(\vonew ^2+\|v_2(t)\|_{H^1_\alpha}^2).
\end{align*}
Combining this with \qref{eq:modeq0} and \eqref{i:mod2}, we obtain
\eqref{eq:modulation2}. This completes the proof.
\end{proof}
For later use, we also note here that we have
\begin{equation}\label{i:vz2}
|\ip{\ti v_1(t),\z_{2,c(t)}^*}| \lesssim \vonew.
\end{equation}

\begin{lemma}
\label{cl:1}
  Let $|\alpha|< 1/\sqrt{b}$. Then for $p\in[1,\infty]$,
  \begin{gather}
\label{eq:cl11}
\|e^{-\alpha|x|}B^{-1} g\|_{L^p}
+\|e^{-\alpha|x|}B^{-1} \pd_xg\|_{L^p}\le C 
\|e^{-\alpha|x|}g\|_{L^p},
  \end{gather}
where $C$ is a positive constant depending only on $\alpha$.
\end{lemma}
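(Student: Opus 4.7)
\medskip
\noindent\textbf{Proof proposal.} The plan is to exhibit $B^{-1}$ and $B^{-1}\partial_x$ as convolution operators whose kernels decay like $e^{-|x|/\sqrt b}$, and then to pass the weight $e^{-\alpha|x|}$ through the convolution by means of Young's inequality. Explicitly, since $B=I-b\partial_x^2$, one has $B^{-1}g = K*g$ with
\[
K(x) = \frac{1}{2\sqrt b}\,e^{-|x|/\sqrt b},
\]
and consequently $B^{-1}\partial_x g = K'*g$ with $K'(x) = -\frac{1}{2b}\sgn(x)\,e^{-|x|/\sqrt b}$. Both $K$ and $K'$ are integrable, and in fact $|K(x)|+|K'(x)|\le C\,e^{-|x|/\sqrt b}$.

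The next step is the elementary pointwise bound that moves the weight. By the reverse triangle inequality $|y|\le|x|+|x-y|$, so
\[
e^{-\alpha|x|}\,e^{\alpha|y|} \le e^{|\alpha|\,|x-y|}.
\]
Applying this to the convolution gives, for $H\in\{K,K'\}$,
\[
e^{-\alpha|x|}\,|(H*g)(x)| \le \int_\R \Bigl(e^{|\alpha|\,|x-y|}|H(x-y)|\Bigr)\, e^{-\alpha|y|}|g(y)|\,dy,
\]
i.e.\ the weighted operator is dominated by convolution with the nonnegative kernel $\widetilde H(x) = e^{|\alpha||x|}|H(x)|$, which satisfies $\widetilde H(x)\le C\,e^{-(1/\sqrt b-|\alpha|)|x|}$. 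Under the hypothesis $|\alpha|<1/\sqrt b$ this lies in $L^1(\R)$.

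Finally, Young's convolution inequality $\|\widetilde H * h\|_{L^p}\le \|\widetilde H\|_{L^1}\|h\|_{L^p}$, applied with $h(y)=e^{-\alpha|y|}|g(y)|$, yields at once the two bounds
\[
\|e^{-\alpha|x|}B^{-1}g\|_{L^p} \le C\|e^{-\alpha|x|}g\|_{L^p},\qquad
\|e^{-\alpha|x|}B^{-1}\partial_x g\|_{L^p} \le C\|e^{-\alpha|x|}g\|_{L^p},
\]
with the constant $C$ depending only on $\alpha$ (through $\|\widetilde K\|_{L^1}+\|\widetilde{K'}\|_{L^1}$). There is no real obstacle; the only subtlety is that the convolution kernel for $B^{-1}\partial_x$ should be obtained either by direct computation or by differentiating $K*g$, and one should verify the derivative identity for $g$ at least in a dense class (e.g.\ Schwartz) before extending by density to general $g$ with $e^{-\alpha|x|}g\in L^p$.
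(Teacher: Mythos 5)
Your proposal is correct and follows essentially the same route as the paper: both represent $B^{-1}$ by the explicit kernel $K(x)=\frac{1}{2\sqrt b}e^{-|x|/\sqrt b}$, transfer the weight via $e^{-\alpha|x|}e^{\alpha|y|}\le e^{|\alpha||x-y|}$, and conclude from integrability of the resulting kernel under $|\alpha|<1/\sqrt b$ (the paper phrases the last step as a Schur-type test on $\kappa(x,y)=e^{-\alpha|x|}e^{-|x-y|/\sqrt b}e^{\alpha|y|}$ rather than Young's inequality, which is an immaterial difference). No gaps.
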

\begin{proof}
Observe $B\inv g(x)=\intR 
\frac{1}{2\sqrt{b}}
e^{-|x-y|/\sqrt{b}}g(y)\,dy.$
Then for $j=0, 1$, 
$$
|e^{-\ap|x|}
(\dx^j B^{-1}g)(x)|\lesssim
\int_\R \kappa(x,y)
e^{-\ap|y|}
|g(y)|\,dy,
$$
where 
$\kappa(x,y)=
e^{-\ap|x|}e^{-|x-y|/\sqrt{b}}e^{\ap|y|}$.
Using the fact that 
$\sup_y\intR\kappa(x,y)\,dx + \sup_x\intR\kappa(x,y)\,dy<\infty$
we have \eqref{eq:cl11}.
\end{proof}

\subsection{Energy norm estimates on $v$}
We will estimate the energy norm of $v(t)$ by using the convexity of
the energy functional as was done for the case of FPU lattice 
models in \cite{FP2}.
Since the solitary wave
is not a critical point of the energy functional $E(u)$, the estimate
of $v(t)$ depends on the modulation of the speed $c(t)$.
\begin{lemma}
  \label{lem:sp-Ham}
Let $c_0>1$ and $u(0)=u_{c_0}(\cdot-x_0)+v_0$ for some $x_0\in\R$.
Let $\delta_3$ be a sufficiently small positive number and
$T\in[0,\infty]$.
Suppose that the decomposition of Proposition~\ref{cor:decomp}
exists for $t\in[0,T)$
and that 
$$\|v_0\|_{H^1}+\sup_{t\in[0,T)}(|c(t)-c_0|+\|v(t)\|_{H^1})\le \delta_3.
$$
Then
$$
\|v(t)\|_{H^1}^2\le C(\|v_0\|_{H^1}+|c(t)-c_0|)
\quad \text{for $t\in[0,T)$},$$
where $C$ is a positive constant depending only on $\delta_3$ and $c_0$.
\end{lemma}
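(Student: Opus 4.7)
The plan is to use conservation of energy $E(u(t)) = E(u_0)$ together with the fact that $E(u) = \tfrac12\langle u,Ju\rangle$ is an exact positive-definite quadratic form on $H^1$ with $J = \mathrm{diag}(A,B)$, so that $E(v) \gtrsim \|v\|_{H^1}^2$ since $a,b > 0$. Because $E$ is translation invariant, the moving-frame decomposition $u(t,y+x(t)) = u_{c(t)}(y) + v(t,y)$ yields the exact Taylor identity
\begin{equation*}
E(u(t)) = E(u_{c(t)}) + \langle Ju_{c(t)}, v(t)\rangle + E(v(t)),
\end{equation*}
with an analogous formula at $t=0$. Energy conservation then gives the key identity
\begin{equation*}
E(v(t)) = E(v_0) + \bigl[E(u_{c_0}) - E(u_{c(t)})\bigr] + \langle Ju_{c_0}, v_0\rangle - \langle Ju_{c(t)}, v(t)\rangle.
\end{equation*}

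Three of the four right-hand side terms are immediately controlled within the claimed bound using the smoothness of $c\mapsto u_c$ into $H^1$ and the bootstrap hypothesis $\|v_0\|_{H^1}, \|v(t)\|_{H^1}, |c(t)-c_0|\le \delta_3$: namely $E(v_0) \lesssim \|v_0\|_{H^1}^2 \le \delta_3\|v_0\|_{H^1}$, $|E(u_{c_0})-E(u_{c(t)})|\lesssim |c(t)-c_0|$ by $C^1$-dependence of $E(u_c)$ on $c$, and $|\langle Ju_{c_0},v_0\rangle|\le \|Ju_{c_0}\|_{L^2}\|v_0\|_{L^2}\lesssim \|v_0\|_{H^1}$. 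To handle the fourth term I would split
\begin{equation*}
\langle Ju_{c(t)},v(t)\rangle = \langle J(u_{c(t)}-u_{c_0}),v(t)\rangle + \langle Ju_{c_0},v(t)\rangle,
\end{equation*}
bound the first piece by $C|c(t)-c_0|\,\|v(t)\|_{H^1}$ and absorb it via Young's inequality into $\tfrac14\|v(t)\|_{H^1}^2$ plus $C|c(t)-c_0|^2 \le C\delta_3|c(t)-c_0|$.

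Combining everything with coercivity of $E$ reduces the proof to an inequality of the form
\begin{equation*}
\tfrac12\|v(t)\|_{H^1}^2 \le C_1\bigl(\|v_0\|_{H^1}+|c(t)-c_0|\bigr) + C_2\|v(t)\|_{H^1},
\end{equation*}
which must then be closed as a quadratic inequality for $\|v(t)\|_{H^1}$, using the bootstrap smallness $\|v(t)\|_{H^1}\le\delta_3$ so that the linear remainder $C_2\|v(t)\|_{H^1}$ is dominated by $\tfrac14\|v(t)\|_{H^1}^2$ plus acceptable multiples of $\|v_0\|_{H^1}+|c(t)-c_0|$. The main obstacle is precisely this last step: because the Grillakis--Shatah--Strauss variational machinery fails here (the Hessian of $E-cM$ is infinitely indefinite, as noted in the introduction), the linear term $\langle Ju_{c_0},v(t)\rangle$ cannot be killed by projection onto the symplectic complement of the neutral modes, and the direct bootstrap absorption using the a priori smallness of $\|v(t)\|_{H^1}$ is what delivers the weaker-than-quadratic bound $\|v(t)\|_{H^1}^2 \lesssim \|v_0\|_{H^1}+|c(t)-c_0|$ (rather than the sharper $\|v_0\|_{H^1}^2+|c(t)-c_0|^2$).
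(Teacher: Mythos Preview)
There is a genuine gap in the closing step.  You end with
\[
\tfrac12\|v(t)\|_{H^1}^2 \le C_1\bigl(\|v_0\|_{H^1}+|c(t)-c_0|\bigr) + C_2\|v(t)\|_{H^1},
\]
where $C_2\sim \|Ju_{c_0}\|_{L^2}$ is an $O(1)$ constant depending only on $c_0$.  No bootstrap smallness can absorb this: if $\|v(t)\|_{H^1}\le\delta_3$ is small, then $\tfrac14\|v(t)\|_{H^1}^2\le \tfrac{\delta_3}{4}\|v(t)\|_{H^1}\ll C_2\|v(t)\|_{H^1}$, the wrong direction.  Young's inequality produces an unacceptable $O(C_2^2)$ constant, and solving the quadratic inequality for $\|v(t)\|_{H^1}$ only yields $\|v(t)\|_{H^1}^2\lesssim C_2^2 + (\|v_0\|_{H^1}+|c(t)-c_0|)$, again with an $O(1)$ floor.

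The missing idea is that the orthogonality built into the decomposition \emph{does} kill most of the linear term---just not through the symplectic/GSS mechanism you dismiss.  The point is that $E'(u_c)=Ju_c=(Aq_c,Br_c)$ is precisely (a multiple of) the adjoint neutral mode $\zeta_{2,c}^*$.  Since the decomposition of Proposition~\ref{cor:decomp} enforces $\langle v_2(t),\zeta_{2,c(t)}^*\rangle=0$, one has
\[
\langle Ju_{c(t)},\,v(t)\rangle
= \langle Ju_{c(t)},\,\tilde v_1(t)\rangle + \langle Ju_{c(t)},\,v_2(t)\rangle
= \langle Ju_{c(t)},\,\tilde v_1(t)\rangle
= O(\|v_1(t)\|_{H^1}) = O(\|v_0\|_{H^1}),
\]
the last step by energy conservation for the free solution $v_1$.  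With this the linear term is already of acceptable size and no absorption is needed; the estimate closes immediately (your exact Taylor identity, which avoids the paper's $O(\|v\|_{H^1}^3)$ remainder, makes this even cleaner).
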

\begin{proof}
Since the energy $E(u)$ is invariant under time evolution 
and spatial translation,
$$E(u(t))=E(u_{c_0}(\cdot-x_0)+v_0)=E(u_{c_0})+O(\|v_0\|_{H^1}).$$
Expanding $E(u(t))=E(u_{c(t)}+v(t))$ in a Taylor series about $u_{c(t)}$,
we have 
\begin{align*}
E(u(t)) =& E(u_{c(t)})
+\la E'(u_{c(t)}),v\ra+\frac12\la E''(u_{c(t)})v,v\ra
+O(\|v(t)\|_{H^1}^3).
\end{align*}
Since
$E'(u_c)=(Aq_c,Br_c)=:\zzone$ is a multiple of $\zatwo$ 
from Appendix~\ref{apx:Multzero}, by \eqref{eq:orth} we have
$$\la E'(u_{c(t)}),v(t)\ra=\la \ti v_1(t), \zzone\ra
=O(\|v_1(t)\|_{H^1})
\,.$$
Since $E''(u_c)=\operatorname{diag}(A,B)$ is positive definite,
there exist a positive constant $C'$ such that
\begin{align*}
\|v(t)\|_{H^1}^2 \le  C'(|E(u_{c(t)})-E(u_{c_0})|+\|v_0\|_{H^1}+\|v_1(t)\|_{H^1}
+\|v(t)\|_{H^1}^3)\,.
\end{align*}
If $\delta_3$ is sufficiently small, it follows
$$
\|v(t)\|_{H^1}^2 \le C(\|v_0\|_{H^1}+|c(t)-c_0|),$$
where $C$ is a positive constant depending only on $\delta_3$ and $c_0$.
Note that from \eqref{eq:energy} it follows 
$\|v_1(t)\|_{H^1}^2 \lesssim E(v_1(t))=O(\|v_0\|_{H^1}^2)$,
because $v_1(t)$ is a solution of \eqref{eq:v1}.
This completes the proof of Lemma \ref{lem:sp-Ham}.
\end{proof}

\section{Virial transport estimate}
\label{sec:virial}
In this section, we prove a virial lemma for small-energy 
solutions of \eqref{eq:v1} --- solutions of the `free' Benney-Luke system. 
This kind of result involves bounds on the transport of energy density,
measured using weighted integral quantities.
Essentially, this provides nonlinear estimates that
correspond to the fact that the solitary wave speed exceeds
the group velocity of linear waves in the present case 
($0<a<b$ and $c>1$) for the Benney-Luke equation \qref{e.bl}.

We start by observing that by a straightforward calculation,
we find that the energy density $\E(v_1)$ from \qref{d.calE}
satisfies a conservation law
\begin{equation} \label{e.EF}
\dt\E(v_1)=\dx\F(v_1),
\end{equation}
with the flux $\F = \F_2+\F_3$ where
\begin{align*}
\F_2(v_1) &= r_1 B\inv A q_1 + a (\dx q_1)(\dx r_1) ,\\
\F_3(v_1) &= 
 - r_1^2q_1 - br_1 \dx B\inv(r_1\dx q_1 + 2 q_1 \dx r_1).
\end{align*}

Let $\vap$ be a positive constant and $\tilde{x}(t)$ be a $C^1$-function.
We introduce a smoothed Heaviside function 
and a corresponding weighted energy by
\begin{equation}
\cha(x)=1+\tanh\vap x,
\qquad
\V(t)=\int_\R \cha(x-\tilde{x}(t))\E(v_1(t,x))\,dx.
\end{equation}
Note 
\[
\cha'(x)= \vap \psi_\vap(x)^2 \le 4\vap e^{-2\vap|x|}, \quad\text{ where 
\ \ $\psi_\vap(x)=\sech\vap x$.} 
\]
\begin{lemma} [Virial Lemma]
  \label{lem:virial} 
For any constant $c_1>1$, there exist positive numbers $\vap_0$,
$\delta_4$ and $\mu$ with the following property.
Given any $\vap\in(0,\vap_0)$, 
any $C^1$ function $\ti x(t)$ 
satisfying $\D_t\ti x(t)\ge c_1$ for all $t$, 
and any solution $v_1(t)$ to \qref{eq:v1} with
$\|v_0\|_{H^1}<\delta_4$, we have 
\begin{equation}
  \label{eq:virial}
\V(t)+\mu\vap\int_0^t\int_\R \psi_\vap(x-\tilde{x}(s))^2
\mathcal{E}(v_1(s,x))\,dx\,ds
\le \V(0).
\end{equation}
\end{lemma}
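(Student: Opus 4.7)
Starting from the conservation law \qref{e.EF}, I would differentiate $\V(t)$ in time and integrate by parts once to obtain
\begin{equation*}
\dot{\V}(t) = -\int_\R \chi_\vap'(x-\ti x(t))\bigl[\dot{\ti x}(t)\,\E(v_1) + \F_2(v_1) + \F_3(v_1)\bigr]\,dx.
\end{equation*}
Since $\chi_\vap'(x) = \vap\,\psi_\vap(x)^2$, the inequality \qref{eq:virial} follows (by time integration) from the coercive bound
\begin{equation*}
\int_\R \psi_\vap^2\bigl[\dot{\ti x}(t)\,\E(v_1) + \F_2(v_1) + \F_3(v_1)\bigr]\,dx \;\ge\; \mu \int_\R \psi_\vap^2\,\E(v_1)\,dx
\end{equation*}
for some $\mu > 0$, under smallness of $\vap$ and $\|v_0\|_{H^1}$. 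Heuristically, $\F/\E$ represents a ``local velocity'' bounded by the maximum linear group velocity for \qref{e.bl}, which equals $v_g(0)=1$ when $0<a<b$; the hypothesis $\dot{\ti x}\ge c_1>1$ then ensures coercivity. Note also that energy conservation for \qref{eq:v1} gives $\|v_1(t)\|_{H^1}\lesssim \|v_0\|_{H^1}\le\delta_4$ uniformly in $t$.

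For the quadratic flux $\F_2 = r_1 B^{-1}A\,q_1 + a\,\partial_x q_1\,\partial_x r_1$, set $R = \psi_\vap r_1$ and $Q = \psi_\vap q_1$. Pointwise Cauchy-Schwarz handles the local piece, giving $a|\int \psi_\vap^2 \partial_x q_1\,\partial_x r_1|\le \tfrac{a}{2}\int \psi_\vap^2((\partial_x q_1)^2+(\partial_x r_1)^2)$. The nonlocal piece splits as
\begin{equation*}
\int \psi_\vap^2 r_1 B^{-1}A\,q_1\,dx \;=\; \int R\,(B^{-1}A)\,Q\,dx + \int R\,[\psi_\vap, B^{-1}A]\,q_1\,dx,
\end{equation*}
and since $B^{-1}A$ is self-adjoint on $L^2$ with Fourier symbol $\hS(\xi)^2\le 1$ (from $0<a<b$), the principal term is at most $\tfrac{1}{2}(\|R\|_{L^2}^2+\|Q\|_{L^2}^2)$. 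Combining these and using the identity $\tfrac{1}{2}(q_1^2+r_1^2+a(\partial_x q_1)^2+a(\partial_x r_1)^2) = \E(v_1) - \tfrac{b-a}{2}(\partial_x r_1)^2 \le \E(v_1)$, the quadratic contribution satisfies $|\int \psi_\vap^2 \F_2\,dx| \le \int \psi_\vap^2\,\E(v_1)\,dx + (\text{weighted commutator error})$. For the cubic flux $\F_3 = -r_1^2 q_1 - b\,r_1\,\partial_x B^{-1}(r_1 \partial_x q_1 + 2 q_1 \partial_x r_1)$, the pointwise term is controlled by $\|q_1\|_{L^\infty}\int \psi_\vap^2 r_1^2 \lesssim \|v_1\|_{H^1}\int \psi_\vap^2 \E$. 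For the nonlocal term I would first integrate by parts to move $\partial_x$ onto $\psi_\vap^2 r_1$, which produces factors $\psi_\vap^2 \partial_x r_1$ and (via $|\psi_\vap'|\le \vap\,\psi_\vap$) $\vap\,\psi_\vap^2|r_1|$, paired against $B^{-1}H$; the pointwise bound on $\psi_\vap B^{-1}$ described below, together with $\|\psi_\vap H\|_{L^2} \lesssim \|v_1\|_{H^1}(\int \psi_\vap^2 \E)^{1/2}$ via Sobolev, yields $|\int \psi_\vap^2 \F_3\,dx| \lesssim \|v_1\|_{H^1}\int \psi_\vap^2 \E \lesssim \delta_4 \int \psi_\vap^2 \E$.

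The main technical obstacle is the commutator in $\F_2$: a naive operator-norm bound $\|[\psi_\vap, B^{-1}]\|_{B(L^2)} = O(\vap)$ produces an error bilinear in $\|R\|_{L^2}$ and the \emph{unweighted} $\|q_1\|_{L^2}$, which cannot be absorbed into $\int \psi_\vap^2 \E$. The plan is to use the explicit integral kernel $B^{-1}f(x) = (2\sqrt{b})^{-1}\int e^{-|x-y|/\sqrt{b}}f(y)\,dy$ together with the log-Lipschitz comparison $\psi_\vap(x)\le e^{\vap|x-y|}\psi_\vap(y)$ (immediate from $|d\log\sech(t)/dt|\le 1$) to establish the \emph{weighted} commutator bound
\begin{equation*}
\|\,[\psi_\vap, B^{-1}]\,q_1\,\|_{L^2} \;\le\; C\vap\,\|\psi_\vap q_1\|_{L^2},
\end{equation*}
valid for $\vap$ small enough that $|h|\,e^{-|h|/\sqrt{b}+2\vap|h|}$ remains integrable in $h\in\R$. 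This yields $|\int R\,[\psi_\vap, B^{-1}A]\,q_1\,dx| \le C'\vap\,\|R\|_{L^2}\|Q\|_{L^2} \le C'\vap\int \psi_\vap^2\E$. Finally, choosing $\vap_0$ and $\delta_4$ small (depending only on $c_1$) so that both the commutator error and the cubic error are absorbed into $(c_1-1)\int \psi_\vap^2 \E$, the coercive inequality holds with $\mu = (c_1-1)/2$, and integration in $t$ gives \qref{eq:virial}.
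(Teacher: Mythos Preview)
Your proposal is correct and follows essentially the same route as the paper: differentiate $\V(t)$ via the conservation law, reduce to a coercive pointwise-in-time bound, control the quadratic flux $\F_2$ by the energy density using that the maximal linear group velocity equals $1$, absorb the cubic flux $\F_3$ using smallness of $\|v_1\|_{H^1}$, and handle all nonlocal terms via the explicit kernel of $B^{-1}$ together with the log-Lipschitz bound $\psi_\vap(x)\le e^{\vap|x-y|}\psi_\vap(y)$.

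There is one minor presentational difference worth noting. For the quadratic flux, the paper first pulls the weight inside to form $\E(\ti\psi v_1)$ and $\F_2(\ti\psi v_1)$ (modulo the same commutator error you identified), and then proves the clean inequality $\int_\R(\E(u)+\F_2(u))\,dx\ge0$ by Plancherel and Gerschgorin's circle theorem. Your argument keeps the weight outside and instead uses the operator bound $\|B^{-1}A\|_{B(L^2)}=\sup_\xi\hS(\xi)^2=1$ together with Cauchy--Schwarz. Both routes rest on the same fact $\hS^2\le1$ and yield the same margin $c_1-1$. For $\F_3$, your integration by parts is harmless but not needed: since $\D_x B^{-1}$ has the signed kernel $\mp(2b)^{-1}e^{-|x-y|/\sqrt{b}}$, the same weighted-kernel argument directly gives $\|\psi_\vap\,\D_xB^{-1}H\|_{L^2}\lesssim\|\psi_\vap H\|_{L^2}$, which is how the paper proceeds (this is its Lemma~\ref{cl:1}).
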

Lemma \ref{lem:virial} yields that $v_1(t)$ locally tends to $0$ as
 $t\to\infty$, with respect to any coordinate frame that moves at 
a speed strictly greater than one,
provided the energy is sufficiently small.
Before providing the proof, we establish two claims.
\begin{claim}
  \label{cl:virial2}
For every $u\in H^1(\R)$ we have 
$\int_\R (\E(u)+\F_2(u))\,dx \ge 0$.
\end{claim}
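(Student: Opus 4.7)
The plan is to rewrite $\int_\R(\mathcal{E}(u)+\mathcal{F}_2(u))\,dx$ on the Fourier side and show that the resulting integrand is pointwise nonnegative, a statement that will reduce to a polynomial inequality depending crucially on the hypothesis $0<a<b$.

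First I would use Plancherel, noting that $A$, $B$, $B^{-1}A$ are Fourier multipliers with the real, even symbols $P(k)=1+ak^2$, $Q(k)=1+bk^2$ and $\mathcal{S}(k)^2=P(k)/Q(k)$. Direct computation gives
\begin{equation*}
\int_\R(\mathcal{E}+\mathcal{F}_2)\,dx
=\int_\R\Bigl[\tfrac12 P|\hat q|^2+\tfrac12 Q|\hat r|^2+\alpha(k)\,\Re(\bar{\hat r}\hat q)\Bigr]\,\frac{dk}{2\pi},
\end{equation*}
where $\alpha(k)=\mathcal{S}(k)^2+ak^2=(1+2ak^2+abk^4)/Q(k)$. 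The integrand is a Hermitian quadratic form in $(\hat q(k),\hat r(k))\in\C^2$, and by separating real and imaginary parts it is pointwise nonnegative iff $PQ\ge\alpha^2$, equivalently (after clearing $Q^2$)
\begin{equation*}
P(k)\,Q(k)^3\ \ge\ (1+2ak^2+abk^4)^2.
\end{equation*}

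The final step is to verify this polynomial inequality by direct expansion. With $u=k^2\ge 0$ I would check the identity
\begin{equation*}
(1+au)(1+bu)^3-(1+2au+abu^2)^2
=u(b-a)\bigl[3+(3b+4a)u+b(b+4a)u^2+ab^2u^3\bigr].
\end{equation*}
Since $0<a<b$, every factor on the right-hand side is nonnegative for $u\ge 0$, which establishes the claim. The main obstacle is just the polynomial bookkeeping in verifying this identity; there are no functional-analytic subtleties, because $B^{-1}A$ is a bounded Fourier multiplier and all the Plancherel integrals are absolutely convergent for $u\in H^1(\R)$. Note the inequality degenerates to equality at $k=0$, consistent with the fact that $\mathcal{S}(0)=1$ is the maximum linear phase speed in this regime, so the estimate sits exactly at the threshold needed.
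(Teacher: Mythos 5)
Your proof is correct and follows essentially the same route as the paper: both pass to the Fourier side via Plancherel and reduce the claim to pointwise positive semidefiniteness of the $2\times 2$ symbol matrix with off-diagonal entry $S(\xi)^2+a\xi^2$. The only difference is the final verification: the paper invokes Gerschgorin's circle theorem, which gives the eigenvalue bound $\kappa_j\ge(1+a\xi^2)-(S(\xi)^2+a\xi^2)=1-S(\xi)^2\ge 0$ in one line (using only $S(\xi)^2\le 1$, i.e.\ $a\le b$) and thereby avoids your quartic expansion --- which is nonetheless checked correctly, with the factor $(b-a)$ appearing exactly where the hypothesis $a<b$ must enter.
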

\begin{proof}
Due to Plancherel's identity, 
$\int_\R (\E(u)+\F_2(u))\,dx 
= \frac12\int_\R \hat u(\xi)^t D(\xi) \overline{\hat u(\xi)}\,d\xi$
where
\\ 
\begin{gather*}
 D(\xi)=\begin{pmatrix}
  1+a\xi^2 & S(\xi)^2+a\xi^2 \\ S(\xi)^2+a\xi^2 &   1+b\xi^2
\end{pmatrix}.
\end{gather*}
But by Gerschgorin's circle theorem, both eigenvalues 
$\kappa_1(\xi)$ and $\kappa_2(\xi)$ of $D(\xi)$ satisfy
\[
\kappa_j(\xi) > (1+a\xi^2)-(S(\xi)^2+a\xi^2) \ge 0,
\]
for $\xi\ne0$
since $0<a<b$. Thus $D(\xi)$ is positive definite and 
Claim \ref{cl:virial2} follows.
\end{proof}

\begin{claim} \label{cl:virial}
Let $0<\vap_0<1/\sqrt b$. Then for $\nu\in(0,\nu_0)$ we have
\[
\|[\psi_\vap,\dx]g\|_{L^2}+
\|[\psi_\vap,B^{-1}]g\|_{L^2}
=O(\vap\|\psi_\vap g\|_{L^2}).
\]
\end{claim}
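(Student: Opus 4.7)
The plan is to handle the two commutators separately and to exploit the fact that $\psi_\vap=\sech(\vap x)$ has logarithmic derivative $(\log\psi_\vap)'(x)=-\vap\tanh(\vap x)$, so that $|\psi_\vap'|\le\vap\psi_\vap$ pointwise and $\psi_\vap(x)/\psi_\vap(y)\le e^{\vap|x-y|}$ for all $x,y$. The derivative commutator is immediate: since $[\psi_\vap,\dx]g=-\psi_\vap'g$ and $|\psi_\vap'|\le\vap\psi_\vap$, one has $\|[\psi_\vap,\dx]g\|_{L^2}\le\vap\|\psi_\vap g\|_{L^2}$ with no further work.

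For the second commutator I would use the explicit convolution kernel of $B^{-1}$, namely $B^{-1}f(x)=(2\sqrt b)^{-1}\intR e^{-|x-y|/\sqrt b}f(y)\,dy$, to write
\[
[\psi_\vap,B^{-1}]g(x)=\frac1{2\sqrt b}\intR e^{-|x-y|/\sqrt b}\bigl(\psi_\vap(x)-\psi_\vap(y)\bigr)g(y)\,dy.
\]
The core of the proof is then a pointwise Lipschitz-type estimate for the weight, obtained from the bound on its log-derivative:
\[
|\psi_\vap(x)-\psi_\vap(y)|\le\vap|x-y|\,e^{\vap|x-y|}\sqrt{\psi_\vap(x)\psi_\vap(y)}.
\]
The geometric mean form is the crucial feature; it distributes the weight symmetrically in $x$ and $y$ so that the eventual Schur argument sees $\psi_\vap g$ on the input side.

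Combining the kernel and the Lipschitz bound gives a majorization of $|[\psi_\vap,B^{-1}]g(x)|$ by $C\vap\sqrt{\psi_\vap(x)}$ times a convolution of $\sqrt{\psi_\vap}|g|$ against the kernel $e^{-(1/\sqrt b-\vap)|x-y|}|x-y|$. The hypothesis $\vap<1/\sqrt b$ ensures this remaining kernel is in $L^1(\R)$. I would then finish by a Schur test: Cauchy--Schwarz in the inner convolution, Fubini, and the uniform bound $\int(\psi_\vap(x)/\psi_\vap(y))e^{-\beta|x-y|}\,dx\le C$ (valid once $\beta>\vap$, which is arranged by shrinking $\vap_0$) yield
\[
\|[\psi_\vap,B^{-1}]g\|_{L^2}^2\le C\vap^2\|\psi_\vap g\|_{L^2}^2,
\]
which is the claimed bound.

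The main obstacle is not any single estimate but rather arranging the weighted kernel inequalities so that the final right-hand side involves $\|\psi_\vap g\|_{L^2}$ rather than some unweighted norm of $g$; this is exactly what the symmetric geometric-mean form of the Lipschitz bound buys, and why the restriction $\vap<1/\sqrt b$ is essential---it is precisely what keeps the effective kernel $e^{-(1/\sqrt b-\vap)|x-y|}|x-y|$ integrable and allows the Schur comparison $\psi_\vap(x)\le e^{\vap|x-y|}\psi_\vap(y)$ to be absorbed.
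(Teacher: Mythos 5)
Your argument is correct in substance, but it takes a genuinely different route from the paper. The paper never touches the kernel of $B^{-1}$ at this point: it writes $[\psi_\vap,B^{-1}]=B^{-1}[B,\psi_\vap]B^{-1}=-bB^{-1}(2\psi_\vap'\dx+\psi_\vap'')B^{-1}$, uses the pointwise bounds $|\psi_\vap'|\le\vap\psi_\vap$, $|\psi_\vap''|\le 2\vap^2\psi_\vap$ together with $e^{-\vap|x|}\le\psi_\vap\le 2e^{-\vap|x|}$, and then invokes the already-proved weighted bounds for $B^{-1}$ and $B^{-1}\dx$ (Lemma~\ref{cl:1}); this is shorter, reuses existing machinery, and works for the full range $\vap_0<1/\sqrt b$. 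Your direct kernel estimate with the geometric-mean Lipschitz bound $|\psi_\vap(x)-\psi_\vap(y)|\le\vap|x-y|e^{\vap|x-y|}\sqrt{\psi_\vap(x)\psi_\vap(y)}$ and a Schur test is self-contained and equally valid, but the extra factors $e^{\vap|x-y|}$ from the Lipschitz bound and from the comparison $\psi_\vap(x)/\psi_\vap(y)\le e^{\vap|x-y|}$ force the effective decay rate down to roughly $1/\sqrt b-2\vap$, so as arranged you only get the claim for $\vap_0$ a fixed fraction of $1/\sqrt b$. You acknowledge this by "shrinking $\vap_0$", and since the Virial Lemma only needs \emph{some} positive $\vap_0$, this is harmless for the application — though it does mean you prove a formally weaker statement than the one displayed. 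Incidentally, the one-sided bound $|\psi_\vap(x)-\psi_\vap(y)|\le\psi_\vap(y)(e^{\vap|x-y|}-1)$ followed by Young's inequality with the $L^1$ kernel $e^{-|z|/\sqrt b}(e^{\vap|z|}-1)$, whose integral is $O(\vap)$, recovers the full range $\vap_0<1/\sqrt b$ without any Schur gymnastics.
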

\begin{proof}
The bound on the first commutator holds 
because $|\psi_\vap'/\psi_\vap|\le \vap$ uniformly. 
Note
\[
[\psi_\vap,B\inv]
= B\inv[B,\psi_\vap]B\inv
= -b B\inv (2\psi_\vap'\dx+\psi_\vap'') B\inv.
\]
Then since $|\psi_\vap''|\le 2\nu^2\psi_\vap$
and $ e^{-\vap|x|}\le \psi_\nu(x)\le 2e^{-\vap|x|}$, 
the bound on the second commutator follows 
by Lemma~\ref{cl:1}.
\end{proof}

\begin{proof}[Proof of Lemma \ref{lem:virial}]
Let $v_1=(q_1,r_1)$. 
By \qref{e.EF}, we compute
\begin{align}
\frac{d}{dt}
\int_\R \cha(x-\tilde x(t))
\E(v_1(t,x))\,dx  &= 
\int_\R \cha'(x-\tilde x(t))
(-(\D_t\tilde x)\E(v_1)- \F(v_1))\,dx 
\nonumber \\ & \le 
\int_\R \cha'(x-\tilde x(t))
(-c_1 \E(v_1)- \F(v_1))\,dx .
\end{align}
Now $\cha'(x-\ti x(t))=\vap\ti\psi^2$ where
$\ti\psi = \psi_\vap(x-\tilde{x}(t))$. 
Due to the commutator estimates of Claim \ref{cl:virial},
and then by Claim \ref{cl:virial2},
\begin{align}
& \int_\R \ti\psi^2 (-c_1\E(v_1)-\F_2(v_1)) \,dx  = 
\int_\R (-c_1\E(\ti\psi v_1)-\F_2(\ti\psi v_1))\,dx  
+ O(\vap\| \ti\psi v_1\|_{L^2}^2)
\nonumber \\
&\qquad \le 
(-c_1+1+O(\vap)) \int_\R \E(\ti\psi v_1) \,dx 
\le  
(-c_1+1+O(\vap)) \int_\R \ti\psi^2 \E(v_1) \,dx .
\end{align}
Moreover, 
$\int_\R \ti\psi^2 |r_1^2q_1|\,dx \le 
\|q_1\|_{L^\infty} \int_\R |\ti\psi r_1|^2\,dx$.
Writing $h(x)=r_1\dx q_1+2q_1\dx r_1$, we have the estimates
\begin{equation}
\|\ti\psi h \|_{L^2} \lesssim \|v_1\|_{L^\infty} \|\ti\psi v_1\|_{H^1},
\end{equation}
and by Lemma~\ref{cl:1},
\begin{align*}
\int_\R \ti\psi^2 |r_1\dx B\inv h |\,dx 
\le \|\ti\psi r_1\|_{L^2} \|\ti\psi \dx B\inv h \|_{L^2}
\lesssim
\|\ti\psi r_1\|_{L^2} \|\ti\psi h \|_{L^2}.
\end{align*}
It follows
\begin{equation}
  \label{eq:vir1}
\int_\R \ti\psi^2 |\F_3(v_1)| \,dx \lesssim 
\|v_1\|_{L^\infty} \int_\R \E(\ti\psi v_1)\,dx
\lesssim \|v_1\|_{L^\infty} \int_\R \ti\psi^2 \E(v_1)\,dx.
\end{equation}
By energy conservation \eqref{eq:energy} and the Sobolev imbedding theorem,
$\|v_1(t,\cdot)\|_{L^\infty} \lesssim E(v_0)^{1/2}$.
Thus, with $\mu=\frac12(c_1-1)$, say, 
if we choose $\vap_0$ and $\delta_4>0$ sufficiently small then 
it follows
\begin{equation}
  \label{eq:virial-diff}
\frac{d}{dt}\int_\R  \cha(x-\ti x(t)) \E(v_1(t,x))\,dx  +\mu
\int_\R \cha'(x-\ti x(t))\E(v_1(t,x))\,dx \le 0.
\end{equation}
Integrating this on $[0,t]$, we have \eqref{eq:virial}.
\end{proof}

\begin{corollary}
  \label{cor:virial}
Under the conditions of Lemma~\ref{lem:virial}, if 
there is a positive constant $\hat\sigma$ such that
$\D_t \ti x(t)\ge c_1+\hat\sigma$ for all $t$, then
\[
\intR \cha(x-\ti x(t))\E(v_1(t,x))\,dx
\le 
\intR \cha(x-\ti x(0)-\hat\sigma t)\E(v_0(x))\,dx,
\]
and this tends to 0 as $t\to\infty$.
\end{corollary}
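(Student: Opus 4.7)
The plan is to deduce the corollary directly from the Virial Lemma by applying it to a cleverly modified auxiliary trajectory, then apply dominated convergence for the limit.

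For the inequality: fix $t>0$, and introduce the auxiliary $C^1$ curve
\[
\ti x_{\mathrm{aux}}(s)=\ti x(s)+\hat\sigma\,(t-s),\qquad s\in[0,t],
\]
which interpolates between $\ti x_{\mathrm{aux}}(0)=\ti x(0)+\hat\sigma t$ and $\ti x_{\mathrm{aux}}(t)=\ti x(t)$. Its derivative is $\D_s \ti x_{\mathrm{aux}}(s)=\D_s\ti x(s)-\hat\sigma\ge(c_1+\hat\sigma)-\hat\sigma=c_1$, so (after extending the curve past $s=t$ with slope $c_1$ if one wishes a globally defined function as in the hypothesis of Lemma~\ref{lem:virial}) the Virial Lemma applies to $\ti x_{\mathrm{aux}}$. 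Dropping the nonnegative virial dissipation term in \eqref{eq:virial} gives
\[
\intR \cha(x-\ti x_{\mathrm{aux}}(t))\,\E(v_1(t,x))\,dx \le \intR \cha(x-\ti x_{\mathrm{aux}}(0))\,\E(v_0(x))\,dx,
\]
which upon substituting the endpoint values is exactly the claimed inequality.

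For the second statement, recall $\cha(y)=1+\tanh(\vap y)$ takes values in $[0,2]$ and satisfies $\cha(y)\to 0$ as $y\to-\infty$. Since $\hat\sigma>0$, for every fixed $x\in\R$ we have $x-\ti x(0)-\hat\sigma t\to-\infty$ as $t\to\infty$, so the integrand tends to $0$ pointwise. The dominating function is $2\E(v_0)\in L^1(\R)$ (because $v_0\in H^1$), and the dominated convergence theorem gives that the right-hand side vanishes as $t\to\infty$.

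There is essentially no obstacle here: the only minor subtlety is the requirement in Lemma~\ref{lem:virial} that $\ti x(t)$ be defined and satisfy the lower derivative bound \emph{for all $t$}, whereas our construction only runs on $[0,t]$; however, since the proof of the virial lemma proceeds by integrating the differential inequality \eqref{eq:virial-diff} on $[0,t]$, only values on this interval are used, and one may harmlessly extend $\ti x_{\mathrm{aux}}$ linearly with slope $c_1$ outside $[0,t]$.
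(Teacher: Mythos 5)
Your proof is correct and is essentially identical to the paper's: the paper defines, for each fixed $t_1>0$, the auxiliary trajectory $\ti x_1(s)=\ti x(s)-\hat\sigma(s-t_1)$, which is exactly your $\ti x_{\mathrm{aux}}$ with $t_1=t$, and then applies Lemma~\ref{lem:virial} and drops the nonnegative dissipation term. (The extension issue you flag is vacuous here, since $\ti x_{\mathrm{aux}}$ is already defined for all $s$ with $\D_s\ti x_{\mathrm{aux}}\ge c_1$ everywhere; your dominated-convergence argument for the limit, which the paper leaves implicit, is also fine.)
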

\begin{proof}
Given any $t_1>0$ define $\ti x_1(t)=\ti x(t) - \hat\sigma(t-t_1)$.
Then $\ti x_1(t_1)=\ti x(t_1)$ and $\D_t\ti x_1(t)\ge c_1$ for all $t$. 
Using $\ti x_1$ in place of $\ti x$ in Lemma~\ref{lem:virial}, we find
\[
\intR \cha(x-\ti x(t_1))\E(v_1(t_1,x))\,dx
\le
\intR \cha(x-\ti x_1(0))\E(v_0(x))\,dx .
\]
\end{proof}


\section{Stability estimates with a priori smallness assumptions}
\label{s:stab}

For the remaining three sections, we fix $c_0>\sigma>c_1>1$.
Also fix $\ap\in(0,\frac12\ap_{c_0})$ and suppose that in $L^2_\ap$,
$\Lco$ has no nonzero eigenvalue satisfying $\Re\lambda\ge0$.
Let $\nu_0$ be given by the Virial Lemma,
and fix $\nu\in(0,\nu_0)$ with $\nu\le\ap$.
Define
\begin{align*}
&
\bmone(T)=\sup_{t\in[0,T]}\|v_1(t)\|_{H^1}+\|\ti v_1\|_{L^2(0,T;W_\vap)},
\\ &
\bmtwo(T)=\sup_{t\in[0,T]}\|v_2(t)\|_{H^1_\alpha}
+\|v_2\|_{L^2(0,T;H^1_\alpha)},
\\ & 
\bmv(T)=\sup_{0\le t\le T}\|v(t)\|_{H^1}^2,
\\ &
\bmc(T)=\sup_{t\in[0,T]}|c(t)-c_0|,\qquad
\bmx(T)=\sup_{t\in[0,T]}|\dot{x}(t)-c(t)|,
\\ & 
\bmtot(T) = \bmv(T)+\bmone(T)+\bmtwo(T) +\bmc(T)+\bmx(T).
\end{align*}
We shall first deduce
a priori bounds on $\bmc$, $\bmx$, $\bmv$ and $\bmone$ in terms of
$\|v_0\|_{H^1}$ and $\bmtwo$.
\begin{lemma}
\label{lem:apb1}
There exists a positive constant $\delta_5$ such that if
$\|v_0\|_{H^1}+\bmtot(T) \le \delta_5$, then
\begin{align}
\label{eq:bmone}
& \bmone(T) \lesssim \|v_0\|_{H^1},\\
\label{eq:bmv}
& \bmv(T) \lesssim \|v_0\|_{H^1}+\bmtwo(T)^2,\\
\label{eq:bmc}
& \bmc(T) \lesssim  \|v_0\|_{H^1}+\bmtwo(T)^2, \\
\label{eq:bm2}
& \bmx(T) \lesssim \|v_0\|_{H^1}+\bmtwo(T)^2.
  \end{align}
\end{lemma}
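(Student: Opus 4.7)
The plan is to establish the four bounds in the order \eqref{eq:bmone}, \eqref{eq:bmc}, \eqref{eq:bmv}, \eqref{eq:bm2}, with each feeding into the next. The constant $\delta_5$ will be chosen small enough that the standing a priori assumption $\|v_0\|_{H^1}+\bmtot(T)\le\delta_5$ guarantees the hypotheses of the Virial Lemma~\ref{lem:virial}, the modulation Lemma~\ref{lem:modulation}, and Lemma~\ref{lem:sp-Ham}. For \eqref{eq:bmone}, the $H^1$ part is immediate from conservation of the positive-definite quadratic energy: $\|v_1(t)\|_{H^1}^2 \asymp E(v_1(t)) = E(v_0) \lesssim \|v_0\|_{H^1}^2$. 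For the weighted space-time norm I would apply Lemma~\ref{lem:virial} with $\ti x(t)=x(t)$; the required lower bound $\dot x(t)\ge c_1$ follows from $\dot x = c+(\dot x-c)$ together with the a priori bound, giving $\dot x\ge c_0-2\delta_5\ge c_1$ for $\delta_5$ small (recall $c_0>c_1$). The virial inequality then yields
\[
\mu\vap\int_0^T\!\!\int_\R \psi_\vap(x-x(t))^2 \E(v_1(t,x))\,dx\,dt \le \V(0) \lesssim \|v_0\|_{H^1}^2,
\]
and since $\sech^2(\vap y)\ge e^{-2\vap|y|}$, passing to the variable $y=x-x(t)$ converts the left side into $\|\ti v_1\|_{L^2(0,T;W_\vap)}^2$.

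For \eqref{eq:bmc} the crucial tool is the nearly-conserved quantity in \eqref{eq:modulation2}. Integrating \eqref{eq:modulation2} on $[0,t]$ gives
\[
c(t)-c_0 = \ip{v_0(\cdot+x_0),\z^*_{2,c_0}} - \ip{\ti v_1(t),\z^*_{2,c(t)}} + O\!\left(\int_0^t (\vonews^2+\|v_2(s)\|_{H^1_\ap}^2)\,ds\right).
\]
The first inner product is bounded by $\|v_0\|_{L^2}\|\z^*_{2,c_0}\|_{L^2}\lesssim\|v_0\|_{H^1}$; by \eqref{i:vz2} the second is $\lesssim \wnorm{\ti v_1(t)}\lesssim \bmone(T)$; and the integral is $\lesssim \bmone(T)^2+\bmtwo(T)^2$. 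Taking the supremum over $t\in[0,T]$ and substituting \eqref{eq:bmone} produces $\bmc(T)\lesssim \|v_0\|_{H^1}+\bmtwo(T)^2$.

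The bound \eqref{eq:bmv} follows immediately from Lemma~\ref{lem:sp-Ham} combined with \eqref{eq:bmc}: $\bmv(T)\lesssim \|v_0\|_{H^1}+\bmc(T)\lesssim \|v_0\|_{H^1}+\bmtwo(T)^2$. Finally, for \eqref{eq:bm2} I would insert the three prior bounds into \eqref{i:mod1}, which gives
\[
\bmx(T)\lesssim \bmone(T)+\bmtwo(T)\bigl(\bmone(T)+\bmv(T)^{1/2}+\bmtwo(T)\bigr);
\]
the mixed term $\bmtwo(T)\|v_0\|_{H^1}^{1/2}$ arising from $\bmv(T)^{1/2}\lesssim \|v_0\|_{H^1}^{1/2}+\bmtwo(T)$ is absorbed into $\|v_0\|_{H^1}+\bmtwo(T)^2$ by Young's inequality.

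The main technical point I expect is the apparent circularity in Step~1: the Virial Lemma needs $\dot x\ge c_1$, which is information about $\bmx(T)$, one of the quantities being bounded. This is resolved by treating $\bmtot(T)\le \delta_5$ as a standing a priori hypothesis and using it to bootstrap the stronger estimates. The subtler point is the role of the identity \eqref{eq:modulation2}: naive integration of \eqref{i:mod2} would yield only $\bmc(T)\lesssim \sqrt T\,\bmone(T)+\bmtwo(T)^2$ (via Cauchy--Schwarz on $\int_0^T\vonews\,ds$), which would spoil the global-in-time results in Theorems~\ref{thm:1} and \ref{thm:2}. Identity \eqref{eq:modulation2} is precisely what exchanges this problematic linear-in-time forcing for the boundary term $\ip{\ti v_1(t),\z^*_{2,c(t)}}$, which remains uniformly bounded in $t$.
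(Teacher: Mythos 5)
Your proposal is correct and follows essentially the same route as the paper: energy conservation plus the Virial Lemma (with $\ti x=x(t)$, using the a priori bound to secure $\dot x\ge c_1$) for \eqref{eq:bmone}, integration of the corrected conservation law \eqref{eq:modulation2} rather than \eqref{i:mod2} for \eqref{eq:bmc}, Lemma~\ref{lem:sp-Ham} for \eqref{eq:bmv}, and \eqref{i:mod1} for \eqref{eq:bm2}. Your closing remarks on the bootstrap structure and on why \eqref{eq:modulation2} is needed in place of a naive Cauchy--Schwarz on $\int_0^T\vonews\,ds$ accurately identify the two points the paper's argument hinges on.
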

\begin{proof}
Energy conservation and the Virial Lemma imply \eqref{eq:bmone}. 
Lemma \ref{lem:sp-Ham} implies
\begin{equation}
  \label{eq:apb13}
  \bmv(T) \lesssim \|v_0\|_{H^1}+\bmc(T).
\end{equation}
Integrating \eqref{eq:modulation2} and using 
that $c(0)=c_0$, we find
\begin{equation}
  \label{eq:apb11}
  |c(t)-c_0|
 \lesssim  \bmone(T)+\bmone(T)^2+\bmtwo(T)^2.    
\end{equation}
Combining this with \eqref{eq:bmone} we obtain \eqref{eq:bmc}.
Then \qref{eq:bmv} follows from \qref{eq:bmc} and \qref{eq:apb13}.
Finally, by \eqref{i:mod1} we have 
\begin{align*}
  \label{eq:apb12}
\bmx(T)& \lesssim \bmone(T)+(\bmone(T)+\bmv(T)^{1/2}+\bmtwo(T))\bmtwo(T)
\\& \lesssim \bmone(T)+ \bmone(T)^2+\bmv(T)+ \bmtwo(T)^2,
\end{align*}
and combining this with \eqref{eq:bmone} and \eqref{eq:bmv} 
we obtain \eqref{eq:bm2}. 
\end{proof}

Now we will estimate $\bmtwo(T)$, making use of the recentering lemma
(Lemma~\ref{lem:lineardecay2}).
\begin{lemma}
  \label{lem:aprm3}
Let $\delta_5$ be as in Lemma \ref{lem:apb1}. If $\delta_5$ is sufficiently
small, then $\bmtwo(T)\lesssim \|v_0\|_{H^1}.$
\end{lemma}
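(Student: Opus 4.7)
The plan is to apply the recentering decay estimate of Lemma~\ref{lem:lineardecay2} directly to the equation \eqref{eq:v2} for $v_2$, with the choices
\[
w=v_2,\qquad \eta(t)=\dot x(t)-c(t),\qquad F(t)=l(t)+\kone(t)+\ktwo(t),
\]
and $c(t)$ as given by the modulation. The nonsecularity hypothesis $P_{c(t)}v_2(t)=0$ holds by construction \eqref{eq:orth}. To license the use of Lemma~\ref{lem:lineardecay2}, I need $|c(t)-c_0|+|\eta(t)|<\hat\delta$; this follows from Lemma~\ref{lem:apb1} (applied under the a priori smallness bound $\|v_0\|_{H^1}+\bmtot(T)\le\delta_5$) by shrinking $\delta_5$ further if necessary, since both $\bmc(T)$ and $\bmx(T)$ are then bounded by $\|v_0\|_{H^1}+\bmtwo(T)^2$.

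Next I would estimate $\|F(s)\|_{H^1_\alpha}$ termwise. Since $l=(\dot x-c)\z_{1,c}+\dot c\,\z_{2,c}$ and both $\z_{j,c}$ lie in $H^1_\alpha$ uniformly in $c$ near $c_0$,
\[
\|l\|_{H^1_\alpha}\lesssim |\dot x-c|+|\dot c|\lesssim \vonews+\|v_2(s)\|_{H^1_\alpha}\bigl(\|v_1\|_{H^1}+\|v\|_{H^1}+\|v_2\|_{H^1_\alpha}\bigr)
\]
by Lemma~\ref{lem:modulation}. The bound $\|\kone\|_{H^1_\alpha}\lesssim\vonews$ is exactly \eqref{i:k1}, and \eqref{i:k2} gives $\|\ktwo\|_{H^1_\alpha}\lesssim\|v_2\|_{H^1_\alpha}(\|v\|_{H^1}+\|v_1\|_{H^1})$. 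Collecting, and using the a priori smallness,
\[
\|F(s)\|_{H^1_\alpha}\lesssim \vonews + \delta_5\,\|v_2(s)\|_{H^1_\alpha}.
\]

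The final step is to convert the pointwise decay estimate \eqref{eq:w-estimate} into bounds for both $\sup_{[0,T]}\|v_2(t)\|_{H^1_\alpha}$ and $\|v_2\|_{L^2(0,T;H^1_\alpha)}$. Since $v_2(0)=0$, for the sup I apply Cauchy-Schwarz to the convolution with $e^{-\beta(t-s)/3}$, and for the $L^2$ part I use Young's inequality ($L^1\ast L^2\hookrightarrow L^2$ with $\|e^{-\beta\cdot/3}\|_{L^1}<\infty$). Both yield
\[
\bmtwo(T)\lesssim \|F\|_{L^2(0,T;H^1_\alpha)}\lesssim \|\ti v_1\|_{L^2(0,T;W_\vap)}+\delta_5\,\bmtwo(T)\lesssim \bmone(T)+\delta_5\,\bmtwo(T).
\]
Invoking \eqref{eq:bmone} gives $\bmone(T)\lesssim \|v_0\|_{H^1}$, and choosing $\delta_5$ small enough to absorb the $\delta_5\,\bmtwo(T)$ term into the left-hand side completes the proof.

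The only real subtlety here, and the reason Lemma~\ref{lem:lineardecay2} was proved with the recentering technique, is that the time-dependent advection $(\dot x-c)\pd_y v_2$ cannot be treated as a forcing term of the type $\|\cdot\|_{H^1_\alpha}$ without losing a derivative; incorporating it into the generator via recentering is what makes the above clean bootstrap possible. Apart from that, the estimate is a standard forcing-plus-exponential-decay argument.
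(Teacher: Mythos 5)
Your proposal is correct and follows essentially the same route as the paper: apply Lemma~\ref{lem:lineardecay2} to \eqref{eq:v2} with $F=l+\kone+\ktwo$, bound $F$ via Lemma~\ref{lem:modulation}, \eqref{i:k1} and \eqref{i:k2}, and close with Young/Cauchy--Schwarz against the exponential kernel. The only (immaterial) difference is that the paper first removes the $O(\sqrt{\delta_5})\|v_2(s)\|_{H^1_\alpha}$ contribution pointwise in time by Gronwall's inequality before integrating, whereas you absorb it at the level of the integrated norm $\bmtwo(T)$; both are valid since $\bmtwo(T)$ is finite under the a priori hypothesis.
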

\begin{proof}
We will prove Lemma \ref{lem:aprm3} by applying Lemma \ref{lem:lineardecay2}
to \eqref{eq:v2}.
By Lemma \ref{lem:modulation} and the definition of $l(s)$,
we have for $s\in[0,T]$, 
\begin{align*}
\|l(s)\|_{H^1_\alpha} \lesssim
&\  |\dot{x}(s)-c(s)|+|\dot{c}(s)|
\\ \lesssim &\  \vonews+
(\bmv(T)^{1/2}+\bmone(T)+\bmtwo(T))\|v_2(s)\|_{H^1_\alpha}.
\end{align*}
And by \qref{i:k1} and \qref{i:k2}, we have
\begin{align*}
\|\kone(s)\|_{H^1_\alpha} \lesssim \vonews, \qquad
\|\ktwo(s)\|_{H^1_\alpha} \lesssim  
(\bmv(T)^{1/2}+\bmone(T))\|v_2(s)\|_{H^1_\alpha}.
\end{align*}
Since $v_2(0)=0$, Lemma \ref{lem:lineardecay2} implies that there is
a constant $C_1$ such that 
\begin{align*}
\|v_2(t)\|_{H^1_\alpha}\le & C_1\int_0^t
e^{-\beta(t-s)/3}\left(\vonews 
+(\delta_5+\sqrt{\delta_5})
\|v_2(s)\|_{H^1_\alpha}\right)ds\,,
\end{align*}
for $t\in[0,T]$.
For $\delta_5$ small enough,
$C_1(\delta_5+\sqrt{\delta_5})\le \beta/12.$
Then by Gronwall's inequality,
\begin{equation}
\label{eq:ap5}
\|v_2(t)\|_{H^1_\alpha}\le C_1\int_0^t
e^{-\beta(t-s)/4}\vonews\,ds
\end{equation}
for $t\in[0,T]$.
Using Young's inequality and  $\bmone(T)\lesssim \|v_0\|_{H^1}$, we infer
$$
\bmtwo(T)=\sup_{t\in[0,T]}\|v_2(t)\|_{H^1_\alpha}+\|v_2\|_{L^2(0,T;H^1_\alpha)}
\lesssim  \|v_0\|_{H^1}.$$
This completes the proof of Lemma \ref{lem:aprm3}.
\end{proof}

\section{Proof of asymptotic stability}
Now we are in position to complete the proof of Theorem \ref{thm:1}
concerning the stability of solitary wave solutions.
\begin{proof}
Let $\delta_5$ be a positive constant given by Lemma \ref{lem:aprm3}.
Since $u(0)=\varphi_{c_0}(\cdot-x_0)+v_0$, $v_1(0)=v_0$,
it follows from 
Proposition~\ref{cor:decomp}
that if $\|v_0\|_{H^1}$ is small enough,
then there exists a $T>0$ such that
\begin{equation}
  \label{eq:bmall}
  \bmtot(T) \le \delta_5.
\end{equation}
Lemmas \ref{lem:apb1} and \ref{lem:aprm3} imply that
\begin{equation}
  \label{eq:bmall2}
\bmtot(T) \lesssim \|v_0\|_{H^1}\le \frac{\delta_5}{2} , 
\end{equation}
provided $\|v_0\|_{H^1}$ is sufficiently small.
Let $T_1\in(0,\infty]$ be the maximal time so that 
the decomposition in Proposition~\ref{cor:decomp}
persists for $t\in[0,T_1]$ and \eqref{eq:bmall} holds for any $T<T_1$. 
If $T_1<\infty$, then by \eqref{eq:bmall2} and Proposition~\ref{cor:decomp},
there exists $T_2>T_1$ such that 
the decomposition in Proposition~\ref{cor:decomp}
exists for $t\in[0,T_2]$ and \eqref{eq:bmall} holds for $T=T_2$,
which is a contradiction. Thus $T_1=\infty$ and
\eqref{eq:bmall} holds for $T=\infty$.
It follows
\[
\|u(t)-u_{c_0}(\cdot-x(t))\|_{H^1}=\|u_{c(t)}+v(t)-u_{c_0}\|_{H^1}
\lesssim \|v(t)\|_{H^1}+|c(t)-c_0|\lesssim \|v_0\|_{H^1}.
\]
Thus we obtain \eqref{OS}.

Next we will prove \qref{Phase2} and \qref{Phase1}.  
By Corollary \ref{cor:virial}, since 
$c_1<\sigma< \inf_t \dot x(t)$ for $\|v_0\|_{H^1}$ small enough, we have 
\begin{equation}
  \label{eq:ap4}
\vonew \lesssim \intR \cha(x-\sigma t-x_0)\E(v_1(t,x))\,dx
\to0
\quad\text{as $t\to\infty$.}
\end{equation}
Integrating \qref{eq:modulation2} and combining 
\qref{eq:ap4} with \qref{i:vz2} and the estimate
\begin{equation}
  \label{eq:L2t-bound}
\int_0^\infty
(\wnorm{\ti v_1(s)}^2+\|v_2(s)\|_{H^1_\ap}^2 )\,ds
\le \bmone(\infty)^2+\bmtwo(\infty)^2 
\lesssim \|v_0\|_{H^1}^2,  
\end{equation}
we conclude that $\cp =\lim_{t\to\infty}c(t)$ exists
and $|\cp -c_0|\lesssim \|v_0\|_{H^1}$, whence \qref{Phase2}.
Then, using \eqref{eq:ap4} with \eqref{eq:ap5} we find
\begin{equation}
  \label{eq:ap6}
\|v_2(t)\|_{H^1_\alpha}\to 0 \quad\text{as $t\to\infty$,}
\end{equation}
so by \qref{i:mod1} we obtain \qref{Phase1}.
For use below, we note that since $\vap=\theta\alpha$
where $\theta\in(0,1]$,
interpolating by H\"older's inequality we also have
\begin{equation}\label{i:v2nu}
\|v_2(t)\|_{H^1_\vap}\lesssim
\|v_2(t)\|_{H^1}^{1-\theta}\|v_2(t)\|_{H^1_\ap}^\theta
\to0 \quad\mbox{as $t\to\infty$}.
\end{equation}

It remains to prove \eqref{AS}.
For this we will 
use a {monotonicity argument} as in \cite{MM05}, 
applying virial estimates to $v(t,y)$. 
First, observe that since $\|u_{c(t)}-u_{\cp }\|_{H^1}\to0$,
it suffices to show 
\begin{equation}\label{ASnew}
\|u(t)-u_{c(t)}(\cdot-x(t))\|_{H^1(x\ge\sigma t)}
=\|v(t)\|_{H^1(y\ge \sigma t-x(t))} \to0
\qquad\mbox{as $t\to\infty$.}
\end{equation}

We will follow the arguments in the proof of the Virial
Lemma up to \qref{eq:vir1}.
Note that \eqref{eq:ap4} and \eqref{eq:ap6}
already imply that
\begin{equation}
  \label{eq:ap8}
\|v(t)\|_{H^1(y>0)}\lesssim
\int_\R\cha(y)\mathcal{E}(v(t,y))\,dy\to 0
\quad\mbox{as $t\to\infty$.}
\end{equation}
By \qref{eq:v}, $v$ satisfies
\begin{equation}\label{eq:vv}
\D_t v = L v+f(v) + \dot x(t) \D_y v+ l +f'(u_c)v,
\end{equation}
hence
\begin{equation}
\D_t\E(v)= \D_y\F(v) +\dot x\D_y \E(v)+\E'(v)(l+f'(u_c)v),
\end{equation}
where with $v=(\bar q,\bar r)$, $\ti v = (\ti q,\ti r)$ we write
\[
\E'(v)(\ti v)= \bar q \ti q + \bar r \ti r + 
a(\D_x\bar q)(\D_x \ti q) +b(\D_x\bar r)(\D_x \ti r).
\]
Next, for any $t_1>0$ given, let $\ti y(t) = c_1t -x(t) +
x(t_1)-c_1t_1$, and compute
\begin{gather}
\label{eq:tiy}
\frac{d}{dt}
\intR \cha(y-\tilde y(t)) \E(v(t,y))\,dy  = 
\int_\R \cha'(y-\tilde y(t))
(-c_1 \E(v)- \F(v) )\,dy  + I_1(t),
\\
I_1(t)  = \intR \cha(y-\ti y(t)) 
\E'(v)(l+f'(u_c)v)\,dy.
\nonumber
\end{gather}
As in the proof of Lemma~\ref{lem:virial}, 
writing $\cha'(y-\ti y(t))=\vap\ti\psi^2$,
for $\|v_0\|_{H^1}$ sufficiently small, 
we are guaranteed that
\begin{multline}
\intR \cha'(y-\tilde y(t))
(-c_1 \E(v)- \F(v) )\,dy 
\\ \le (-c_1+1+O(\vap)+O(\|v(t)\|_{H^1}))
\intR \vap\ti\psi^2 \E(v) 
\le 0 .
\end{multline}
Moreover, due to the localized nature of $l=(\dot x-c)\zone+\dot c\ztwo$ 
and $f'(u_c)v$, using Lemmas~\ref{lem:modulation} and \ref{cl:1} we have 
\begin{align*}
 \| \E'(v)(l)\|_{L^1} 
&\ \lesssim 
(|\dot x-c|+|\dot c|) \|v\|_{W_\ap}
\lesssim 
  \wnorm{\ti v_1}^2+\|v_2\|_{H^1_\ap}^2,
\\
 \| \E'(v)(f'(u_c)v)\|_{L^1}
&\ \le 
\|e^{-\ap|y|}\E'(v)\|_{L^2}
\|e^{\ap|y|}f'(u_c)v\|_{L^2}
\\
&\ \lesssim
\|v\|_{W_\ap}
\|e^{\ap|y|}q_c v\|_{L^2}
\lesssim
\|v\|_{W_\ap}^2
 \lesssim
 \wnorm{\ti v_1}^2+\|v_2\|_{H^1_\ap}^2,
\end{align*}
thus $\int_0^\infty I_1(t)\,dt\lesssim \|v_0\|_{H^1}^2$ 
by \qref{eq:L2t-bound}.
Integrating \qref{eq:tiy} for $t_1\le t$ we find
that since $\ti y(t_1)=0$,
\begin{equation}
\intR \cha(y-\ti y(t)) \E(v(t,y))\,dy  
\le
\intR \cha(y) \E(v(t_1,y))\,dy  
+ \int_{t_1}^\infty I_1(t)\,dt.
\end{equation}
The right hand side tends to zero as $t_1\to\infty$. 
Since $\sigma t-x(t)\ge\ti y(t)$ 
provided $ (\sigma-c_1)t \ge x(t_1)-c_1t_1 $, 
we can conclude that \qref{ASnew} holds.
This completes the proof of Theorem \ref{thm:1}.
\end{proof}

\section{Asymptotic stability in weighted spaces}
It remains to prove Theorem~\ref{thm:2}.
In addition to the assumptions of Theorem~\ref{thm:1}, assume \qref{eq:g}
where $\gwt:\R\to\R$ is increasing with $\gwt(x)=1$ for $x\le0$
and $\int_0^\infty\gwt(x)\inv\,dx<\infty$.

\subsection{Convergence of the phase shift}
\label{subsec:12.1}
In this subsection, we will prove \eqref{Phase3} by using
the Virial Lemma. 
\begin{proof}[Proof of \eqref{Phase3}]
With $3\hat\sigma=\sigma-c_1$ we have $\dot x(t)\ge c_1+3\hat\sigma$,
so by Corollary~\ref{cor:virial} we have
\[
\vonew^2\le
\intR \cha(x-x(t)) \E(v_1(t,x))\,dx
\le  \intR\cha(x-2\hat\sigma t) \E(v_0(x))\,dx
\]
for $t$ so large that $\hat\sigma t+x_0\ge0$.
Since $\cha(x)\le \min(2,2e^{2\vap x})$ for all $x$, and
$\gwt$ is increasing, we have
\[
\cha(x-2\hat\sigma t) \le
\begin{cases}
 2\gwt(x)^2/\gwt(\hat\sigma t)^2&\quad\text{if $x\ge \hat\sigma t$,}
\\ 2 e^{-2\vap\hat\sigma t}&\quad\text{if $x\le \hat\sigma t$.}
\end{cases}
\]
Therefore, 
we have 
\begin{equation}
  \label{eq:v1bound}
\intR \cha(y) \E(\ti v_1(t,y))\,dy
\le
2\intR\left(e^{-2\vap\hat\sigma t}+
\frac{\gwt(x)^2}{\gwt(\hat\sigma t)^2}\right)
\E(v_0(x))\,dx
\lesssim 
\Theta(t)^2,
\end{equation}
where
\[
\Theta(t) = 
(e^{-\vap\hat\sigma t}+\gwt(\hat\sigma t)\inv) 
(\|\gwt v_0\|_{L^2}+\|\gwt\D_xv_0\|_{L^2}).
\]
Due to \qref{eq:ap5} and Young's inequality, 
since $\int_0^\infty \Theta(t)\,dt<\infty$, 
\eqref{eq:v1bound} implies
\begin{equation}
\label{eq:v1v2L1}
\|\ti v_1\|_{L^1(0,\infty;W_\vap)}+\|v_2\|_{L^1(0,\infty;H^1_\alpha)}
<\infty.
\end{equation}
In view of \eqref{eq:v1v2L1} and the modulation estimates 
\qref{i:mod1}--\qref{i:mod2},
$\dot{x}(t)-c(t)$ and $\dot{c}(t)$ are integrable on $(0,\infty)$.
To show convergence of
\begin{equation}
  \label{eq:convformula}
\lim_{t\to\infty}(x(t)-\cp t)=x_0+
\int_0^t(\dot{x}(s)-c(s))\,ds+\int_0^t(c(s)-\cp )\,ds\,,
\end{equation}
it suffices to prove that 
$c(t)-\cp \in L^1(0,\infty)$.

By \qref{eq:ap5} and \qref{eq:v1bound} and the fact that $\Theta$ is decreasing,
we have
\begin{equation}\label{eq:v2bound}
\|v_2(t)\|_{H^1_\ap} \le
\left( \int_0^{t/2}+\int_{t/2}^t \right) e^{-\beta(t-s)/4} \Theta(s)\,ds
\le \frac4\beta\left( e^{-\beta t/8}\Theta(0)+ \Theta(t/2)\right).
\end{equation}
Then for large $t\ge0$, since $\int_0^\infty \Theta(t)\,dt<\infty$,
\begin{equation}
  \label{eq:intbound}
\int_t^\infty
\left(\vonews^2+\|v_2(s)\|_{H^1_\alpha}^2\right)\,ds
\lesssim 
e^{-\beta t/8}+\Theta(t/2) .
\end{equation}
Since $\vonew$ is integrable,
it follows by integrating \eqref{eq:modulation2} and using \qref{i:vz2} that
\begin{equation}
  \label{eq:convformula2}
c(t)-\cp =\int_\infty^t\dot{c}(s)\,ds
= 
O\left(
\vonew+
\int_t^\infty
  \left( \vonews^2
+\|v_2(s)\|_{H^1_\alpha}^2\right)\,ds\right)  
\end{equation}
is integrable on $(0,\infty)$.  
Now letting 
\begin{equation}
  \label{eq:convformula3}
\xp:=x_0+\int_0^\infty(\dot{x}(s)-c(s))\,ds+\int_0^\infty(c(s)-\cp )\,ds,
\end{equation}
we obtain \eqref{Phase3}.
\end{proof}

\subsection{Exponentially localized data}
\label{subsec:12.2}
In this subsection, we will prove 
\eqref{Exp-phase-convergence}--\eqref{Exp-convergence}.
To begin with, we will prove exponential decay of $v_1(t)$.
\begin{lemma}
\label{lem:v1-exponential}
There is a positive constants $\hat C$ such that 
if $\|v_0\|_{H^1}\le\delta_4$ 
and $v_0\in H^1_\apo$ for some $\apo\in(0,\vap_0)$,
then for all $t\ge0$,
\begin{equation}
  \label{eq:v1-exponential}
\left\| e^{\apo(x-c_1 t)}v_1(t,\cdot)\right\|_{H^1}
\le \hat C \|v_0\|_{H^1_\apo}.
\end{equation}
\end{lemma}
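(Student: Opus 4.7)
My plan is to work in the moving frame at speed $c_1$: set $\tilde x=x-c_1t$ and $w(t,\tilde x)=v_1(t,\tilde x+c_1t)$, so that the stated bound \eqref{eq:v1-exponential} is equivalent to the uniform estimate $\|w(t)\|_{H^1_{\apo}}\le\hat C\|v_0\|_{H^1_{\apo}}$ for all $t\ge 0$. Since $v_1$ solves \eqref{eq:v1}, the function $w$ satisfies
\begin{equation*}
\pd_t w=(L+c_1\pd_{\tilde x})w+f(w),\qquad w(0)=v_0,
\end{equation*}
so I would combine a decay estimate for the linear semigroup $e^{t(L+c_1\pd)}$ on $H^1_{\apo}$ with control of the quadratic nonlinearity, and close the loop with a Duhamel-Gronwall argument.

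For the linear part, the similarity transformation \eqref{e.sim} with $c$ replaced by $c_1$ diagonalizes $L+c_1\pd$ as $\operatorname{diag}(\Q_+,\Q_-)$, with symbols $\hQ_\pm(\xi)=i\xi(c_1\pm\hS(\xi))$. Corollary~\ref{hQests}, applied with $c_1>1$ in place of $c$, delivers $\Re\hQ_\pm(k+i\apo)\le-\beta_1<0$ uniformly in $k\in\R$, provided $\apo$ is small enough that $\hS(i\apo)<c_1$; this holds since $\hS(i0)=1<c_1$, and I would shrink $\nu_0$ if necessary to guarantee it for all $\apo\in(0,\nu_0)$. The Hille-Yosida argument from Section~\ref{s.Lc} then yields $\|e^{t(L+c_1\pd)}\|_{B(H^1_{\apo})}\le Me^{-\beta_1 t}$ for some $M>0$. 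For the nonlinearity, since $f(w)$ is built from $B^{-1}$ and $B^{-1}\pd_x$ applied to the quadratic expressions $\bar r\pd_x\bar q$ and $\bar q\pd_x\bar r$ (with $w=(\bar q,\bar r)$), and since both of these operators map $L^2_{\apo}$ continuously into $H^1_{\apo}$ when $\apo<1/\sqrt b$ (by the Fourier bounds in \eqref{b.b2pa}--\eqref{b.b2pb}, or equivalently Lemma~\ref{cl:1}), distributing the weight onto one factor and controlling the other via $H^1\hookrightarrow L^\infty$ yields
\begin{equation*}
\|f(w)\|_{H^1_{\apo}}\lesssim\|w\|_{H^1}\|w\|_{H^1_{\apo}}.
\end{equation*}

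Energy conservation \eqref{eq:energy} applied to $v_1$ gives $\|w(t)\|_{H^1}\lesssim\|v_0\|_{H^1}\le\delta_4$ uniformly in $t$. Inserting this bound into the Duhamel formula for $w$ and applying Gronwall's inequality to $e^{\beta_1 t}\|w(t)\|_{H^1_{\apo}}$ produces $\|w(t)\|_{H^1_{\apo}}\le M\|v_0\|_{H^1_{\apo}}e^{(CM\delta_4-\beta_1)t}$, so shrinking $\delta_4$ to make $CM\delta_4\le\beta_1$ secures the claimed uniform (in fact exponentially decaying) bound. The one delicate ingredient is the uniform negativity of $\Re\hQ_-(k+i\apo)$ across all $k\in\R$: this is the analytic signature of the hypothesis $c_1>1$---the frame travels strictly faster than the maximum group velocity of the linear Benney-Luke system---and is precisely what Corollary~\ref{hQests} encodes.
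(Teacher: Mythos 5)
Your reduction to the frame moving with speed $c_1$, the diagonalization of $L+c_1\pd_x$ via \eqref{e.sim}, and the product estimate $\|f(w)\|_{H^1_{\apo}}\lesssim\|w\|_{H^1}\|w\|_{H^1_{\apo}}$ are all sound, and for a \emph{fixed} $\apo$ your Duhamel--Gronwall scheme does close. The genuine gap is quantitative but real: the linear decay rate you extract from Corollary~\ref{hQests} is $\beta_1=\apo\,(c_1-\hS(i\apo))=O(\apo)$, while the strength of the nonlinear perturbation in your scheme is $\|w\|_{H^1}\lesssim\|v_0\|_{H^1}\le\delta_4$, which carries no factor of $\apo$. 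Closing the Gronwall loop therefore forces $\delta_4\lesssim\beta_1\sim\apo(c_1-1)$, a smallness threshold that degenerates as $\apo\to0^+$. The lemma, however, asserts the bound for the \emph{fixed} constant $\delta_4$ produced by the Virial Lemma, uniformly over $\apo\in(0,\vap_0)$, and this uniformity is used downstream: in Theorem~\ref{thm:2} the threshold on $\|v_0\|_{H^1}$ is fixed by Theorem~\ref{thm:1} before $\apo$ is chosen. Your phrase ``shrinking $\delta_4$ to make $CM\delta_4\le\beta_1$'' silently makes $\delta_4$ depend on $\apo$ (and vanish with it), which is exactly where the argument falls short of the statement.

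The paper's proof is entirely different and essentially free of new analysis: it applies the already-established Virial Lemma with $\ti x(t)=c_1t+n$, multiplies the resulting monotonicity inequality by the constant $e^{2\apo n}$, and lets $n\to\infty$; since $e^{2\apo n}\chi_{\apo}(x-c_1t-n)$ increases to a multiple of $e^{2\apo(x-c_1t)}$, Beppo Levi's theorem yields $\int_\R e^{2\apo(x-c_1t)}\E(v_1(t,x))\,dx\le\int_\R e^{2\apo x}\E(v_0(x))\,dx$, which is \eqref{eq:v1-exponential}. The reason the virial route wins the uniformity is structural: there the nonlinear flux $\F_3$ contributes only $O(\|v_1\|_{L^\infty})\int\chi_{\apo}'\,\E$, which is absorbed by the transport term $(c_1-1)\int\chi_{\apo}'\,\E$, so the smallness needed is $\|v_0\|_{H^1}\ll c_1-1$, independent of the weight rate; your semigroup argument instead has to beat a nonlinearity of size $\delta_4$ with the $O(\apo)$ spectral gap. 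To repair your approach you would have to exploit the divergence (flux) structure of $f$ rather than just its size --- at which point you have essentially rederived the Virial Lemma.
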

\begin{proof}
Observe that $\bar{\chi}^n(t,x):=e^{2\apo n}\chi_{\apo}(x-c_1t-n)\to
e^{2\apo(x-c_1t)}$ monotonically as $n\to\infty$.
Then Lemma~\ref{lem:virial} implies that
for every $n\in\N$,
\begin{equation*}
\int_\R\bar{\chi}^n(t,x)\mathcal{E}(v_1(t,x))dx
\le \int_\R \bar{\chi}^n(0,x)\mathcal{E}(v_0(x))dx\,.
\end{equation*}
Letting $n\to\infty$, by using Beppo Levi's theorem we obtain
\begin{equation}
\label{eq:v1-exp2}
\int_\R e^{2\apo(x-c_1t)}\mathcal{E}(v_1(t,x))dx \le
\int_\R e^{2\apo x}\mathcal{E}(v_0(x))dx\,.
\end{equation}
Eq. \eqref{eq:v1-exponential} immediately follows.
\end{proof}

\begin{proof}[Proof of \eqref{Exp-phase-convergence}--\eqref{Exp-convergence}]
We suppose $v_0\in H^1_{\apo}$ where $0<\apo<\min(\nu_0,\ap)$.
Let $\gamma_1=\apo(\sigma-c_1)$.
Lemma \ref{lem:v1-exponential} implies that
since $\dot x(t)>\sigma$,
\begin{equation}
  \label{eq:v1w-exp}
\vonew \lesssim 
\|\ti v_1(t)\|_{H^1_\apo}\lesssim
e^{-\apo(x(t)-c_1t)}\lesssim
e^{-\gamma_1 t}.
\end{equation}
By \eqref{eq:v1w-exp}  and \eqref{eq:ap5}, we have 
\begin{equation}
  \label{eq:v2w-exp}
\|v_2(t)\|_{H^1_\alpha}\lesssim e^{-\gamma_2 t}
\end{equation}
for  $\gamma_2$ satisfying $0<\gamma_2< \min(\gamma_1, \beta/4)$.
Interpolating as we did in \qref{i:v2nu} then yields 
\[
\|v_2(t)\|_{H^1_\apo}\lesssim e^{-\gamma t},
\qquad \gamma=\gamma_2\ap_1/\ap.
\]
Thus by Lemma \ref{lem:modulation},
\begin{equation}
  \label{eq:Exp-phase-convergence}
|c(t)-\cp |=O(e^{-\gamma t})\quad\text{and}\quad
|x(t)-\cp t-\xp|=O(e^{-\gamma t})\quad\text{as $t\to\infty$.}
\end{equation}
Let $\xot=x(t)-\cp t-\xp$. 
Combining \eqref{eq:v1w-exp} and \eqref{eq:v2w-exp} with
\eqref{eq:Exp-phase-convergence}, we obtain
\begin{align}
 \|u_\cp- u(t,& \cdot+\,\cp t+\xp)\|_{H^1_\apo} 
= e^{\apo\xot}
 \| u_\cp(\cdot+\xot)-u(t,\cdot+x(t))\|_{H^1_\apo}
\nonumber\\ &\quad = 
e^{\apo\xot}
\|u_\cp(\cdot+\xot)-u_{c(t)} -\ti v_1(t) -v_2(t)\|_{H^1_\apo}
\nonumber\\ &\quad \lesssim 
|\xot|+|c(t)-\cp|+ \|\ti v_1(t)\|_{H^1_\apo}+ \| v_2(t)\|_{H^1_\apo}
= O(e^{-\gamma t}).
\label{eq:uucp}
\end{align}
Thus we prove \eqref{Exp-phase-convergence} and \eqref{Exp-convergence}.
\end{proof}

\subsection{Polynomially localized data}
In this subsection, we will prove \eqref{Poly-convergence},
essentially as an immediate consequence of the arguments of
subsection \ref{subsec:12.1}.
Below, let $x_+=\max(0,x)$, and let $\rp>1$ be constant.

\par
First, we remark that for a localized perturbation 
with $\gwt v_0\in H^1$ for $\gwt(x)=(1+x_+)^{\rp}$,
Eqs. \eqref{eq:v1bound} and \eqref{eq:v2bound} imply
\begin{equation}
  \label{eq:poly-decay}
\|\chi_{\vap/2}\tilde{v}_1(t)\|_{H^1}
+\|v_2(t)\|_{H^1_\ap}\lesssim (1+t)^{-\rp}.
\end{equation}
By \eqref{eq:convformula2} and \eqref{eq:poly-decay}, 
and the fact that $ \vonew
\lesssim \|\chi_{\vap/2}\tilde{v}_1(t)\|_{H^1}$,
\begin{equation}
\label{eq:c-polyconv}
\begin{split}
|c(t)-\cp | \lesssim &
(1+t)^{-\rp}+\int_t^\infty(1+s)^{-2\rp}\,ds\lesssim (1+t)^{-\rp}
\end{split}
\end{equation}
provided $\rp>1$.
Since $x(0)=x_0$ and
$\dot{x}(t)-c(t)=O(\vonew +\|v_2(t)\|_{H^1_\alpha})$
by Lemma \ref{lem:modulation}, 
\begin{equation}
\label{eq:x-polyconv}
\begin{split}
x(t)-\cp t-\xp=& \int_\infty^t
(\dot{x}(s)-c(s))\,ds+\int_\infty^t(c(s)-\cp)\,ds
= O\left((1+t)^{-\rp+1}\right)
\end{split}
\end{equation}
follows from \eqref{eq:poly-decay}, \eqref{eq:c-polyconv},
and \eqref{eq:convformula3}.
Now \eqref{Poly-convergence} follows from \eqref{eq:poly-decay},
\eqref{eq:c-polyconv} and \eqref{eq:x-polyconv} 
in a manner very similar to \qref{eq:uucp}, using the fact
that 
$ \chi_{\nu/2}(x)\lesssim \chi_{\ap/2}(x) \sim \min(1,e^{\ap x}) $.

\appendix
\section{Hamiltonian and variational structure}
\label{ap-ham}
The Benney-Luke equation \qref{e.bl} has a Hamiltonian structure which
we now describe. Also we show that the solitary-wave profile is an 
infinitely indefinite critical point of the naturally associated 
energy-momentum functional.  

We modify slightly the form in \cite{PQ99} to use $q$
in place of $\bphi$.  In terms of the conjugate momentum variable
\begin{equation}\label{e.pdef}
p= r+B^{-1}\left( \frac12 q^2\right),
\end{equation}
the Hamiltonian is given by
\begin{align} 
\CH &= \frac12 \int_\R  rBr+qAq\,dx 
\nonumber \\
&= \frac12 \int_{\R}
\left(p-B^{-1}\left(\frac12 q^2\right) \right) B\left(p- B^{-1}\left(\frac12
q^2\right)\right)+ qAq\,dx .
\label{e.hamdef}
\end{align}
We find that formally, taking variations with respect to $(q,p)$,
\[ \delta\CH  = 
\begin{pmatrix} 
-rq + Aq \\[5pt]
Br 
\end{pmatrix},
\]
and that \qref{e.qr} is equivalent to the following system
in Hamiltonian form,
\begin{equation} \label{ham}
\dt\begin{pmatrix} q \\ p \end{pmatrix} =
\CJ\, \delta\CH ,
\qquad \CJ = 
\begin{pmatrix} 0 & \partial_xB^{-1} \\ \partial_x B^{-1} & 0 \end{pmatrix}.
\end{equation}
Due to the translation invariance of the Hamiltonian,
Noether's Theorem assures the existence of the conserved
momentum functional
\[
\CN= \int_{\R}q Bp\,dx,
\]
with the property that $\CJ \delta\CN = \D_x$.
Solitary waves with speed $c> 0$ have profiles given as the stationary points
of the energy-momentum functional
\begin{equation}
\CH_c = \CH + c \CN.
\end{equation}
Noting that 
\[
\delta \CH_c = 
\begin{pmatrix}
-rq + Aq +cBp \\[5pt]
Br +c Bq
\end{pmatrix}
\]
one checks that solutions of $\delta \CH_c = 0$ satisfy
\qref{e.sol0}, 
and \qref{e.sol}-\qref{e.sol2} yields the localized solutions
for $c^2 >1$. 

The classic variational approach to proving orbital
stability for solitary waves \cite{Ben72,GSS1}
is based on showing that the second variation $\delta^2\CH_c$ 
has definite sign when subject to a finite number of constraints
induced by time-conserved quantities.
Here, at a critical point $(q,p)=(q_c,p_c)$, 
in terms of a variation $(\dot q,\dot r)=(\dot q,\dot p - B\inv(q\dot q))$ 
we can express the second variation as
\begin{align}
\ipbig{ 
\begin{pmatrix}\dot q \\ \dot r  \end{pmatrix},
\delta^2\CH_c 
\begin{pmatrix}\dot q \\ \dot r \end{pmatrix}
} &= 
\int_\R 
\begin{pmatrix}\dot q \\ \dot r  \end{pmatrix} ^T
\begin{pmatrix} A+3cq & cB \\ cB & B \end{pmatrix}
\begin{pmatrix}\dot q \\ \dot r \end{pmatrix}\, dx
\nonumber
\\ &=
\int_\R \dot q (A-c^2B+3cq)\dot q +
(\dot r+c\dot q) B (\dot r+c\dot q) \,dx .
\end{align}
The operator $B=I-b\dx^2$ is positive. However, since
$c^2 >1$ and $b>a$, the operator
\[
L_c=A-c^2B+3cq = (1-c^2)+(bc^2-a)\dx^2 + 3cq
\]
has the interval $(-\infty,1-c^2]$ as continuous spectrum. 
Zero is an eigenvalue, with eigenfunction $\dx q$ due to \qref{e.sol}.
Since this eigenfunction changes sign exactly once, oscillation theory implies
that $L_c$ has exactly one positive eigenvalue. 
Thus, $L_c$ is strictly negative except for two directions which are
associated with the two degrees of freedom of the solitary wave, while
$B$ is a positive operator. It follows that $\delta^2 \CH_c$ is infinitely
indefinite.  
This situation also occurs in the full water wave equations 
\cite{BS89} and in other Boussinesq-type nonlinear wave equations
having two-way wave propagation \cite{Sm92,PSW95}.


\section{Multiplicity of the zero eigenvalue}
\label{apx:Multzero}

Here our aim is to prove part (iv) of Lemma~\ref{l.lin1},
and determine the generalized kernels of both $\Lc$ and $\Lc^*$. 
We will show that $\lambda=0$ is an eigenvalue of the operator $\Lc$
with algebraic multiplicity two and geometric multiplicity one,
in the space $L^2_\ap$, $0<\ap<\apc$.

Let us write $(q,r)$ for $(q_c,r_c)$ below for simplicity.
By differentiating the solitary wave equations \qref{e.sol0}
with respect to $x$ and $c$ it follows directly that the functions
\begin{equation} 
\zone = \begin{pmatrix}\D_x q\cr \D_x r\end{pmatrix}, 
\qquad
\ztwo = -\begin{pmatrix}\D_c q\cr \D_c r\end{pmatrix},
\end{equation}
satisfy $\Lc\zone=0$, $\Lc\ztwo=\zone$.

By basic asymptotic theory 
for ODEs (after multiplying the second component by $B$),
solutions of $\Lc z=0$ satisfy $z(x)\sim v e^{\mu x}$
as $x\to\infty$, where $v\in\R^2$ and
where $\mu$ is an eigenvalue of the characteristic matrix,
satisfying
\[
\det \begin{pmatrix} c\mu & \mu\cr (1-a\mu^2)\mu &c\mu(1-b\mu^2)
\end{pmatrix}
=\mu^2((c^2-1)-(bc^2-a)\mu^2)=0.
\]
The roots are $\mu=\pm\apc $ and the double root $\mu=0$,
so any solution of $\Lc z=0$ that lies in 
the space $L^2_\ap$ decays exponentially to zero as $x\to\infty$
and must be a constant multiple of $\zone$. Thus $\lambda=0$
has geometric multiplicity one.

Next we treat the adjoint $\Lc^*$.
In the following lemma, $\D\inv$ denotes a
right inverse for $\D$ on the space $L^2_{-\ap}$ dual to $L^2_\ap$,
defined by $\D\inv g(x)=\int_{-\infty}^x g(y)\,dy$.
\begin{lemma} \label{l:adj}
Suppose $0<\ap<\apc$, and let
\begin{equation}\label{d:z2s}
\zzone =
\begin{pmatrix}Aq\cr Br\end{pmatrix}, \qquad
\zztwo = -c\begin{pmatrix}
q (\D\inv  \D_cq)+ B\D\inv \D_c p 
\\ B\D\inv \D_c q
\end{pmatrix} ,
\end{equation}
where $p=r+B\inv (\frac12 q^2)$.
Then $\zzone$ and $\zztwo$ lie in $H^1_{-\ap}$ 
and satisfy
$\Lc^*\zzone=0$,  $\Lc^*\zztwo=\zzone$.
Moreover, 
\begin{equation}\label{eq:duals}
\begin{pmatrix}
\ip{\zone,\zzone}  & \ip{\ztwo,\zzone} \\ 
\ip{\zone,\zztwo}  & \ip{\ztwo,\zztwo}  
\end{pmatrix}
=
\begin{pmatrix}
0 & -\beta_0 \\
-\beta_0 & \beta_1 
\end{pmatrix},
\end{equation}
with
\begin{equation} \label{d:beta01}
\beta_0 = \frac{d}{dc}E(u_c) >0,
\qquad \beta_1 
= c \left(\frac{d}{dc}\intR  q \right)\left(\frac{d}{dc}\intR p\right). 
\end{equation}
\end{lemma}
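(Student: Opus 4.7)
My plan is to verify the four assertions of the lemma in order: (a) the regularity $\zzone,\zztwo\in H^1_{-\ap}$, (b) $\Lc^*\zzone=0$, (c) $\Lc^*\zztwo=\zzone$, and (d) the biorthogonality matrix \eqref{eq:duals}. For (a), $q_c,r_c$ and their $c$-derivatives decay like $e^{-\apc|x|}$, so $Aq_c,Br_c,\D_c q_c,\D_c p_c$ all lie in $H^s_{-\ap}$ for every $s$. The antiderivatives $Q:=\D^{-1}\D_c q_c$ and $P:=\D^{-1}\D_c p_c$ vanish at $-\infty$ and tend to the finite constants $\D_c\int q_c$, $\D_c\int p_c$ at $+\infty$; the weight $e^{-\ap x}$ with $0<\ap<\apc$ controls them at both ends, and multiplication by $q_c$ as well as the operator $B$ preserve $H^1_{-\ap}$.

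For (b) and (c), I compute the adjoint with respect to the standard $L^2$ pairing,
\[
\Lc^*=\pmat{-c\D & -(A\D-r_c\D+r_c')B^{-1} \\ -\D & -c\D+(2q_c\D+q_c')B^{-1}},
\]
and substitute. Applied to $\zzone=(Aq_c,Br_c)$ with $r_c=-cq_c$, the first row yields $-cAq_c'-Ar_c'=0$ and the second row yields $-Aq_c'-cBr_c'+2q_c r_c'+q_c'r_c$, which vanishes by \eqref{e.sol0}. For (c), substituting $\zztwo_1=-c(q_c Q+BP)$ and $\zztwo_2=-cBQ$ with $Q'=\D_c q_c$ and $P'=\D_c p_c$, the $Q$-terms cancel in the second component, leaving $(\Lc^*\zztwo)_2=-cq_c\D_c q_c+cB\D_c p_c+c^2 B\D_c q_c$; differentiating $Bp_c=-cBq_c+\tfrac12 q_c^2$ in $c$ yields $B\D_c p_c=-Bq_c-cB\D_c q_c+q_c\D_c q_c$, and substituting gives $-cBq_c=Br_c=\zzone_2$. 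The first component simplifies to $3c^2 q_c\D_c q_c-c^2 Bq_c-c(c^2 B-A)\D_c q_c$, and the $c$-derivative of the profile equation \eqref{e.sol},
\[
(c^2 B-A)\D_c q_c=3cq_c\D_c q_c-2cBq_c+\tfrac32 q_c^2,
\]
collapses it to $c^2 Bq_c-\tfrac32 cq_c^2$, which equals $Aq_c=\zzone_1$ by a second application of \eqref{e.sol}. The particular form \eqref{d:z2s} is motivated by the Hamiltonian structure of Appendix~\ref{ap-ham}: in $(q,p)$ variables one has $\Lc^{(qp)}=\CJ\,\delta^2\CH_c|_{u_c}$ and $\zzone^{(qp)}=\delta\CH|_{u_c}=-c\,\delta\CN|_{u_c}$, so differentiating $\delta\CH_c|_{u_c}=0$ in $c$ gives $\delta^2\CH_c|_{u_c}\D_c u_c=-\delta\CN|_{u_c}$ and the adjoint equation reduces to $\CJ\zztwo^{(qp)}=-c\D_c u_c$; inverting $\CJ$ and converting back via $\tilde z_1=z_1+q_c B^{-1}z_2$, $\tilde z_2=z_2$ recovers \eqref{d:z2s}.

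For (d), integration by parts gives $\ip{\zone,\zzone}=\tfrac12\int\D(q_c Aq_c+r_c Br_c)\,dy=0$ and $\ip{\ztwo,\zzone}=-\tfrac{d}{dc}E(u_c)=-\beta_0$. The symmetry $\ip{\zone,\zztwo}=\ip{\Lc\ztwo,\zztwo}=\ip{\ztwo,\Lc^*\zztwo}=\ip{\ztwo,\zzone}=-\beta_0$ follows from (c) together with part (iv) of Lemma~\ref{l.lin1}. For $\ip{\ztwo,\zztwo}$, invariance of the pairing under the change of variables to $(q,p)$ gives $\ip{\ztwo,\zztwo}=c\int(BP)Q'+(BQ)P'\,dy$; two integrations by parts combined with $(PQ)'=P'Q+PQ'$ and $\int B(PQ)'\,dy=(PQ)(\infty)=P(\infty)Q(\infty)$ (using that $B$ commutes with $\D$ and that $P,Q$ tend to constants at $+\infty$) leave the single boundary contribution $cP(\infty)Q(\infty)=c(\D_c\int q_c)(\D_c\int p_c)=\beta_1$, where $P(\infty)=\D_c\int p_c$ uses $\int B^{-1}f=\int f$. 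Finally, $\beta_0>0$ follows from evaluating $E(u_c)=\tfrac12\int(1+c^2)q_c^2+(a+bc^2)(q_c')^2\,dy$ explicitly via \eqref{e.sol2} and differentiating.

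The main obstacle is step (c), specifically the first component of $\Lc^*\zztwo=\zzone$: the chain of cancellations among terms proportional to $Q$, $q_c\D_c q_c$, $B\D_c q_c$, and $A\D_c q_c$ must be organized around the correct $c$-differentiated profile identity, which is not obvious \emph{a priori}. The Hamiltonian picture makes the existence and shape of $\zztwo$ transparent, but the algebraic verification remains the principal computational effort.
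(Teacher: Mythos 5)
Your proposal is correct and follows essentially the same route as the paper: direct verification of $\Lc^*\zzone=0$ and $\Lc^*\zztwo=\zzone$ hinging on the $c$-differentiated profile equation, the duality identity $\ip{\zone,\zztwo}=\ip{\ztwo,\Lc^*\zztwo}=-\beta_0$, and the passage to $(q,p)$ variables with the symplectic operator $\CJ$ to reduce $\ip{\ztwo,\zztwo}$ to the boundary term $c\,P(\infty)Q(\infty)=\beta_1$. The only organizational difference is that the paper \emph{derives} the formula \eqref{d:z2s} for $\zztwo$ by eliminating $\ti q$ and solving, whereas you verify the given expression by substitution; the underlying identities are identical.
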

For use in Part II, we define vectors biorthogonal to $\zone$, $\ztwo$
via 
\begin{equation}\label{d:zz}
\begin{pmatrix}
\zaone\\ \zatwo
\end{pmatrix}
=
\begin{pmatrix}
\theta_1 & \theta_0\\
\theta_0 & 0
\end{pmatrix}
\begin{pmatrix}
\zzone \\ \zztwo
\end{pmatrix},
\qquad
\begin{pmatrix}
\theta_0 \\ \theta_1
\end{pmatrix}
=
-\frac{1}{\beta_0^2}
\begin{pmatrix}
\beta_0 \\ \beta_1
\end{pmatrix}.
\end{equation}
Then for $i,j=1,2$,
\begin{equation}\label{e:zaeq}
\ip{\z_{i,c},\z_{j,c}^*}=\delta_{ij},
\quad 
\Lc^* \zaone = \zatwo,
\quad 
\Lc^* \zatwo=0. 
\end{equation}

Taking the lemma for granted temporarily, 
we claim it follows that $\lambda=0$ has algebraic multiplicity 
exactly equal to two. Suppose the multiplicity is higher.
Then there exists $\z_3\in L^2_\ap$ with $\Lc\z_3=\ztwo$.
But then, by Lemma~\ref{l:adj}, since $\zzone\in H^1_{-\ap}$,
\[
0=\ip{\z_3,\Lc^*\zzone} = \ip{\Lc\z_3,\zzone}=\ip{\ztwo,\zzone}=-\beta_0\ne0.
\]
This contradiction shows $\z_3$ cannot exist, 
hence the multiplicity is exactly two.

\begin{proof}[Proof of Lemma~\ref{l:adj}]
It is straightforward to check that 
\begin{equation}
\Lc^* \zzone = 
\begin{pmatrix}\label{d:z1s}
-c\D &  (-A\D -c \D q+2cq')B\inv \cr 
-\D &
(-cB\D+2\D q- q')B\inv
\end{pmatrix}
\begin{pmatrix}Aq\cr -cBq\end{pmatrix}
=0.
\end{equation}
Next, we compute that
\begin{equation}
\ip{\zone,\zzone}=\intR((\D_x q)Aq+(\D_x r)Br )\,dx = 0,
\end{equation}
\[
\ip{\ztwo,\zzone}=-\intR((\D_c q)Aq+(\D_c r)Br )\,dx 
= -\frac{dE}{dc},
\]
where
\begin{equation}\label{d.H1}
E =E(u_c) = 
\frac12 \intR (qAq+rBr)\,dx = \frac12 \intR 
\left((1+c^2)q^2 + (a+bc^2)(\D_xq)^2\right)\,dx. 
\end{equation}
Using the facts that 
\[
\intR \sech^4\frac x2\,dx = \frac83, 
\qquad \intR\sech^4\frac x2\tanh^2\frac x2\,dx 
= \frac8{15},
\]
from the explicit expression \qref{e.sol2} for $q$, we find that  
\begin{eqnarray}
E &=& 
\frac43 (1+c^2)\frac{(c^2-1)^2}{c^2} \apc \inv
+ \frac4{15} (a+bc^2)\frac{(c^2-1)^2}{c^2} \apc \\
&=& \frac{4\rho^2}{15(\rho+1)} \left(
5(\rho+2)\sqrt{b+\frac{b-a}\rho}+(b\rho+(b+a))\sqrt{\frac{\rho}{b\rho+b-a}}
\right),
\nonumber
\end{eqnarray}
where $\rho=c^2-1$.
From this expression it is evident that $dE/dc>0$ for $c>1$. 

Next we find some $\zztwo\in H^1_{-\ap}$
such that $\Lc^*\zztwo=\zzone$.  Writing $\zztwo=(\ti q,\ti r)$,
this means
\begin{align*}
\Lc^*\zztwo =
\begin{pmatrix}
-c\D\ti q + (-A\D -cq\D+cq')B\inv \ti r \\
-\D\ti q + (-cB\D +2q\D+q')B\inv \ti r 
\end{pmatrix} = 
\begin{pmatrix} Aq\\ -cBq\end{pmatrix}
\end{align*}
Eliminating $\ti q$ and comparing with the equation obtained by
differentiating \qref{e.sol} in $c$, 
\begin{equation}
(A-c^2B+3c q)\D_c q = 2cBq-\frac32 q^2= \frac1c (Aq+c^2Bq),
\end{equation}
we may choose $\ti r=-cB\D\inv\D_cq$. 
Then since $2q\D+q'=\D q+q\D$,
\begin{align*}
\D(\ti q -q B\inv\ti r)=  (-cB+q)\D B\inv \ti r +c Bq= 
cB(q+c\D_cq)-cq\D_cq =-cB\D_cp.
\end{align*}
Hence $\zztwo$ is given by \qref{d:z2s}, and $\zztwo\in H^1_{-\ap}$.
Moreover, we find
\begin{equation}
\ip{\zone,\zztwo}= \ip{\Lc \ztwo,\zztwo} = \ip{\ztwo,\Lc^*\zztwo}
=\ip{\ztwo,\zzone}=-\frac{dE}{dc}=-\beta_0.
\end{equation}

Finally, we compute $\ip{\ztwo,\zztwo}$.  
Since $\D_cp=\D_cr+B\inv(q\D_cq)$, 
we can write
\[
\pmat{\D_cq \\ \D_cp} 
= \CT 
\pmat{\D_cq \\ \D_cr}, 
\quad
\CT =\pmat{ I & 0\\ B\inv q & I},
\quad \CT^{-*} =\pmat{ I & -qB\inv\\ 0 & I},
\]
and we note
\[
\zztwo = 
-c \pmat{ I & qB\inv\\ 0 & I}
B\D\inv \pmat{\D_cp \\ \D_cq} =
-c \CT^* \CJ\inv \pmat{ \D_cq\\\D_cp},
\]
with $\CJ$ as in \qref{ham}.  Hence we find
\begin{align}
&\ip{\ztwo,\zztwo} = \ip{\CT\ztwo,\CT^{-*}\zztwo} = 
c\ipbig{
\pmat{\D_cq \\ \D_cp} ,\CJ\inv
\pmat{\D_cq \\ \D_cp} 
} \nonumber \\
&\qquad = c \intR 
(\D_cq) B\D\inv (\D_cp) 
+(\D_cp) B\D\inv(\D_cq ) 
= c \left(\intR \D_c q \right)\left(\intR \D_c p\right). 
\end{align}
\end{proof}

\section{Null multiplicity of the zero characteristic value}
\label{apxNullmult}

In order to apply the Gohberg-Sigal theory, we need to show that 
(i) for the bundle $\ww(\lambda)$, $\lambda=0$ is a 
{\sl characteristic value of null multiplicity at least two},
and 
(ii) for the KdV bundle $\WW_0(\Lambda)$, $\Lambda=0$
is the only characteristic value satisfying $\Re\Lambda> -\hb$,
and has null multiplicity {\sl no more than two}.
According to what this means in the terminology of \cite{GS},
we need to prove the following.
\begin{lemma}\label{l.nullmult}
For some nontrivial analytic map $\lambda\mapsto \psi(\lambda)\in L^2_\ap$, 
$\|\ww(\lambda)\psi(\lambda)\|_\ap = o(\lambda)$
{ as $|\lambda|\to0$}.
\end{lemma}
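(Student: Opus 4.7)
The plan is to construct an order-two root function for $\ww(\lambda)$ at $\lambda=0$ by lifting the Jordan block structure of $\Lc$ at this eigenvalue (given by Lemma~\ref{l.lin1}(iv)) through the same reduction that produced $\ww$ from the resolvent equations. The construction is essentially forced by the Jordan chain $\Lc\zone=0$, $\Lc\ztwo=\zone$, together with the explicit formulas \qref{e.qr2}, \qref{e.fg2}, \qref{d.g3} and \qref{res.r2}.

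First I would take the natural approximate eigenvector
\[
z(\lambda)=\zone+\lambda\ztwo\in H^1_\ap,
\]
which is polynomial, hence entire, in $\lambda$. Since $\Lc\zone=0$ and $\Lc\ztwo=\zone$, a one-line computation gives
\[
(\lambda-\Lc)z(\lambda)=\lambda\zone+\lambda^2\ztwo-\lambda\zone=\lambda^2\ztwo.
\]
Writing $z(\lambda)=(q_1(\lambda),r_1(\lambda))$ and $(f_1,g_1):=(\lambda-\Lc)z(\lambda)=\lambda^2\ztwo$, I apply the change of variables \qref{e.qr2}, \qref{e.fg2} to obtain $(f_2,g_2)=O(\lambda^2)$ in $L^2_\ap$, and set
\[
r_2(\lambda)=-\calS q_1(\lambda)+r_1(\lambda)\in H^1_\ap,\qquad \psi(\lambda)=(\lambda-\Q_-)r_2(\lambda)\in L^2_\ap,
\]
both analytic in $\lambda$. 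By the derivation of \qref{res.r2}--\qref{d.g3}, $\ww(\lambda)\psi(\lambda)=g_3(\lambda)$; and since $(\lambda-\Q_+)\inv$ is uniformly bounded for $\lambda$ near $0$ by Corollary~\ref{hQests} (the spectrum of $\Q_+$ lies strictly in $\Re\lambda<-\ap c$) and $R_q\pm R_r$ are bounded on $L^2_\ap$, one concludes
\[
\|\ww(\lambda)\psi(\lambda)\|_\ap=\|g_3(\lambda)\|_\ap=O(\lambda^2)=o(\lambda)\quad\text{as }\lambda\to 0.
\]

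It remains to verify that $\psi$ is nontrivial, which I would do by showing $\psi(0)\ne 0$. Using $r_c=-cq_c$ gives $r_2(0)=-\calS q_c'-cq_c'=-(\calS+c)q_c'$, whence
\[
\psi(0)=-\Q_- r_2(0)=(c-\calS)(c+\calS)q_c''=(c^2-\calS^2)q_c''=B\inv(c^2B-A)q_c''.
\]
The profile equation \qref{e.sol} reads $(c^2B-A)q_c=\tfrac{3c}{2}q_c^2$; differentiating twice in $x$ yields
\[
\psi(0)=\tfrac{3c}{2}\,B\inv\D^2(q_c^2),
\]
which is nonzero because $q_c^2$ is a nontrivial element of $L^2_\ap$ and $B\inv\D^2$ is injective on $L^2_\ap$ (its kernel consists of affine functions, of which only $0$ lies there).

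The step that requires genuine care is the nontriviality check: \emph{a priori} it is conceivable that the reduction collapsing the $2{\times}2$ system to the scalar equation $\ww(\lambda)\psi=g_3$ could annihilate the leading term of the Jordan chain and so produce an order bound worse than quadratic. It is precisely the nonlinear solitary-wave profile equation \qref{e.sol} that rules this out, tying the Gohberg--Sigal multiplicity count back to the soliton structure. Apart from this verification, every other step is a routine consequence of the formulas already displayed in Section 5.
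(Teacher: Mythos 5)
Your proof is correct and is essentially the paper's own argument: the paper's root function $\D_x\rho-\lambda\D_c\rho$ (with $\rho=-\Q_-(-\calS q+r)$) coincides with your $\psi(\lambda)=(\lambda-\Q_-)r_2(\lambda)$ up to an $O(\lambda^2)$ correction, and both rest on transporting the Jordan chain $\Lc\zone=0$, $\Lc\ztwo=\zone$ (equivalently, the $x$- and $c$-derivatives of the profile equations) through the diagonalization \qref{e.qr2} and the reduction \qref{res.r2}. Your explicit check that $\psi(0)=\tfrac{3c}{2}B\inv\D^2(q_c^2)\ne 0$ is a welcome addition that the paper's write-up leaves implicit.
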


\begin{lemma}\label{l.nullkdv}
Suppose $\hap\in(0,(b-a)^{-1/2})$ and $\hb=\hap(1-(b-a)\hap^2)$.
Then 
$\WW_0(\Lambda)$ is invertible in $L^2_\hap$
whenever $\Re\Lambda>-\hb$ and $\Lambda\ne0$.
Moreover, $\WW_0(0)$ has one-dimensional kernel, 
and for no nontrivial analytic map $\Lambda\mapsto\psi(\Lambda)$ 
do we have
$\|\WW_0(\Lambda)\psi(\Lambda)\|_\ap = o(\Lambda^2)$
as $|\Lambda|\to0$.
\end{lemma}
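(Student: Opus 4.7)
The plan is to factor the bundle through the linearized KdV operator
$\mathcal{A} := \D - 3\D(\thez\,\cdot) - (b-a)\D^3$ on $L^2_\hap$, using the identity
\begin{equation*}
\WW_0(\Lambda)\, M_0(\Lambda) = \Lambda - \mathcal{A},
\qquad M_0(\Lambda) := \Lambda - \D + (b-a)\D^3.
\end{equation*}
First I would check, by a Fourier symbol estimate just like \qref{b.lomo}, that
$\Re(\Lambda - i\xi - (b-a)i\xi^3) \ge \Re\Lambda + \hb + 3(b-a)\hap k^2$
for $\xi = k + i\hap$; this makes $M_0(\Lambda)$ an analytic family of bounded isomorphisms of
$L^2_\hap$ throughout $\{\Re\Lambda > -\hb\}$, so invertibility of
$\WW_0(\Lambda)$ reduces to that of $\Lambda - \mathcal{A}$, with
$\ker \WW_0(\Lambda) = M_0(\Lambda)\ker(\Lambda - \mathcal{A})$.

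Next I would observe that $-3\D(\thez\,\cdot)$ is a relatively compact perturbation of
$\D - (b-a)\D^3$ on $L^2_\hap$ (by the compactness criterion of \cite{P85} applied
as in Lemma~\ref{l.lin1}(i)), so the essential spectrum of $\mathcal{A}$ is contained in
$\{\Re\Lambda \le -\hb\}$. In the remaining half-plane $\mathcal{A}$ has purely discrete spectrum,
and the classical Pego--Weinstein stability analysis for KdV solitons \cite{PW94} identifies
$\Lambda = 0$ as the sole such eigenvalue, with geometric multiplicity one and algebraic
multiplicity two: $\ker\mathcal{A} = \R\phi_1$ with $\phi_1 := \thez'$ (from translation invariance),
while differentiating the speed-$c$ soliton profile equation
$-(c+1)\theta_c + \tfrac{3}{2}\theta_c^2 + (b-a)\theta_c'' = 0$ in $c$ at $c=0$ supplies a
generalized eigenvector $\phi_2 := -\D_c\theta_c|_{c=0}$ with $\mathcal{A}\phi_2 = \phi_1$.
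This gives invertibility of $\WW_0(\Lambda)$ for every nonzero $\Lambda$ in the half-plane,
and $\ker \WW_0(0) = \R\cdot M_0(0)\thez'$ is one-dimensional.

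For the null-multiplicity claim, the key observation will be that analyticity and invertibility
of $M_0^{-1}$ at $\Lambda = 0$ set up a bijection $\tilde\psi(\Lambda) = M_0(\Lambda)^{-1}\psi(\Lambda)$
between analytic root functions of $\WW_0$ and of $\Lambda - \mathcal{A}$ at the origin,
preserving non-vanishing at $0$ and the order of vanishing of the norm.
The explicit $\tilde\psi(\Lambda) = \phi_1 + \Lambda\phi_2$ yields
$(\Lambda-\mathcal{A})\tilde\psi(\Lambda) = \Lambda^2\phi_2$, giving null multiplicity at least two.
To rule out a root function of order three, I would suppose an analytic
$\tilde\psi(\Lambda) = \sum_{k\ge 0} \Lambda^k\tilde\psi_k$ has $\tilde\psi_0 \ne 0$ and
$(\Lambda - \mathcal{A})\tilde\psi = O(\Lambda^3)$; matching coefficients would force
$\mathcal{A}\tilde\psi_0 = 0$, $\mathcal{A}\tilde\psi_1 = \tilde\psi_0$, and
$\mathcal{A}\tilde\psi_2 = \tilde\psi_1$, hence $\tilde\psi_0 = \alpha\phi_1$ with $\alpha\ne 0$,
$\tilde\psi_1 \in \alpha\phi_2 + \R\phi_1$, and $\tilde\psi_1 \in \operatorname{range}(\mathcal{A})$.
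By Fredholm duality this requires $\ip{\tilde\psi_1, \psi_0^*} = 0$ where
$\psi_0^* := \thez \in L^2_{-\hap}$ spans $\ker\mathcal{A}^*$ (checked directly using
$(b-a)\thez''' = \thez' - 3\thez\thez'$). Since
$\ip{\phi_1,\psi_0^*} = \int\thez'\thez\,dx = 0$, this collapses to the
Pego--Weinstein transversality $\alpha\ip{\phi_2,\psi_0^*} = -\tfrac{\alpha}{2}\tfrac{d}{dc}\!\int\theta_c^2\,dx|_{c=0} = 0$,
which the explicit soliton family $\theta_c(x) = (c+1)\sech^2(\sqrt{c+1}\,x/(2\sqrt{b-a}))$
contradicts (the derivative equals $8\sqrt{b-a}\ne 0$). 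The hardest step will be invoking
\cite{PW94} for the spectral description of $\mathcal{A}$; everything else is algebra from the factorization.
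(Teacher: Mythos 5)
Your argument is correct in substance, but note that the paper does not actually prove Lemma~\ref{l.nullkdv} at all: it defers entirely to the proof of Proposition 12.3 in Appendix C of \cite{PS2}, remarking only that the bundle there differs by a scaling. What you have written is essentially a reconstruction of that omitted proof, and it is the standard route: factor $\WW_0(\Lambda)M_0(\Lambda)=\Lambda-\mathcal{A}$ with $\mathcal{A}=\D-3\D(\thez\,\cdot)-(b-a)\D^3$, use the symbol bound \qref{b.lomo} to see that $M_0(\Lambda)^{-1}$ is a bounded analytic family on $\{\Re\Lambda>-\hb\}$ (so that characteristic values of $\WW_0$ correspond exactly to eigenvalues of $\mathcal{A}$), locate the essential spectrum by relative compactness, import from Pego--Weinstein the fact that $\Lambda=0$ is the only eigenvalue of the linearized KdV operator in the region to the right of the essential spectrum, read off the Jordan chain $\mathcal{A}\thez'=0$, $\mathcal{A}(-\D_c\theta_c|_{c=0})=\thez'$, and kill a third root vector with the transversality condition $\frac{d}{dc}\int\theta_c^2\,dx\big|_{c=0}\ne0$. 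Your reliance on \cite{PW94} for the absence of nonzero eigenvalues is the same reliance the paper makes (via \cite{PS2}), so nothing essential is being swept under the rug. Two small corrections: $M_0(\Lambda)$ is a third-order differential operator, hence an isomorphism $H^3_\hap\to L^2_\hap$ rather than a ``bounded isomorphism of $L^2_\hap$'' --- what you actually use, boundedness and analyticity of $M_0(\Lambda)^{-1}$ on $L^2_\hap$, is what the symbol estimate gives; and since $\int_\R\theta_c^2\,dx=\tfrac83\sqrt{b-a}\,(1+c)^{3/2}$, the derivative at $c=0$ is $4\sqrt{b-a}$, not $8\sqrt{b-a}$ --- it is nonzero either way, so the conclusion is unaffected. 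Finally, your reading of ``nontrivial'' as $\psi(0)\ne0$ (a root function in the sense of \cite{GS}) is the correct one; with ``not identically zero'' the statement would be vacuously violated by $\Lambda\cdot(\text{an order-two root function})$.
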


For the proof of Lemma~\ref{l.nullkdv} see the proof of
Proposition 12.3 in Appendix C of \cite{PS2}. 
(The KdV bundle $W_0(\lambda)$ there differs from $\WW_0(\Lambda)$
here by a simple scaling.)

To prove Lemma~\ref{l.nullmult} is a simple calculation when done in the right way.
The trick is to apply the transformation in \qref{e.qr2}
to the original solitary-wave equations \qref{e.qr}, {\em then} differentiate
with respect to $x$ and $c$.
Using \qref{e.sim} with $\lambda=0$ we find that 
\begin{eqnarray}
-\Q_+(Sq+r) + B\inv(rq'+2qr') = 0,\\
-\Q_-(-Sq+r)+ B\inv(rq'+2qr') = 0.
\end{eqnarray}
Thus,  with 
\[
\rho:= 
-\Q_+(Sq+r) =
-\Q_-(-Sq+r)
\]
we have
\begin{equation}
r = \frac12(-\Q_+\inv-\Q_-\inv)\rho,
\qquad q = \frac12 S\inv(-\Q_+\inv+\Q_-\inv)\rho,
\end{equation}
\begin{equation}\label{e.rho}
\rho+ B\inv(rq'+2qr') = 0.
\end{equation}
Differentiating these equations with respect to $x$, we find
\begin{equation}
\D_xr = \frac12(-\Q_+\inv-\Q_-\inv)\D_x\rho,
\qquad \D_x q = \frac12 S\inv(-\Q_+\inv+\Q_-\inv)\D_x\rho,
\end{equation}
\begin{equation}
\D_x\rho+B\inv( q'+2q\D_x)\D_xr + B\inv(r\D_x+2r')\D_xq =0.
\end{equation}
This yields
\begin{equation}\label{e.nul0}
(I+(R_q+R_r)(-\Q_+\inv)+(R_q-R_r)(-\Q_-\inv) )\D_x\rho = 
\ww(0)\D_x\rho=0.
\end{equation}
Next we differentiate with respect to $c$.
Since $\Q_\pm = c\D_x\pm S\D_x$ we have
\[
\D_c(Q_\pm\inv\rho) = Q_\pm\inv\D_c\rho-Q_\pm^{-2}\D_x\rho.
\]
Hence 
\begin{eqnarray}
\D_cr &=& \frac12(-\Q_+\inv-\Q_-\inv)\D_c\rho
+\frac12(\Q_+^{-2}+\Q_-^{-2})\D_x\rho, \\
\qquad \D_c q &=& \frac12 S\inv(-\Q_+\inv+\Q_-\inv)\D_c\rho
+\frac12 S\inv(\Q_+^{-2}-\Q_-^{-2})\D_x\rho. 
\end{eqnarray}
and therefore
\begin{eqnarray}
0&=& (I+(R_q+R_r)(-\Q_+\inv)+(R_q-R_r)(-\Q_-\inv)) \D_c\rho
\\&& \nonumber \ \qquad +\ ((R_q+R_r)\Q_+^{-2} + (R_q-R_r)\Q_-^{-2})\D_x\rho. 
\end{eqnarray}
Since 
\[
W'(\lambda)= -(R_q+R_r)(\lambda-\Q_+)^{-2}-(R_q-R_r)(\lambda-\Q_-)^{-2}, 
\]
this means 
\begin{equation}\label{e.nul1}
\ww(0)\D_x\rho-W'(0)\D_c\rho=0.
\end{equation}
Since $\D_x\rho$ and $\D_c\rho$ belong to the weighted space $L^2_\ap$,
combining \qref{e.nul0} and \qref{e.nul1} we obtain
\[
\ww(\lambda)(\D_x\rho-\lambda\D_c\rho) = o(\lambda)
\qquad\mbox{as $|\lambda|\to0$},
\]
in $L^2_\ap$, and this finishes the proof of Lemma~\ref{l.nullmult}.



\section{Exponential linear stability via recentering}
\label{sec:appendix-L}
To begin, we extend the linear stability estimate from
Theorem~\ref{t.splin} to
Sobolev spaces $H^n(\R)$ spaces of arbitrary order.
\begin{proposition}
\label{prop:Hnlinear} 
Fix $c>1$ and $\alpha$ with $0<\ap<\apc$, and let $n\ge0$ be an integer. 
Assume that $\Lc$ has no nonzero eigenvalue $\lambda$
satisfying $\Re\lambda\ge 0$. 
Then there exist positive constants $K_n$ and $\beta$ such that
for all $t\ge0$, 
\begin{equation}\label{i.semi2}
\|e^{\Lc t} Q_cz\|_{H^n_\alpha} \le K_ne^{-\beta t}\|z\|_{H^n_\alpha}\,,
\end{equation}
where $Q_c=I-P_c$ is the spectral projection 
complementary to the generalized kernel of $\mathcal{L}_c$.
\end{proposition}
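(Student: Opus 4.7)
The plan is to prove \eqref{i.semi2} by induction on $n$, using the $L^2_\alpha$ estimate of Theorem~\ref{t.splin} as the base case $n=0$ and bootstrapping one spatial derivative at a time. Let $z(t) = e^{\Lc t} Q_c z_0$ and set $w_k(t) = \D^k z(t)$. Differentiating the equation $\dt z = \Lc z$ and commuting $\D^k$ past $\Lc$ yields
\begin{equation*}
\dt w_k = \Lc w_k + R_k(t), \qquad R_k(t) := [\D^k,\Lc]z(t).
\end{equation*}
The key structural observation is that $\Lc = (\BBo) + \mathcal{K}$, where the leading part is a constant-coefficient Fourier multiplier (and hence commutes with $\D$), while the perturbation $\mathcal{K}$ identified in Lemma~\ref{l.lin1}(i) is assembled from the bounded Fourier multipliers $B^{-1}$ and $B^{-1}\D$ composed with multiplications by the smooth, exponentially decaying functions $q_c, q_c', r_c, r_c'$. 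Since Fourier multipliers commute with $\D$,
\begin{equation*}
[\D^k,\Lc] \;=\; [\D^k,\mathcal{K}] \;=\; \sum_{j=1}^k \binom{k}{j}\mathcal{K}^{(j)}\D^{k-j},
\end{equation*}
where each $\mathcal{K}^{(j)}$ is of the same schematic form as $\mathcal{K}$ but with the profile multipliers replaced by their $j$-th derivatives. Consequently $R_k(t)$ depends on $z(t)$ only through derivatives of order at most $k-1$, and $\|R_k(t)\|_{L^2_\alpha} \lesssim \|z(t)\|_{H^{k-1}_\alpha}$.

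For the inductive step, suppose \eqref{i.semi2} is established for orders $0,\ldots,n-1$, and invoke the Duhamel representation
\begin{equation*}
w_n(t) = e^{\Lc t}w_n(0) + \int_0^t e^{\Lc(t-s)}R_n(s)\,ds,
\end{equation*}
which I split via $I = P_c + Q_c$. For the finite-rank part, integration by parts gives $\la w_n(t), \z^*_{j,c}\ra = (-1)^n\la z(t), \D^n\z^*_{j,c}\ra$ for $j=1,2$; since the adjoint neutral modes are smooth with all derivatives in $L^2_{-\alpha}$ (a consequence of the explicit formulas in Lemma~\ref{l:adj} together with \eqref{i:adj1}--\eqref{i:adj2}), this yields $\|P_c w_n(t)\|_{L^2_\alpha} \lesssim \|z(t)\|_{L^2_\alpha} \le Ce^{-\beta t}\|z_0\|_{L^2_\alpha}$ from the base case. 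For the complementary part, $Q_c$ commutes with $e^{\Lc t}$, so applying Theorem~\ref{t.splin} to the Duhamel formula gives
\begin{equation*}
\|Q_c w_n(t)\|_{L^2_\alpha} \le K_0 e^{-\beta t}\|w_n(0)\|_{L^2_\alpha} + K_0\int_0^t e^{-\beta(t-s)}\|R_n(s)\|_{L^2_\alpha}\,ds,
\end{equation*}
and by the induction hypothesis the integrand is bounded by $CK_{n-1}e^{-\beta(t-s)}e^{-\beta s}\|z_0\|_{H^{n-1}_\alpha}$, producing a factor $t\,e^{-\beta t}$ that I absorb by slightly reducing the rate. Summing the $L^2_\alpha$ bounds on $w_k(t)$ over $k = 0,\ldots,n$ gives the desired $H^n_\alpha$ estimate.

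The only real obstacle is bookkeeping: verifying that $\mathcal{K}$ and the iterated commutators $[\D^k,\mathcal{K}]$ genuinely map $H^{k-1}_\alpha$ into $L^2_\alpha$ (which follows from the boundedness of $B^{-1}$ and $B^{-1}\D$ on $L^2_\alpha$ established in Lemma~\ref{lbun-a}, together with the uniform bounds and exponential decay of all derivatives of $q_c$ and $r_c$), and confirming $P_c$ is bounded on every $H^n_\alpha$ (which follows from the smoothness and decay of $\z^*_{1,c}, \z^*_{2,c}$). The incremental loss in the exponential rate is absorbed via $t^n e^{-\beta t} \le C_{n,\beta'}e^{-\beta' t}$ for any $\beta'<\beta$; accordingly, the $\beta$ in the conclusion may be taken to be any fixed positive constant strictly less than the $\beta$ supplied by Theorem~\ref{t.splin}.
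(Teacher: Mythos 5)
Your argument is correct in substance, but it takes a genuinely different and considerably longer route than the paper. The paper's proof is two lines: since $1$ lies in the resolvent set of $\Lc$ and $Q_c$ commutes with $\Lc$, the operator $(1-\Lc)^n$ is an isomorphism from $Q_cH^n_\alpha$ onto $Q_cL^2_\alpha$, and since it also commutes with $e^{\Lc t}$ one simply writes $\|e^{\Lc t}Q_cz\|_{H^n_\alpha}\lesssim\|e^{\Lc t}Q_c(1-\Lc)^nz\|_{L^2_\alpha}\le Ke^{-\beta t}\|(1-\Lc)^nz\|_{L^2_\alpha}\lesssim e^{-\beta t}\|z\|_{H^n_\alpha}$, directly from Theorem~\ref{t.splin}. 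This exploits that $\Lc$ is a first-order operator (the smoothing $B\inv$ makes $B\inv A\D$ effectively first order), so that $D(\Lc^n)=H^n_\alpha$, and it yields the \emph{same} decay rate $\beta$ with no commutator bookkeeping and no separate discussion of $P_c$ on $H^n_\alpha$. Your induction-plus-Duhamel scheme works --- the commutator $[\D^n,\Lc]=[\D^n,\mathcal{K}]$ does lose one derivative relative to $\D^n$ once the outer $B\inv$ is exploited, the $P_cw_n$ term is correctly controlled by integrating by parts onto the smooth adjoint modes, and the secular factor $te^{-\beta t}$ from the convolution is harmless --- but it costs you an arbitrarily small loss in the exponential rate and a fair amount of verification that the paper's algebraic identity makes unnecessary. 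One point worth stating explicitly in your version is that $e^{\Lc t}$ preserves $H^n_\alpha$ (so that $w_n(t)=\D^nz(t)$ and its evolution equation make sense a priori); this follows from the same perturbation argument as in Lemma~\ref{l.lin1}(ii) applied in $H^n_\alpha$, but it is an ingredient your induction quietly assumes, whereas the paper's resolvent identity sidesteps it.
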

\begin{proof}
Since $1$ is in the resolvent set of $\mathcal{L}_c$ by Lemma \ref{l.lin1},
and $Q_c$ commutes with $\Lc$, we see that 
$(1-\mathcal{L}_c)^n$ is an isomorphism from $Q_cH^n_\alpha$ to
$Q_cL^2_\alpha$. 
Applying Theorem \ref{t.splin}, we find
\begin{align*}
\|e^{\Lc t} Q_cz\|_{H^n_\alpha}
\lesssim \|(1-\mathcal{L}_c)^ne^{\Lc t} Q_cz\|_{L^2_\alpha}\lesssim
e^{-\beta t}\|z\|_{H^n_\alpha}.
\end{align*}
This completes the proof.  
\end{proof}

Our main goal in this appendix is to prove Lemma \ref{lem:lineardecay2} 
by a recentering argument. Such arguments were used 
to analyze pulse dynamics by Ei \cite{Ei02} for reaction-diffusion systems 
and Promislow \cite{Pr02} for damped Schr\"odinger equations.
See also \cite{Miz03} for a result for gKdV equations.
Although here we merely analyze stability of a single solitary wave,
we need these arguments because a general perturbation in the energy 
space may create a divergent phase shift of solitary waves.

To prove Lemma \ref{lem:lineardecay2}, we need to compare weighted norms of
$w$ in recentered moving coordinates. 
Recall that $\tau_h$ is a translation operator defined by
$(\tau_hf)(x):=f(x-h)$.
\begin{claim}
\label{cl:apb2}
Let $c_0>1$ and $\alpha\in(0,\alpha_{c_0})$.
There exists positive constants $\delta_6$, $\delta_7$ and $C_0$
such that if $|c-c_0|<\delta_6$ and $|h|<\delta_7$,
then for any $v\in H^1_\alpha$ with $P_cv=0$,
$$C_0^{-1} \|v\|_{H^1_\alpha}\le \|Q_{c_0}\tau_h v \|_{H^1_\alpha}
\le C_0\|v\|_{H^1_\alpha}.$$
\end{claim}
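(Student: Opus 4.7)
The plan is to establish the two inequalities separately, with the upper bound being essentially immediate and the lower bound requiring a small perturbation argument based on the orthogonality $P_cv=0$.

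For the upper bound, a change of variables gives $\|\tau_h v\|_{H^1_\alpha}=e^{\alpha h}\|v\|_{H^1_\alpha}$. Moreover $P_{c_0}=\langle\cdot,\zaone\rangle\z_{1,c_0}+\langle\cdot,\zatwo\rangle\z_{2,c_0}$ is a bounded projection on $H^1_\alpha$, because \qref{i:adj1}--\qref{i:adj2} place $\zaone$ and $\zatwo$ in $L^2_{-\alpha}\subset(H^1_\alpha)^*$. Composing bounded maps yields $\|Q_{c_0}\tau_h v\|_{H^1_\alpha}\le Ce^{\alpha h}\|v\|_{H^1_\alpha}$, which suffices for $|h|<\delta_7$.

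For the lower bound, start from the decomposition $\tau_h v=Q_{c_0}\tau_h v+P_{c_0}\tau_h v$ and use the identity $\|\tau_h v\|_{H^1_\alpha}=e^{\alpha h}\|v\|_{H^1_\alpha}$ to reduce the problem to showing
\begin{equation*}
\|P_{c_0}\tau_h v\|_{H^1_\alpha}\le \tfrac14\|v\|_{H^1_\alpha}
\end{equation*}
when $\delta_6,\delta_7$ are small enough. The central observation is that the constraint $P_cv=0$ (i.e.\ $\langle v,\z_{i,c}^*\rangle=0$) allows subtraction of a ``base point'' in the inner product:
\begin{equation*}
\langle\tau_h v,\z_{i,c_0}^*\rangle=\langle v,\tau_{-h}\z_{i,c_0}^*\rangle=\langle v,\tau_{-h}\z_{i,c_0}^*-\z_{i,c}^*\rangle,
\end{equation*}
so that by the $L^2_\alpha$--$L^2_{-\alpha}$ duality,
\begin{equation*}
|\langle\tau_h v,\z_{i,c_0}^*\rangle|\le \|v\|_{H^1_\alpha}\bigl(\|\tau_{-h}\z_{i,c_0}^*-\z_{i,c_0}^*\|_{L^2_{-\alpha}}+\|\z_{i,c_0}^*-\z_{i,c}^*\|_{L^2_{-\alpha}}\bigr).
\end{equation*}
Both quantities on the right tend to zero as $|h|+|c-c_0|\to 0$: the first by strong continuity of translation on $L^2_{-\alpha}$ applied to the smooth vectors $\z_{i,c_0}^*$, and the second by the continuous (indeed smooth) dependence of $\z_{i,c}^*$ on $c$ with values in $L^2_{-\alpha}$ already used in Lemma~\ref{lem:decomp}. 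Shrinking $\delta_6,\delta_7$ makes the bracketed factor as small as desired; since $\|\z_{i,c_0}\|_{H^1_\alpha}$ is a fixed constant, the estimate on $\|P_{c_0}\tau_hv\|_{H^1_\alpha}$ follows, and then the triangle inequality yields $\|v\|_{H^1_\alpha}\le C_0\|Q_{c_0}\tau_h v\|_{H^1_\alpha}$ with, say, $C_0=4$.

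I do not foresee a serious obstacle: the only point that needs a little care is the joint continuity of $(c,h)\mapsto\tau_{-h}\z_{i,c}^*$ in $L^2_{-\alpha}$, and this is routine given the pointwise decay estimates \qref{i:adj1}--\qref{i:adj2} together with the smoothness of the adjoint modes as functions of $c$. The whole argument is essentially a near-identity perturbation statement for the spectral projection under small translations plus small wave-speed changes.
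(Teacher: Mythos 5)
Your proof is correct and follows essentially the same route as the paper: both arguments reduce the claim to showing that $\|P_{c_0}\tau_h v\|_{H^1_\alpha}$ is small compared to $\|v\|_{H^1_\alpha}$ by exploiting $P_cv=0$ (the paper splits this as $P_{c_0}(\tau_hv-v)+(P_{c_0}-P_c)v$, which is the operator-side version of your subtraction of $\z_{i,c}^*$ inside the pairing), and then conclude with the triangle inequality and $\|\tau_hv\|_{H^1_\alpha}=e^{\alpha h}\|v\|_{H^1_\alpha}$. The only difference is cosmetic: the paper records the quantitative Lipschitz bound $\lesssim(|c-c_0|+|h|)\|v\|_{L^2_\alpha}$, whereas you invoke qualitative continuity of $(c,h)\mapsto\tau_{-h}\z_{i,c}^*$ in $L^2_{-\alpha}$, which suffices here.
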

\begin{proof}
Since $P_cv=0$ we have
\begin{align*}
\|P_{c_0}\tau_hv\|_{H^1_\alpha}\le & \|P_{c_0}(\tau_hv-v)\|_{H^1_\alpha}
+ \|(P_{c_0}-P_c)v\|_{H^1_\alpha}
 \lesssim  (|c-c_0|+|h|)\|v\|_{L^2_\alpha} .
\end{align*}
Combining this with
$\|\tau_hv\|_{H^1_\alpha}=e^{\alpha h}\|v\|_{H^1_\alpha}$ and
$$
\left|\|Q_{c_0}\tau_hv\|_{H^1_\alpha}-\|\tau_hv\|_{H^1_\alpha}\right|
\le \|P_{c_0}\tau_hv\|_{H^1_\alpha}\,,$$
we have Claim \ref{cl:apb2}.
\end{proof}
\begin{proof}[Proof of Lemma \ref{lem:lineardecay2}]
To handle the time dependent advection term $\eta(t)\pd_yw$, we use
a sequence of coordinate frames moving with the constant speed $c_0$,
changing the phase from time to time so that the center of coordinates 
remains close to the solitary wave position $x(t)$ for all time.  

Let the constants $K_1$ and $\beta$ be as given by
Proposition~\ref{prop:Hnlinear}, and let $\delta_6$, $\delta_7$ and $C_0$
be from Claim~\ref{cl:apb2}. We fix $T_1>0$ such that 
\begin{equation}\label{i.T1}
  C_0^2K_1e^{-\beta T_1/6}\le 1,
\end{equation}
and let $t_j=jT_1$ for $j\ge0$.
Also let $h_j(t):=\int_{t_j}^t\bigl(c(s)-c_0+\eta(s)\bigr)\,ds$.
Under assumption \qref{a.ceta}, 
for $\hat\delta$ small enough we have that for every $j$,
\begin{equation}
  \label{eq:tj1}
\sup_{t\in[t_j,t_{j+1}]}|h_j(t)|\le T_1\hat\delta\le \delta_7 .
\end{equation}

Now let $ w_j(t)=Q_{c_0}\tau_{h_j(t)}w(t)$.
We rewrite \eqref{eq:linearw} as
\begin{equation*}
  \pd_tw_j=\mathcal{L}_{c_0}w_j+Q_{c_0}\tau_{h_j(t)}(F(t)+\widetilde{F}(t)),
\end{equation*}
where $\widetilde{F}(t)=
(f'(u_{c(t)})-\tau_{-h_j(t)}f'(u_{c_0})\tau_{h_j(t)})w(t)$.
Using the variation of constants formula, we have
\begin{equation}
  \label{eq:integral}
w_j(t)=e^{(t-t_j)\mathcal{L}_{c_0}}Q_{c_0}w_j(t_j)
+\int_{t_j}^te^{(t-s)\mathcal{L}_{c_0}}Q_{c_0}\tau_{h_j(s)}
(F(s)+\widetilde{F}(s))\,ds.
\end{equation}
By \eqref{eq:tj1} and the definition of $H^1_\alpha$,
the operator norm $\|\tau_{h_j(s)}\|_\ap \le 
e^{\alpha\delta_7}$.
Since $P_{c(t)}w(t)=0$, 
Claim \ref{cl:apb2} implies that for $t\in[t_j,t_{j+1}]$,
\begin{equation}
  \label{eq:equiv2}
C_0^{-1}\|w(t)\|_{H^1_\alpha}\le \|w_j(t)\|_{H^1_\alpha}\le 
C_0 \|w(t)\|_{H^1_\alpha},
\end{equation}
whence
\begin{align*}
\|\widetilde F(s)\|_{H^1_\alpha} \lesssim 
(|c(t)-c_0|+|h_j(t)|) \|w_j(t)\|_{H^1_\alpha}
\le (1+T_1)\hat\delta \|w_j(t)\|_{H^1_\alpha}.
\end{align*}
Applying Proposition~\ref{prop:Hnlinear} to 
\eqref{eq:integral} and using the estimates above, we have
\begin{align*}
& e^{\beta t}\|w_j(t)\|_{H^1_\alpha}\le 
K_1e^{\beta t_j}\|w_j(t_j)\|_{H^1_\alpha}
 +K_1 e^{\alpha\delta_7}
\int_{t_j}^te^{\beta s}
\|F(s)+\widetilde{F}(s)\|_{H^1_\alpha}\,ds
\\ &\quad  \le K_1e^{\beta t_j}\|w_j(t_j)\|_{H^1_\alpha}
+C_1\int_{t_j}^t e^{\beta s}
\left(\|F(s)\|_{H^1_\alpha}
+\hat\delta
\|w_j(s)\|_{H^1_\alpha}\right)\,ds
\end{align*}
for $t\in[t_j,t_{j+1}]$, where $C_1$ is a constant
independent of $j$.
Supposing $C_1\hat\delta\le\beta/2$, 
by Gronwall's inequality we infer
that for $t\in[t_j,t_{j+1}]$, 
\begin{equation}
  \label{eq:wjnorm}
e^{\beta t/2}\|w_j(t)\|_{H^1_\alpha}\le 
K_1e^{\beta t_j/2}\|w_j(t_j)\|_{H^1_\alpha}
+C_1\int_{t_j}^t e^{\beta s/2}\|F(s)\|_{H^1_\alpha}\,ds.
\end{equation}
By using \eqref{eq:equiv2} and then using
\qref{eq:wjnorm} with $j$ replaced by $j-1$, 
we find
\begin{align}
  \label{eq:induction1}
& e^{\beta t_j/2}
\|w_j(t_j)\|_{H^1_\alpha}
 \le 
e^{\beta t_j/2}
C_0^2\|w_{j-1}(t_j)\|_{H^1_\alpha}
\nonumber\\
&\quad \le 
C_0^2K_1
e^{\beta t_{j-1}/2}
\|w_{j-1}(t_{j-1})\|_{H^1_\alpha}
+ C_0^2C_1\int_{t_{j-1}}^{t_j} e^{\beta s/2}\|F(s)\|_{H^1_\alpha}\,ds.
\end{align}
Now $C_0^2K_1 
\le 
e^{\beta (t_j-t_{j-1})/6}$ due to \qref{i.T1}, 
and $ e^{\beta s/2} \le e^{\beta s/3} e^{\beta t_j/6} $
for $s\in[0,t_j]$, hence
\begin{align}
e^{\beta t_j/3}\|w_j(t_j)\|_{H^1_\alpha} &\le 
e^{\beta t_{j-1}/3}\|w_{j-1}(t_{j-1})\|_{H^1_\alpha}
+ C_0^2 C_1
\int_{t_{j-1}}^{t_j} e^{\beta s/3}\|F(s)\|_{H^1_\alpha}\,ds
\nonumber
\\ & \le \|w_0(0)\|_{H^1_\ap} + C_0^2C_1
\int_0^{t_j} e^{\beta s/3}\|F(s)\|_{H^1_\alpha}\,ds,
\label{eq:ind2}
\end{align}
by induction.
Combining \qref{eq:ind2} with \qref{eq:wjnorm} and \qref{eq:equiv2}
yields the conclusion of the Lemma.
\end{proof}

\bigskip\noindent{\bf Acknowledgments:}
This material is based upon work supported by the National Science
Foundation under grant nos.\ DMS 0604420 and DMS 0905723,
and partially supported by the Center for Nonlinear Analysis (CNA) 
under the National Science Foundation Grant 
no. 0635983 and PIRE Grant no. OISE-0967140.
TM is supported by Grant-in-Aid for Scientific Research no. 21540220.
The work of JRQ is supported by the Universidad del Valle (Cali,
Colombia).

\medskip

\bibliographystyle{siam}
\bibliography{mpq}

\begin{thebibliography}{10}

\bibitem{AGG86}
{\sc W.~Arendt, A.~Grabosch, G.~Greiner, U.~Groh, H.~P. Lotz, U.~Moustakas,
  R.~Nagel, F.~Neubrander, and U.~Schlotterbeck}, {\em One-parameter semigroups
  of positive operators}, vol.~1184 of Lecture Notes in Mathematics,
  Springer-Verlag, Berlin, 1986.

\bibitem{Ben72}
{\sc T.~B. Benjamin}, {\em The stability of solitary waves}, Proc. Roy. Soc.
  (London) Ser. A, 328 (1972), pp.~153--183.

\bibitem{BL64}
{\sc D.~J. Benney and J.~C. Luke}, {\em {Interactions of permanent waves of
  finite amplitude}}, J. Math. Phys., 43 (1964), pp.~309--313.

\bibitem{Bo75}
{\sc J.~Bona}, {\em On the stability theory of solitary waves}, Proc. Roy. Soc.
  London Ser. A, 344 (1975), pp.~363--374.

\bibitem{BCS02}
{\sc J.~L. Bona, M.~Chen, and J.-C. Saut}, {\em Boussinesq equations and other
  systems for small-amplitude long waves in nonlinear dispersive media. {I}.
  {D}erivation and linear theory}, J. Nonlinear Sci., 12 (2002), pp.~283--318.

\bibitem{BCL05}
{\sc J.~L. Bona, T.~Colin, and D.~Lannes}, {\em Long wave approximations for
  water waves}, Arch. Ration. Mech. Anal., 178 (2005), pp.~373--410.

\bibitem{BS89}
{\sc J.~L. Bona and R.~L. Sachs}, {\em The existence of internal solitary waves
  in a two-fluid system near the {K}d{V} limit}, Geophys. Astrophys. Fluid
  Dynam., 48 (1989), pp.~25--51.

\bibitem{CCD10}
{\sc M.~Chen, C.~W. Curtis, B.~Deconinck, C.~W. Lee, and N.~Nguyen}, {\em
  Spectral stability of stationary solutions of a {B}oussinesq system
  describing long waves in dispersive media}, SIAM J. Appl. Dyn. Syst., 9
  (2010), pp.~999--1018.

\bibitem{CL03}
{\sc D.~Cramer and Y.~Latushkin}, {\em Gearhart-{P}r\"uss theorem in stability
  for wave equations: a survey}, in Evolution equations, vol.~234 of Lecture
  Notes in Pure and Appl. Math., Dekker, New York, 2003, pp.~105--119.

\bibitem{Ei02}
{\sc S.-I. Ei}, {\em The motion of weakly interacting pulses in
  reaction-diffusion systems}, J. Dynam. Differential Equations, 14 (2002),
  pp.~85--137.

\bibitem{El05b}
{\sc K.~El~Dika}, {\em Asymptotic stability of solitary waves for the
  {B}enjamin-{B}ona-{M}ahony equation}, Discrete Contin. Dyn. Syst., 13 (2005),
  pp.~583--622.

\bibitem{FP2}
{\sc G.~Friesecke and R.~L. Pego}, {\em Solitary waves on {FPU} lattices. {II}.
  {L}inear implies nonlinear stability}, Nonlinearity, 15 (2002),
  pp.~1343--1359.

\bibitem{FP3}
\leavevmode\vrule height 2pt depth -1.6pt width 23pt, {\em Solitary waves on
  {F}ermi-{P}asta-{U}lam lattices. {III}. {H}owland-type {F}loquet theory},
  Nonlinearity, 17 (2004), pp.~207--227.

\bibitem{FP4}
\leavevmode\vrule height 2pt depth -1.6pt width 23pt, {\em Solitary waves on
  {F}ermi-{P}asta-{U}lam lattices. {IV}. {P}roof of stability at low energy},
  Nonlinearity, 17 (2004), pp.~229--251.

\bibitem{Ge78}
{\sc L.~Gearhart}, {\em Spectral theory for contraction semigroups on {H}ilbert
  space}, Trans. Amer. Math. Soc., 236 (1978), pp.~385--394.

\bibitem{GK}
{\sc I.~C. Gohberg and M.~G. Kre{\u\i}n}, {\em Introduction to the theory of
  linear nonselfadjoint operators}, Translated from the Russian by A.
  Feinstein. Translations of Mathematical Monographs, Vol. 18, American
  Mathematical Society, Providence, R.I., 1969.

\bibitem{GS}
{\sc I.~C. Gohberg and E.~I. Sigal}, {\em An operator generalization of the
  logarithmic residue theorem and {R}ouch\'e's theorem}, Mat. Sb. (N.S.),
  84(126) (1971), pp.~607--629.

\bibitem{GSS1}
{\sc M.~Grillakis, J.~Shatah, and W.~Strauss}, {\em Stability theory of
  solitary waves in the presence of symmetry. {I}}, J. Funct. Anal., 74 (1987),
  pp.~160--197.

\bibitem{GSS2}
\leavevmode\vrule height 2pt depth -1.6pt width 23pt, {\em Stability theory of
  solitary waves in the presence of symmetry. {II}}, J. Funct. Anal., 94
  (1990), pp.~308--348.

\bibitem{Herbst}
{\sc I.~Herbst}, {\em The spectrum of {H}ilbert space semigroups}, J. Operator
  Theory, 10 (1983), pp.~87--94.

\bibitem{How}
{\sc J.~S. Howland}, {\em On a theorem of {G}earhart}, Integral Equations
  Operator Theory, 7 (1984), pp.~138--142.

\bibitem{Huang}
{\sc F.~L. Huang}, {\em Characteristic conditions for exponential stability of
  linear dynamical systems in {H}ilbert spaces}, Ann. Differential Equations, 1
  (1985), pp.~43--56.

\bibitem{MM00}
{\sc Y.~Martel and F.~Merle}, {\em A {L}iouville theorem for the critical
  generalized {K}orteweg-de {V}ries equation}, J. Math. Pures Appl. (9), 79
  (2000), pp.~339--425.

\bibitem{MM01}
\leavevmode\vrule height 2pt depth -1.6pt width 23pt, {\em Asymptotic stability
  of solitons for subcritical generalized {K}d{V} equations}, Arch. Ration.
  Mech. Anal., 157 (2001), pp.~219--254.

\bibitem{MM05}
\leavevmode\vrule height 2pt depth -1.6pt width 23pt, {\em Asymptotic stability
  of solitons of the subcritical g{K}d{V} equations revisited}, Nonlinearity,
  18 (2005), pp.~55--80.

\bibitem{MM11}
\leavevmode\vrule height 2pt depth -1.6pt width 23pt, {\em Review of long time
  asymptotics and collision of solitons for the quartic generalized
  {K}orteweg-de {V}ries equation}, Proc. Roy. Soc. Edinburgh Sect. A, 141
  (2011), pp.~287--317.

\bibitem{MMT02}
{\sc Y.~Martel, F.~Merle, and T.-P. Tsai}, {\em Stability and asymptotic
  stability in the energy space of the sum of {$N$} solitons for subcritical
  g{K}d{V} equations}, Comm. Math. Phys., 231 (2002), pp.~347--373.

\bibitem{Mi02}
{\sc A.~Mielke}, {\em On the energetic stability of solitary water waves}, R.
  Soc. Lond. Philos. Trans. Ser. A Math. Phys. Eng. Sci., 360 (2002),
  pp.~2337--2358.
\newblock Recent developments in the mathematical theory of water waves
  (Oberwolfach, 2001).

\bibitem{MW96}
{\sc J.~R. Miller and M.~I. Weinstein}, {\em Asymptotic stability of solitary
  waves for the regularized long-wave equation}, Comm. Pure Appl. Math., 49
  (1996), pp.~399--441.

\bibitem{Miz03}
{\sc T.~Mizumachi}, {\em Weak interaction between solitary waves of the
  generalized {K}d{V} equations}, SIAM J. Math. Anal., 35 (2003),
  pp.~1042--1080 (electronic).

\bibitem{Miz04}
\leavevmode\vrule height 2pt depth -1.6pt width 23pt, {\em Asymptotic stability
  of solitary wave solutions to the regularized long-wave equation}, J.
  Differential Equations, 200 (2004), pp.~312--341.

\bibitem{Miz09}
\leavevmode\vrule height 2pt depth -1.6pt width 23pt, {\em Asymptotic stability
  of lattice solitons in the energy space}, Comm. Math. Phys., 288 (2009),
  pp.~125--144.

\bibitem{MT11}
{\sc T.~Mizumachi and N.~Tzvetkov}, {\em Stability of the line soliton of the
  {KP-II} equation under periodic transverse perturbations}, Math. Ann., online
  (2011).

\bibitem{Pa}
{\sc A.~Pazy}, {\em {Semigroups of linear operators and applications. Appl.
  Math. Sci.}}, vol.~44, Springer--Verlag, NY, 1983.

\bibitem{P85}
{\sc R.~L. Pego}, {\em Compactness in {$L^2$} and the {F}ourier transform},
  Proc. Amer. Math. Soc., 95 (1985), pp.~252--254.

\bibitem{PQ99}
{\sc R.~L. Pego and J.~R. Quintero}, {\em {Two-dimensional solitary waves for a
  Benney-Luke equation}}, Physica D, 132 (1999), pp.~476--496.

\bibitem{PSW95}
{\sc R.~L. Pego, P.~Smereka, and M.~I. Weinstein}, {\em Oscillatory instability
  of solitary waves in a continuum model of lattice vibrations}, Nonlinearity,
  8 (1995), pp.~921--941.

\bibitem{PS2}
{\sc R.~L. Pego and S.-M. Sun}, {\em Asymptotic linear stability of solitary
  water waves}, submitted.

\bibitem{PW94}
{\sc R.~L. Pego and M.~I. Weinstein}, {\em Asymptotic stability of solitary
  waves}, Comm. Math. Phys., 164 (1994), pp.~305--349.

\bibitem{Pr02}
{\sc K.~Promislow}, {\em A renormalization method for modulational stability of
  quasi-steady patterns in dispersive systems}, SIAM J. Math. Anal., 33 (2002),
  pp.~1455--1482 (electronic).

\bibitem{Pr85}
{\sc J.~Pr{\"u}ss}, {\em On the spectrum of {$C\sb{0}$}-semigroups}, Trans.
  Amer. Math. Soc., 284 (1984), pp.~847--857.

\bibitem{Q03}
{\sc J.~R. Quintero}, {\em Nonlinear stability of a one-dimensional
  {B}oussinesq equation}, J. Dynam. Differential Equations, 15 (2003),
  pp.~125--142.

\bibitem{Q05}
{\sc J.~R. Quintero}, {\em Nonlinear stability of solitary waves for a 2-{D}
  {B}enney-{L}uke equation}, Discrete Contin. Dyn. Syst., 13 (2005),
  pp.~203--218.

\bibitem{RSv1}
{\sc M.~Reed and B.~Simon}, {\em Methods of modern mathematical physics. {I}},
  Academic Press Inc. [Harcourt Brace Jovanovich Publishers], New York,
  second~ed., 1980.
\newblock Functional analysis.

\bibitem{Sm92}
{\sc P.~Smereka}, {\em A remark on the solitary wave stability for a
  {B}oussinesq equation}, in Nonlinear dispersive wave systems (Orlando, FL,
  1991), World Sci. Publishing, River Edge, NJ, 1992, pp.~255--263.

\end{thebibliography}

\end{document}